\definecolor{red}{cmyk}{0,.91,.51,.34}
\definecolor{green}{cmyk}{.51,0,.91,.34}
\newtheorem{theorem}{Theorem}[section]
\newtheorem{lemma}[theorem]{Lemma}
\newtheorem{corollary}[theorem]{Corollary}
\newtheorem{conjecture}[theorem]{Conjecture}
\theoremstyle{definition}
\newtheorem{definition}[theorem]{Definition}
\newtheorem{question}[theorem]{Question}
\newtheorem{construction}[theorem]{Construction}
\theoremstyle{remark}
\newtheorem{remark}[theorem]{Remark}
\let\rel\mathbb
\let\clo\mathscr
\newcommand{\Proj}{{\clo P}}
\newcommand{\proj}{\operatorname{pr}}
\let\equals\approx
\newcommand{\NP}{\textsf{NP}}
\newcommand{\Ptime}{\textsf{P}}
\DeclareMathOperator{\Pol}{Pol}
\DeclareMathOperator{\CSP}{CSP}
\DeclareMathOperator{\Aut}{Aut}
\DeclareMathOperator{\Forb}{Forb}
\DeclareMathOperator{\CSS}{CSS}
\newcommand{\bB}{{\rel B}}
\newcommand{\aA}{{\rel A}}
\newcommand{\cC}{{\rel C}}
\newcommand{\sS}{{\rel S}}
\newcommand{\gG}{{\rel G}}
\newcommand{\hH}{{\rel H}}
\begin{document}

  \author[Bodirsky]{Manuel Bodirsky}
  \address{Institute of Algebra, Technische Universit\"at Dresden, Dresden, Germany}
  \email{manuel.bodirsky@tu-dresden.de}
  
  \author[Mottet]{Antoine Mottet}
  \address{Department of Algebra, Charles University in Prague, Czech Republic}
  \email{mottet@karlin.mff.cuni.cz}
  
  \author[Ol\v{s}\'ak]{Miroslav Ol\v{s}\'ak}
  \address{Department of Algebra, Charles University in Prague, Czech Republic}
  \email{mirek@olsak.net}

  \author[Opr\v{s}al]{Jakub Opr\v{s}al}
  \address{Department of Computer Science, Durham University, Durham, UK}
  \email{jakub.oprsal@durham.ac.uk}

  \author[Pinsker]{Michael Pinsker}
  \address{Institute of Discrete Mathematics and Geometry, Technische Universit\"at Wien, Austria, and Department of Algebra, Charles University in Prague, Czech Republic}
  \email{marula@gmx.at}

  \author[Willard]{Ross Willard}
  \address{Department of Pure Mathematics, University of Waterloo, Waterloo, ON, Canada}
  \email{ross.willard@uwaterloo.ca}

  \title[Topology is relevant]{Topology is relevant\NoCaseChange{\\}
    (in a dichotomy conjecture for infinite-domain constraint satisfaction problems)}

  \hypersetup{
    pdftitle={Topology is relevant},
    pdfauthor={Manuel Bodirsky, Antoine Mottet, Miroslav Olsak, Jakub Oprsal, Michael Pinsker, and Ross Willard}%
  }

  \thanks{%
    Manuel Bodirsky and Jakub Opr\v{s}al were supported by the European Research Council (ERC) under the European Union's Horizon 2020 research and innovation programme (Grant Agreement No 681988, CSP-Infinity).
    Jakub Opr\v{s}al has also received funding from the UK EPSRC (grant No EP/R034516/1).
    Antoine Mottet has received funding from the European Research Council (ERC) under the European Union's Horizon 2020 research and innovation programme (Grant Agreement No 771005, CoCoSym).
    Miroslav Ol\v{s}\'{a}k and Michael Pinsker have received funding  from the Czech Science Foundation (grant No 13-01832S).
    Michael Pinsker has received funding from the Austrian Science Fund (FWF) through project No P27600.}

  \begin{abstract}
  The algebraic dichotomy conjecture for Constraint Satisfaction Problems (CSPs) of reducts of (infinite) finitely bounded homogeneous structures states that such CSPs are polynomial-time tractable when the model-complete core of the template has a pseudo-Siggers polymorphism, and NP-complete otherwise.

  One of the important questions related to this conjecture is whether, similarly to the case of finite structures, the condition of having a pseudo-Siggers polymorphism can be replaced by the condition of having polymorphisms satisfying a fixed set of identities of height~1, i.e., identities which do not contain any nesting of functional symbols. We provide a negative answer to this question by constructing for each non-trivial set of height~1 identities a structure whose polymorphisms do not satisfy these identities, but whose CSP is tractable nevertheless.

  An equivalent formulation of the dichotomy conjecture characterizes tractability of the CSP via the local satisfaction of non-trivial height~1 identities by polymorphisms of the structure. We show that local satisfaction and global satisfaction of non-trivial height~1 identities differ for $\omega$-categorical structures with less than double exponential orbit growth, thereby resolving one of the main open problems in the algebraic theory of such structures.
  \end{abstract}

  \maketitle

\section{Introduction}

Many computational problems in theoretical computer science can be phrased as  \emph{constraint satisfaction problems (CSPs)}: in such a~problem, we are given a~finite set of variables and a~finite set of constraints that are imposed on the variables, and the task is to find values for the variables that satisfy all the given constraints. The computational complexity of a~CSP depends on the language that we allow when formulating the constraints in the input. By appropriately choosing this language, many computational problems in optimisation, artificial intelligence, computational biology, verification, and many other areas can be precisely expressed as a~CSP.

Formally, we fix a~structure $\rel B$ (also called the \emph{template} or \emph{constraint language}).  The problem $\CSP(\rel B)$ is the computational problem of deciding whether a~given conjunction of atomic formulas over the signature of $\rel B$ is satisfiable in $\rel B$.  For example, if the domain of $\rel B$ is the Boolean domain $\{0,1\}$, and $\rel B$ contains all binary Boolean relations, then $\CSP(\rel B)$ is precisely the 2-\textsc{Sat} problem, which can be solved in polynomial time, and if the structure $\rel B$ is the complete graph $\rel K_3$ on three vertices (without loops), then $\CSP(\rel B)$ is precisely the graph 3-coloring problem, which is \NP-complete.  Note that it is not necessary for this definition that the domain of $\rel B$ is finite, and indeed, many problems can only be expressed when the domain of $\rel B$ is infinite.  For example the satisfiability of a~system of polynomial equations over the rational numbers can be formulated using a~structure $\rel B$ whose domain is the rationals, but certainly not with a~structure $\rel B$ that has a~finite domain.

The class of CSPs is a~large class which allows for a~uniform mathematical approach to the question which interests us for computational problems in general: What kind of structure makes a~problem easy (i.e., polynomial-time tractable), and what makes such a~problem hard (i.e., \NP-hard)?  Evidence for the possibility of a~clear structural characterization of tractability within the realm of large classes of CSPs has been found in the fact that finite-domain CSPs exhibit a~\Ptime/\NP-complete dichotomy, i.e., $\CSP(\rel B)$ is for every finite structure $\rel B$, in \Ptime\ or \NP-complete. This was conjectured by Feder and Vardi \cite{FV98}, and recently proved by Bulatov \cite{Bul17} and, independently, by Zhuk \cite{Zhu17}.
Both proofs rely on the universal-algebraic approach and recent developments in universal algebra. In fact, they prove a~strengthening of the conjecture which in addition provides a~precise condition that implies \NP-completeness of a~finite-domain CSP. This strengthening provided by Bulatov, Jeavons, and Krokhin \cite{BJK05} uses algebraic language, in particular the notion of \emph{polymorphisms}, which are~(structure preserving) finitary functions on a~structure $\rel B$.
Such functions can be viewed as `higher-order symmetries', and in particular they form a~certain generalization of the automorphisms of $\rel B$.
The essence of the algebraic approach is that the complexity of $\CSP(\rel B)$ is determined up to log-space reductions by the polymorphisms of $\rel B$: few polymorphisms imply hardness of the CSP, while interesting polymorphisms are meant to imply better algorithmic properties.

Before we move to the infinite case, let us first describe the situation in the finite case. We denote by $\Pol(\rel B)$ the set of all polymorphisms of $\rel B$, and by $\Proj$ the set of projections, i.e., trivial polymorphisms (these are precisely the polymorphisms of 3-\textsc{Sat}).  Using a~result of Siggers \cite{Sig10}, the finite-domain CSP dichotomy can then be formulated as follows (see Section~\ref{sect:prelims} for the definitions of the concepts that appear in the statement).
\begin{theorem}[Bulatov-Zhuk \cite{Bul17,Zhu17}]\label{thm:bulatov-zhuk}
Let $\rel B$ be a~finite structure.
Exactly one of the following holds:
\begin{enumerate}
	\item There exists a~minion homomorphism $\Pol(\rel B)\to \Proj$, and $\CSP(\rel B)$ is \NP-complete,
	\item $\Pol(\rel B)$ contains a~function $s$ satisfying the identity
        \begin{equation}
          \forall x,y,z\in B.\; s(x,y,x,z,y,z) = s(y,x,z,x,z,y),
            \tag{$\diamondsuit$}\label{eq:siggers}
        \end{equation}
        and $\CSP(\rel B)$ is in \Ptime.
\end{enumerate}
In particular, $\CSP(\rel B)$ is in \Ptime\ or \NP-complete.
\end{theorem}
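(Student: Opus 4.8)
The plan is to separate the statement into three assertions and treat them in order of increasing depth: (a) that conditions~(1) and~(2) are mutually exclusive; (b) that the algebraic condition of~(1) forces \NP-completeness of $\CSP(\bB)$; and (c) that the algebraic condition of~(2) forces tractability. Assertion~(c) is the deep one --- it is precisely the content of the main theorems of \cite{Bul17} and \cite{Zhu17} --- so I would take it as an input and sketch in detail only the surrounding reductions and the easy case~(b).

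\emph{Mutual exclusivity, and that some case always occurs.} The Siggers identity $(\diamondsuit)$ is of height~$1$, hence preserved by minion homomorphisms; so if $h\colon\Pol(\bB)\to\Proj$ is a minion homomorphism and some $s\in\Pol(\bB)$ satisfies $(\diamondsuit)$, then $h(s)$ is a $6$-ary projection satisfying $(\diamondsuit)$. But for the projection onto a coordinate $i$, comparing the $i$-th entries of the tuples $(x,y,x,z,y,z)$ and $(y,x,z,x,z,y)$ shows that no projection satisfies $(\diamondsuit)$ on a domain with at least two elements; this rules out ``both''. That ``at least one'' holds is Siggers' theorem \cite{Sig10}, in the form ``a finite idempotent algebra admitting no minion homomorphism onto $\Proj$ has a Siggers term'', applied to the idempotent reduct of the (finite) model-complete core of $\bB$, combined with the routine transfer facts that a finite core has a Siggers polymorphism iff its expansion by all constants does, and that $\bB$ has one iff its model-complete core does.

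\emph{Hardness.} Assume the condition of~(1). By the Galois correspondence of Barto, Opr\v{s}al and Pinsker between minion homomorphisms and primitive positive constructions, $\bB$ primitively positively constructs every finite structure, in particular the triangle $\rel K_3$; any such construction yields a log-space reduction from $\CSP(\rel K_3)$, i.e.\ graph $3$-colouring, to $\CSP(\bB)$, so $\CSP(\bB)$ is \NP-hard, and being in \NP\ it is \NP-complete.

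\emph{Tractability.} Assume the condition of~(2). First reduce to the case that $\bB$ is a finite core whose polymorphisms are all idempotent: replacing $\bB$ by its model-complete core preserves $\CSP(\bB)$ up to polynomial-time equivalence, and then expanding a core by all singleton unary relations changes neither the complexity nor the existence of a Siggers polymorphism. For such $\bB$, having a Siggers polymorphism is equivalent --- by Siggers' theorem and the cyclic-terms theorem of Barto and Kozik --- to being \emph{Taylor}: having cyclic polymorphisms of every prime arity exceeding $|B|$, equivalently omitting the unary type of tame congruence theory. The decisive step is then to turn this structural hypothesis into a polynomial-time decision procedure, which is exactly what \cite{Bul17} and, independently, \cite{Zhu17} do --- Bulatov through an analysis of the ``coloured graph'' recording semilattice, majority and affine edges among subalgebras and a recursive solver driven by it, Zhuk through a reduction calculus for strongly subdirect subalgebras, affine quotients, ``bridges'' and ``centres'' and a recursion that at each step either pins a variable or splits the instance. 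Since this is the only non-routine ingredient, I would invoke it rather than reprove it; it is also where I expect essentially all the difficulty to lie.
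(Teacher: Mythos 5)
The paper does not prove this statement; it cites it as the Bulatov--Zhuk theorem, so there is no ``paper's own proof'' to compare against. Your decomposition is the standard one and is essentially correct: you correctly isolate the genuinely hard content (the tractability direction, which is exactly what~\cite{Bul17} and~\cite{Zhu17} establish) and fill in the surrounding reductions.

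A few remarks on the details. Your mutual-exclusivity argument is clean and correct: a quick check of the six projections $\proj_i^6$ against the patterns $(x,y,x,z,y,z)$ and $(y,x,z,x,z,y)$ shows that none satisfies~(\ref{eq:siggers}) on $\{0,1\}$, so preservation of height~$1$ identities by minion homomorphisms gives the contradiction. Your hardness argument is the intended one, though you could state it a little more directly using Theorem~\ref{thm:bop18} as given in the paper: a minion homomorphism $\Pol(\bB)\to\Proj$ composes with the inclusion $\Proj\hookrightarrow\Pol(\rel K_3)$ to give a (trivially uniformly continuous, since $\bB$ is finite) minion homomorphism $\Pol(\bB)\to\Pol(\rel K_3)$, hence $\bB$ pp-constructs $\rel K_3$ and $\CSP(\bB)$ is \NP-hard. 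For the ``at least one case occurs'' step, your route through the core, then its idempotent expansion by constants, then Siggers' theorem~\cite{Sig10}, then transfer back, is correct; the routine transfer facts you invoke (homomorphic equivalence preserves satisfiability of height~$1$ conditions, and for a finite core adding constants does not destroy a Siggers polymorphism since one can compose with an automorphism) are standard and genuinely needed, since Siggers' original statement is about idempotent algebras. One could also shortcut this by citing the non-idempotent reformulation of Siggers' theorem directly (as appears in~\cite{BOP18}), but the content is the same. Overall the proposal is a correct sketch that, like the paper, defers the real difficulty to Bulatov and Zhuk.
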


One astonishing fact is that the condition for tractability in this dichotomy has an elegant formulation using a~single function satisfying a~single simple identity.

For templates with an infinite domain, it is known that no such dichotomy exists in general and that CSPs exhaust all possible complexity classes, up to polynomial-time Turing reductions~\cite{BG08}.
However, large classes of infinite templates have been proved to exhibit a~\Ptime/\NP-complete dichotomy. One of the largest and most robust classes that have been conjectured
to have such a~dichotomy is the class of so-called \emph{first-order reducts of finitely bounded homogeneous structures}.
This class is important for several reasons:
\begin{itemize}
\item it is a~vast generalisation 
of the class of finite structures where it is possible to investigate deep questions about the nature of computation. 
Many problems studied e.g.~in temporal and spatial reasoning can be formulated as CSPs for such structures \cite{BJ17}. 
\item the algebraic methods of finite-domain constraint satisfaction can still be used in this class, and due to its relative tameness it is an important framework where the tractability of large classes of computational problems can be tied to algebraic and topological properties of mathematical objects.
\end{itemize}
The dichotomy conjecture for these structures has been verified in numerous special cases, for example for all CSPs in the complexity class MMSNP \cite{MMSNP}; also see~\cite{tcsps-journal,BodPin-Schaefer-both,Phylo-Complexity,posetCSP16,BMPP16}.
There are various equivalent formulations of the infinite-domain tractability conjecture originally formulated in \cite{BPP-projective-homomorphisms} (Conjecture~\ref{conj:tract}).  The most recent one, proposed by Barto, Opr\v{s}al, and Pinsker~\cite{BOP18} and later proved to be equivalent to the original one~\cite{BKOPP17a,BKOPP17}, is now considered the most satisfactory formulation both esthetically and practically:

\begin{conjecture}\label{conj:tract2}
Let $\rel B$ be a~reduct of a~finitely bounded homogeneous structure. Exactly one of the following holds:
\begin{enumerate}
  \item There exists a~uniformly continuous minion homomorphism from $\Pol(\rel B)$ to $\Proj$, and $\CSP(\rel B)$ is \NP-complete,
  \item $\Pol(\rel B)$ does not have a~uniformly continuous minion homomorphism to $\Proj$, and $\CSP(\rel B)$ is in \Ptime.
\end{enumerate}
\end{conjecture}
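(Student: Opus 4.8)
The plan is to treat the two directions of Conjecture~\ref{conj:tract2} separately; the exclusivity of the two alternatives is automatic (unless \Ptime\ equals \NP), so the content lies in the disjunction together with the two implications, and of these only the hardness direction is currently a theorem, so what follows is a genuine plan rather than a complete argument. For ``(1) implies $\CSP(\rel B)$ is \NP-complete'' I would argue via primitive positive constructions: a uniformly continuous minion homomorphism $\Pol(\rel B)\to\Proj$ corresponds, through the Galois correspondence between polymorphism minions and pp-constructibility, to a pp-construction of $\rel K_3$ (equivalently, of $3$-\textsc{Sat}, whose polymorphisms are exactly $\Proj$) inside $\rel B$; this yields a polynomial-time reduction from graph $3$-coloring to $\CSP(\rel B)$, and it is precisely the uniform continuity that makes the reduction effective rather than a merely abstract relationship between minions. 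Membership of $\CSP(\rel B)$ in \NP\ then follows from finite boundedness: a candidate solution is a finite structure, and satisfiability can be verified by checking that it omits each of the finitely many forbidden substructures.

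For the converse --- ``no uniformly continuous minion homomorphism $\Pol(\rel B)\to\Proj$ implies $\CSP(\rel B)$ is in \Ptime'' --- I would first pass to the model-complete core $\cC$ of $\rel B$ and then to its expansion by finitely many constants: one has $\CSP(\rel B)=\CSP(\cC)$, the core of a reduct of a finitely bounded homogeneous structure is again $\omega$-categorical, and the hypothesis of having no uniformly continuous minion homomorphism to $\Proj$ is inherited. By the Barto--Opr\v{s}al--Pinsker equivalence the hypothesis then becomes that $\Pol(\cC)$, expanded by constants, contains a pseudo-Siggers polymorphism, so the remaining task is to show that a single such polymorphism forces tractability. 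Here the plan would be to lift the finite-domain dichotomy machinery --- local consistency and bounded width in the congruence meet-semidistributive regime, and a few-subpowers-type algorithm otherwise --- to the $\omega$-categorical setting, using the Ramsey-theoretic toolkit (canonical functions, and the existence of a finitely bounded homogeneous Ramsey expansion) to reduce the infinite template to a bounded family of finite-domain instances on which the Bulatov--Zhuk algorithms can be run.

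The hard part is precisely this last step, and it is the reason Conjecture~\ref{conj:tract2} remains open: there is at present no uniform polynomial-time algorithm, nor even a fully specified candidate, covering all templates that satisfy the hypothesis. The finiteness exploited by Bulatov's and Zhuk's proofs has to be traded for $\omega$-categoricity together with finite boundedness, and turning these model-theoretic properties into a decision procedure will require both structural results about the Galois correspondence for $\omega$-categorical cores that are not yet available in full generality and genuinely new algorithmic ideas. So the most I could honestly assemble at this stage is the hardness half above together with the verified special cases cited in the introduction --- CSPs in MMSNP, equality and temporal constraint languages, phylogeny languages, and so on --- while the general tractability statement of Conjecture~\ref{conj:tract2} stays conjectural.
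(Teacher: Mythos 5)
The statement you were asked to ``prove'' is a conjecture, and the paper does not prove it; it only records it as the currently accepted formulation of the infinite-domain dichotomy conjecture (the successor, after \cite{BOP18,BKOPP17a,BKOPP17}, of the Bodirsky--Pinsker Conjecture~\ref{conj:tract}). You correctly recognize this, and your response --- a sketch of the hardness half plus an honest admission that the tractability half is the open content --- is the right posture rather than a gap in your argument.

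A few small calibrations against what the paper actually says. Your hardness sketch is essentially the known argument: a uniformly continuous minion homomorphism $\Pol(\bB)\to\Proj=\Pol(\rel K_3)$ yields, by (the ``only if'' direction of) Theorem~\ref{thm:bop18}, a pp-construction of $\rel K_3$ in $\bB$, hence a polynomial-time reduction from $3$-coloring; and $\CSP(\bB)$ is in \NP\ because one can guess a homomorphism from the input's Gaifman-connected parts into the finitely bounded template and verify it against the finite list of bounds. Your appeal to the pseudo-Siggers reformulation for the tractability half is also correct as a reduction of the problem: by \cite{BartoPinskerDichotomy,BP18} combined with \cite{BKOPP17a,BKOPP17}, the non-existence of a uniformly continuous minion homomorphism to $\Proj$ is, for structures in the scope of the conjecture, equivalent to the model-complete core having a pseudo-Siggers polymorphism; but be aware that this equivalence crucially uses the less-than-double-exponential orbit growth of reducts of finitely bounded homogeneous structures, and one of the points of the present paper (Theorem~\ref{thm:main-3}) is that this equivalence fails for general $\omega$-categorical structures with slow orbit growth, which is why the topology cannot be ``removed'' from Conjecture~\ref{conj:tract2} by the technique that worked for Conjecture~\ref{conj:tract}. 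Your concluding paragraph --- that no uniform algorithm is known and that lifting Bulatov--Zhuk via Ramsey/canonical-function machinery is a plan rather than a proof --- matches the actual state of the art and is the honest answer here.
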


Here, uniform continuity is meant with respect to the pointwise convergence topology on the space of all maps from finite powers of $\bB$ into $\bB$, where $\rel B$ carries the discrete topology. It is known that if there exists a~uniformly continuous minion homomorphism $\Pol(\rel B)\to\Proj$, then $\CSP(\rel B)$ is \NP-complete~\cite{BOP18}.

There are two major differences between the above conjecture and the finite-domain CSP dichotomy as phrased in Theorem~\ref{thm:bulatov-zhuk}.
First of all, there is a~topological content in Conjecture~\ref{conj:tract2}.
This topological nature can loosely be explained in the following terms: the non-existence of a~uniformly continuous minion homomorphism $\Pol(\rel B)\to\Proj$
can be characterised by the fact that non-trivial identities are satisfied on every finite subset of $\rel B$, while the non-existence of a~minion homomorphism $\Pol(\rel B)\to\Proj$
is characterised by the fact that some non-trivial identities are satisfied on the whole structure $\rel B$.
This local/global distinction evidently only arises when $\rel B$ is an infinite structure, and can be understood as one of the major obstacles towards solving Conjecture~\ref{conj:tract2}.

Second, even when ignoring the topology, it is not known whether the second item can be expressed by the satisfaction of some fixed height~1 identities in $\Pol(\rel B)$ as is the case in Theorem~\ref{thm:bulatov-zhuk}(2). This naturally raises the following questions, which were also asked in~\cite{BartoPinskerDichotomy, BOP18, BP18}:
\begin{enumerate}
  \item Does the existence of a~minion homomorphism $\Pol(\rel B)$ to $\Proj$ imply the existence of a~uniformly continuous one? In other words, can the requirement of uniform continuity in Conjecture~\ref{conj:tract2} be dropped?
  \item Can the non-existence of a~minion homomorphism to $\Proj$ be replaced by a~statement positing that some fixed set of height 1 identities holds in $\Pol(\rel B)$?
\end{enumerate}
We note that a~positive answer to the second question would have certain implications which make a~positive answer to the first question more likely.
Moreover, the corresponding natural questions were asked about the historically first conjecture (see~\cite{BPP-projective-homomorphisms} and Conjecture~\ref{conj:tract})
and were proved to have positive answers~\cite{BartoPinskerDichotomy, BP18}, thus showing that topology was irrelevant in that formulation of the conjecture. 

The second question is 
purely algebraic, and therefore %
of interest to universal algebra as well. Similar questions have been asked about various properties of algebras, e.g.\ \cite{Tay88,Ols17}. One notable open problem in this field is whether the algebraic condition describing structures whose CSP can be solved by a~Datalog program \cite{BK14} can be described by a~single fixed set of identities as well. As 
in Theorem~\ref{thm:bulatov-zhuk}, it is known that there is such a~set of identities when we restrict to finite domains.

\subsection*{Contributions}

In the present paper, we give a~negative answer to the second question, proving that no system of 
height~1
identities 
(also called \emph{height 1 condition})
can be used as a~replacement for the condition in the second item of Conjecture~\ref{conj:tract2}. Our result is formalized as follows:
\begin{theorem} \label{thm:main-1}
  For every non-trivial height 1 condition $\Sigma$ there exists a~structure $\rel B$ such that 
  \begin{itemize}
  \item $\rel B$ is a~first-order reduct of a~finitely bounded homogeneous structure;
  \item $\Pol(\rel B)$ does not satisfy $\Sigma$;
  \item $\Pol(\rel B)$ satisfies some other non-trivial height 1 condition (consequently, there is no minion homomorphism to $\Proj$);
  \item $\CSP(\rel B)$ is in \Ptime.
 \end{itemize}
 \end{theorem}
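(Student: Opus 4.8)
Given a non-trivial height~1 condition $\Sigma$, the plan is to build a finitely bounded homogeneous structure $\hH=\hH_\Sigma$ together with a first-order reduct $\bB=\bB_\Sigma$ of it whose polymorphism clone is large enough to satisfy \emph{some} non-trivial height~1 condition $\Delta$ --- and since a minion homomorphism preserves height~1 identities while $\Proj$ satisfies no non-trivial one, this already gives that there is no minion homomorphism $\Pol(\bB)\to\Proj$ --- yet poor enough to fail the prescribed $\Sigma$, while $\CSP(\bB)$ stays in $\Ptime$. The organising principle is the standard correspondence: a minion $\clo M$ satisfies a height~1 condition $\Gamma$ if and only if there is a minion homomorphism from the free minion $\clo F_\Gamma$ associated with $\Gamma$ into $\clo M$, and $\Gamma$ is non-trivial exactly when $\clo F_\Gamma$ admits no minion homomorphism to $\Proj$. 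So the target is a template $\bB$ for which $\Pol(\bB)$ receives a minion homomorphism from $\clo F_\Delta$ but not from $\clo F_\Sigma$.

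First I would read off from $\Sigma$ a single parameter $N$ --- controlled by the arities of the function symbols and the number of variables occurring in $\Sigma$, inflated so as to absorb the Ramsey-theoretic step below --- and let $\hH_\Sigma$ be the Fra\"iss\'e limit (hence a finitely bounded homogeneous structure) of an explicit finitely bounded amalgamation class $\clo C_N$ over a finite relational signature built around an $N$-element ``index set''. Then $\hH_\Sigma$ carries a distinguished \emph{canonical} polymorphism --- say a binary one, whose behaviour is invariant under $\Aut(\hH_\Sigma)$ and is encoded by a fixed pattern on the index set --- witnessing a height~1 condition $\Delta=\Delta_N$; that $\Delta_N$ is non-trivial is a finite check on the pattern. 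The template $\bB_\Sigma$ is the reduct of $\hH_\Sigma$ retaining only those relations whose CSP we can control; the witnessing operation is still a polymorphism of $\bB_\Sigma$, which yields the third bullet (and with it the parenthetical remark), the first bullet holding by construction. For the canonisation argument I would also fix a finitely bounded homogeneous Ramsey expansion $\hH_\Sigma^+$ of $\hH_\Sigma$, e.g.\ by adding a generic linear order.

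For tractability I would show $\CSP(\bB_\Sigma)\in\Ptime$ either by exhibiting, in the same canonical family, quasi near-unanimity (or totally symmetric) polymorphisms of all arities --- so that $\CSP(\bB_\Sigma)$ has bounded width and is solved by a local-consistency / Datalog algorithm, using that, $\hH_\Sigma$ being finitely bounded, an instance is unsatisfiable exactly when the consistency closure exposes one of the finitely many obstructions of $\clo C_N$ --- or, alternatively, by a polynomial-time reduction of $\CSP(\bB_\Sigma)$ to the CSP of a suitable tractable finite structure.

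The crux --- and the step I expect to be the main obstacle --- is $\Pol(\bB_\Sigma)\not\models\Sigma$. I would isolate the sub-minion $\clo M\subseteq\Pol(\bB_\Sigma)$ of polymorphisms that are canonical with respect to $\Aut(\hH_\Sigma^+)$ and argue in two parts. First, a direct combinatorial analysis: $\clo M$ is described by finite data over the $N$-element index set, and since $\Sigma$ mentions only finitely many variables and function symbols of bounded arity whereas $N$ is large, a pigeonhole argument shows that $\clo F_\Sigma$ has no minion homomorphism to $\clo M$, i.e.\ $\clo M\not\models\Sigma$; the same description makes $\clo M\models\Delta_N$ explicit. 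Second --- and here essentially all the difficulty lies --- one must show that, for the structure $\bB_\Sigma$ at hand, $\Pol(\bB_\Sigma)$ and $\clo M$ satisfy exactly the same height~1 conditions, so that $\clo M\not\models\Sigma$ upgrades to $\Pol(\bB_\Sigma)\not\models\Sigma$ (the converse direction being immediate since $\clo M\subseteq\Pol(\bB_\Sigma)$, which also transports $\Delta_N$ upward). This invokes the Ramsey-theoretic canonisation technique, available because $\hH_\Sigma^+$ is homogeneous and Ramsey; the subtle point, absent when one canonises a single operation, is that it must be carried out \emph{simultaneously} for the whole tuple of operations occurring in $\Sigma$ and in a way that preserves the system of height~1 identities, which forces the argument to be phrased via the ``loop condition'' reformulation of height~1 conditions and a version of canonisation stated for minion homomorphisms rather than for individual maps. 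The two places where all the work concentrates are thus (a) engineering $\clo C_N$ so that finite boundedness, the non-trivial $\Delta_N$, the failure of $\Sigma$, and tractability hold simultaneously, and (b) performing the Ramsey-theoretic canonisation so that it respects systems of height~1 identities.
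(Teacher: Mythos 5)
Your proposal follows a genuinely different route from the paper, and at exactly the step you flag as the crux there is a gap that I do not think your sketch can close. You want to work with the sub-minion $\clo M$ of canonical polymorphisms, show $\clo M\not\models\Sigma$ by a pigeonhole argument, and then upgrade this to $\Pol(\bB_\Sigma)\not\models\Sigma$ by establishing that $\Pol(\bB_\Sigma)$ and $\clo M$ satisfy exactly the same height~1 conditions. This last equivalence is not something Ramsey-theoretic canonisation delivers. Canonisation arguments in this setting produce, from an arbitrary polymorphism, a canonical one that agrees with it \emph{up to outer composition with endomorphisms/automorphisms}, so what they transfer are \emph{pseudo}-conditions (height~1 modulo outer unary functions), not exact height~1 conditions. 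This is precisely the local/global and exact/pseudo distinction that makes the whole subject subtle, and it is why the paper's Theorem~\ref{thm:equations} is phrased in terms of ``height~1 modulo outer unary functions'' rather than plain height~1 conditions. In addition, the phrase ``simultaneously for the whole tuple of operations occurring in $\Sigma$ \ldots phrased via the loop condition reformulation'' names the difficulty without resolving it; there is no known version of canonisation that preserves arbitrary finite systems of height~1 identities on the nose. Finally, the amalgamation class $\clo C_N$ and the reduct $\bB_\Sigma$ are never specified, so even the ``direct combinatorial analysis'' claim $\clo M\not\models\Sigma$ has nothing concrete to verify.

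The paper avoids all of this with a much more economical mechanism. Using Corollary~\ref{cor:1.2}, any non-trivial $\Sigma$ implies a condition $\Sigma_\gG$ for some finite non-3-colorable graph $\gG$ (take a non-3-colorable connected component to ensure connectedness). Lemma~\ref{lem:sigma_g} then gives a direct, purely combinatorial criterion: if a graph $\hH$ contains $\rel K_3$ and $\Pol(\hH)$ satisfied $\Sigma_\gG$, then $\gG$ would map homomorphically into $\hH$. So it suffices to take $\hH=\CSS(\gG)$, the Cherlin--Shelah--Shi structure, which is universal for graphs omitting homomorphic images of $\gG$, hence contains $\rel K_3$ (since $\gG\not\to\rel K_3$) but admits no homomorphism from $\gG$; thus $\Pol(\hH)\not\models\Sigma_\gG$, a fortiori $\Pol(\hH)\not\models\Sigma$. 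By the Hubi\v{c}ka--Ne\v{s}et\v{r}il theorem~\ref{thm:hubicka-nesetril} this $\hH$ is a reduct of a finitely bounded homogeneous structure. Non-triviality of the satisfied conditions comes from an explicit quasi near-unanimity polymorphism of large arity, built by an indicator-structure argument (Lemma~\ref{lem:fg}) exploiting universality. Tractability is immediate: $\CSP(\CSS(\gG))$ amounts to checking whether a \emph{fixed} $\gG$ maps into the input, which is polynomial. Your instinct to use quasi near-unanimity operations and Fra\"iss\'e limits of finitely bounded classes is in the right neighbourhood, but the decisive idea you are missing is the reduction of $\Sigma$ to $\Sigma_\gG$ together with Lemma~\ref{lem:sigma_g}, which turns ``$\Pol(\bB)\not\models\Sigma$'' into a one-line universality argument and removes the need for any canonisation at all.
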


Nevertheless, we give a~partially positive answer: we prove that there exists an infinite chain of weaker and weaker systems of 
height~1 
identities\footnote{Such chains are commonly studied in universal algebra since many interesting properties of algebras can be expressed by such a~condition, e.g.\ congruence distributivity \cite{Jon67}.} that can be used to describe the non-existence of a~minion homomorphism to $\Proj$:
\begin{theorem} \label{thm:main-2}
  There is a~countable strictly decreasing chain of height 1 conditions such that every clone that does not have a~minion homomorphism to $\Proj$ satisfies a~condition from the chain.
\end{theorem}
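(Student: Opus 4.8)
The plan is to work with \emph{loop conditions}: build the chain out of disjoint unions of loop-free digraphs, relying on the known fact that the loop conditions of finite loop-free digraphs are cofinal from below among the non-trivial height~1 conditions, and invoke Theorem~\ref{thm:main-1} only to turn ``decreasing'' into ``strictly decreasing''. In detail: for a finite digraph $\rel G=(V,E)$ with $E\neq\emptyset$, let $\sigma_{\rel G}$ denote the \emph{loop condition} of $\rel G$, i.e.\ the height~1 condition with a single $|E|$-ary function symbol $f$ and the single identity $f\big((x_u)_{(u,v)\in E}\big)\equals f\big((x_v)_{(u,v)\in E}\big)$ (``$f$ of the tails equals $f$ of the heads''). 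We use three facts. First, if $\rel G$ has no loop then $\sigma_{\rel G}$ is non-trivial, since a projection witnessing $\sigma_{\rel G}$ in $\Proj$ would have to pick a coordinate $(u,v)\in E$ with $u=v$. Second, a digraph homomorphism $\rel G\to\rel H$ induces an implication $\sigma_{\rel G}\models\sigma_{\rel H}$ (every clone satisfying the former satisfies the latter), obtained by precomposing a witness of $\sigma_{\rel G}$ with the minor given by the induced map $E(\rel G)\to E(\rel H)$. Third --- and this is the only non-routine ingredient --- every non-trivial height~1 condition $\Gamma$ satisfies $\Gamma\models\sigma_{\rel G}$ for some finite loop-free digraph $\rel G$; equivalently, the loop conditions of finite loop-free digraphs are cofinal from below among the non-trivial height~1 conditions. (This ``completeness'' of loop conditions is known from the theory of minor conditions.) We will also use that there are only countably many height~1 conditions, and that a clone admits no minion homomorphism to $\Proj$ if and only if it satisfies some non-trivial height~1 condition.

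Now fix an enumeration $\rel G_1,\rel G_2,\dots$ of all finite loop-free digraphs with a nonempty edge set, and set
\[
  \Sigma_n\;:=\;\sigma_{\rel G_1\sqcup\cdots\sqcup\rel G_n}\qquad(n\ge 1).
\]
A disjoint union of loop-free digraphs has no loop, so every $\Sigma_n$ is non-trivial by the first fact; and since $\rel G_1\sqcup\cdots\sqcup\rel G_n$ embeds into $\rel G_1\sqcup\cdots\sqcup\rel G_{n+1}$, the second fact gives $\Sigma_n\models\Sigma_{n+1}$, so $(\Sigma_n)_n$ is a decreasing chain of non-trivial height~1 conditions. If a clone $\mathcal C$ has no minion homomorphism to $\Proj$, then it satisfies some non-trivial height~1 condition, which by the third fact implies $\sigma_{\rel G_j}$ for some $j$; since $\rel G_j$ embeds into $\rel G_1\sqcup\cdots\sqcup\rel G_j$, the second fact gives $\sigma_{\rel G_j}\models\Sigma_j$, and hence $\mathcal C$ satisfies $\Sigma_j$. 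Thus $(\Sigma_n)_n$ already catches every clone without a minion homomorphism to $\Proj$.

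It remains to make the chain strictly decreasing, and this is precisely where Theorem~\ref{thm:main-1} is used. Suppose for contradiction that $(\Sigma_n)_n$ is eventually constant, say $\Sigma_n$ is equivalent to $\Sigma_N$ for all $n\ge N$. Then every non-trivial height~1 condition $\Gamma$ would imply $\Sigma_N$: by the third fact $\Gamma\models\sigma_{\rel G_j}$ for some $j$, we have $\sigma_{\rel G_j}\models\Sigma_j$, and $\Sigma_j\models\Sigma_N$ --- if $j\le N$ because the chain is decreasing, and if $j\ge N$ because $\Sigma_j$ is equivalent to $\Sigma_N$. But $\Sigma_N$ is itself a non-trivial height~1 condition, so Theorem~\ref{thm:main-1} applied with $\Sigma:=\Sigma_N$ yields a first-order reduct $\rel B$ of a finitely bounded homogeneous structure such that $\Pol(\rel B)$ satisfies some non-trivial height~1 condition but does not satisfy $\Sigma_N$; this non-trivial condition does not imply $\Sigma_N$, a contradiction. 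Hence $(\Sigma_n)_n$ is not eventually constant and therefore admits a strictly decreasing subsequence; this subsequence consists of non-trivial height~1 conditions, still catches every clone $\mathcal C$ without a minion homomorphism to $\Proj$ (such a $\mathcal C$ satisfies some $\Sigma_j$, hence every weaker $\Sigma_n$, in particular those in the subsequence beyond index $j$), and after renumbering is the chain required by the theorem. The single real obstacle in this argument is the third fact; granting it, the rest is bookkeeping, with Theorem~\ref{thm:main-1} contributing exactly the strictness of the chain.
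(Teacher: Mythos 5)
Your proposal takes a genuinely different and in places cleaner route than the paper: you work with single-symbol \emph{loop conditions} $\sigma_{\rel G}$ of loopless digraphs rather than the paper's multi-symbol conditions $\Sigma_{\rel G}$, and you exploit the reversed monotonicity -- for loop conditions a digraph homomorphism $\rel G\to\rel H$ yields $\sigma_{\rel G}\models\sigma_{\rel H}$, whereas for the paper's conditions a graph homomorphism $\rel G\to\rel H$ yields $\Sigma_{\rel H}\models\Sigma_{\rel G}$ (Lemma~\ref{lem:3.1}). That reversal is exactly why a plain disjoint union $\rel G_1\sqcup\cdots\sqcup\rel G_n$ suffices for you, whereas the paper has to construct the gadget graph $\rel N$ and glue along critical edges (Lemma~\ref{lem:2}) to manufacture a common weaker condition. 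Your book-keeping -- non-triviality of $\sigma_{\rel G}$ for loopless $\rel G$, transfer along digraph homomorphisms, catching every clone, and using Theorem~\ref{thm:main-1} to force strictness and then passing to a strictly decreasing subsequence -- is all correct.

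However, there is a genuine gap at your ``third fact'': the claim that loop conditions of finite loop-free digraphs are cofinal from below among all non-trivial height~1 conditions. You flag it as ``known from the theory of minor conditions'' and build the entire proof on it, but you neither prove it nor give a precise reference, and it is not a routine or obviously-standard fact -- in particular it is \emph{not} the same as the paper's Corollary~\ref{cor:1.2}, whose conditions $\Sigma_{\rel G}$ are not loop conditions (they involve a family of ternary symbols $f_v$ and $6$-ary symbols $g_e$, not a single function symbol). The paper devotes all of Section~\ref{sec:sigma-g} to proving its own version of cofinality via the correspondence of minion homomorphisms $\clo A\to\Pol(\rel K_3)$ with $3$-colorings of an auxiliary graph (drawing on~\cite{BKO18}), and it is not clear that $\Sigma_{\rel G_0}$ implies any $\sigma_{\rel H}$ with $\rel H$ loopless without further work. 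Analogous results that are genuinely standard (Siggers for finite clones, cyclic terms for finite Taylor algebras, Ol\v{s}\'ak's weakest idempotent Maltsev condition) carry finiteness or idempotence hypotheses that you do not have here. Either supply a proof of the loop-condition cofinality (this would be of comparable substance to the paper's Lemmas~\ref{lem:1.2} and Corollary~\ref{cor:1.2}) or replace your ``third fact'' with the paper's Corollary~\ref{cor:1.2} -- but then you inherit the paper's conditions $\Sigma_{\rel G}$ and their reversed monotonicity, and the disjoint-union shortcut no longer applies, so you would be back to needing something like the paper's gluing construction.
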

We note that in both proofs of Theorem~\ref{thm:bulatov-zhuk}, such a~decreasing chain of height 1 conditions was enough to prove the dichotomy (i.e., the authors do not rely on the satisfaction of the Siggers identity). 

This still leaves the possibility that Question~(1) above has a~positive answer.
However, we use the chain from the previous statement to show that if the structure $\rel B$ is allowed to belong to a~slightly bigger class of structures, then Question~(1) has a~negative answer, too.
\begin{theorem} \label{thm:main-3}
There exists a~structure $\rel S$ with the following properties.
\begin{enumerate}
\item $\rel S$ is an $\omega$-categorical structure with less than double exponential orbit growth,
  \item $\Pol(\rel S)$ has a~minion homomorphism to $\Proj$,
  \item $\Pol(\rel S)$ has no uniformly continuous minion homomorphism to $\Proj$.
\end{enumerate}
\end{theorem}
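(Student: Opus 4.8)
The plan is to build $\rel S$ as a kind of "limit" of the structures produced by Theorem~\ref{thm:main-1}, exploiting the chain of height~1 conditions supplied by Theorem~\ref{thm:main-2}. Let $\Sigma_1 \supseteq \Sigma_2 \supseteq \cdots$ be the countable strictly decreasing chain of height~1 conditions from Theorem~\ref{thm:main-2}, so that a clone has no minion homomorphism to $\Proj$ if and only if it satisfies some $\Sigma_n$. For each $n$, apply Theorem~\ref{thm:main-1} to obtain a first-order reduct $\rel B_n$ of a finitely bounded homogeneous structure with $\Pol(\rel B_n) \not\models \Sigma_n$ but $\Pol(\rel B_n)$ satisfying some non-trivial height~1 condition, and with $\CSP(\rel B_n) \in \Ptime$. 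The idea is to take a suitable disjoint-union-like combination $\rel S$ of the $\rel B_n$ (for instance the structure whose domain is $\bigsqcup_n B_n$, with the relations of each $\rel B_n$ restricted to its part together with a unary predicate $P_n$ naming each part) so that a polymorphism of $\rel S$ must, by preservation of the $P_n$, restrict to a polymorphism of each $\rel B_n$. Then $\Pol(\rel S)$ embeds into $\prod_n \Pol(\rel B_n)$ as a subclone of the product.

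The three required properties would then be established as follows. For item~(2), a minion homomorphism $\Pol(\rel S) \to \Proj$ is obtained by a compactness/diagonal argument: since for each $n$ the clone $\Pol(\rel S)\!\restriction_{P_n}$ fails $\Sigma_n$, and since any minion homomorphism to $\Proj$ would have to hold on the "global" level, one shows that no single $\Sigma_n$ is satisfied by all of $\Pol(\rel S)$; by the equivalence in Theorem~\ref{thm:main-2} this yields a (not necessarily continuous) minion homomorphism $\Pol(\rel S) \to \Proj$. More precisely, the failure of $\Sigma_n$ propagates from the $n$-th coordinate, and as $n \to \infty$ the conditions $\Sigma_n$ exhaust all non-trivial height~1 conditions, so $\Pol(\rel S)$ satisfies none of them globally, hence maps to $\Proj$. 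For item~(3), suppose toward a contradiction that $\xi\colon \Pol(\rel S)\to\Proj$ is a uniformly continuous minion homomorphism. Uniform continuity means $\xi$ is determined by the behaviour of polymorphisms on a fixed finite subset $F \subseteq S$; but $F$ meets only finitely many parts $P_1,\dots,P_N$, so $\xi$ factors through $\Pol(\rel S)\!\restriction_{P_1 \cup \cdots \cup P_N}$, which is a quotient of (a subclone of) $\prod_{n\le N}\Pol(\rel B_n)$. Each $\Pol(\rel B_n)$ satisfies some non-trivial height~1 condition and has no minion homomorphism to $\Proj$; one argues that a finite product of such clones still has no minion homomorphism to $\Proj$ (this uses that the class of clones with no minion homomorphism to $\Proj$ is closed under finite products, which follows from Theorem~\ref{thm:main-2} together with the fact that the $\Sigma_n$ form a chain), contradicting the existence of $\xi$.

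For item~(1), the $\omega$-categoricity of $\rel S$ follows from the $\omega$-categoricity of each $\rel B_n$ (being a reduct of a finitely bounded homogeneous structure, it is $\omega$-categorical) together with the fact that a disjoint union with unary predicates naming the parts of $\omega$-categorical structures, arranged so that only finitely many "shapes" of $n$-tuples occur across parts, remains $\omega$-categorical — here one must be slightly careful and possibly rescale or truncate so that the number of orbits of $n$-tuples is controlled. The orbit-growth bound is the point requiring the most care: one needs $\sum_{n} o_{\rel B_n}(k) $, summed appropriately over the parts a $k$-tuple can hit, to stay below $2^{2^k}$; since Theorem~\ref{thm:main-1} gives us freedom in choosing the $\rel B_n$, we arrange that $\rel B_n$ has orbit growth bounded by a fixed exponential (which is automatic for reducts of finitely bounded homogeneous structures) and that the "indexing overhead" of the parts contributes only polynomially, giving a total that is at most exponential, hence well below double exponential.

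The main obstacle I expect is item~(3): making the uniform-continuity argument rigorous requires showing that a uniformly continuous minion homomorphism from $\Pol(\rel S)$ genuinely factors through the restriction to finitely many parts \emph{and} that this restriction, being essentially a finite product of the $\Pol(\rel B_n)$, inherits the absence of a minion homomorphism to $\Proj$. The latter closure-under-finite-products property is not formally stated in the excerpt and would need its own short argument (presumably: if $\clo C_1,\clo C_2$ satisfy $\Sigma_{n_1},\Sigma_{n_2}$ respectively then $\clo C_1\times\clo C_2$ satisfies $\Sigma_{\max(n_1,n_2)}$ because the chain is decreasing, so it satisfies a non-trivial height~1 condition and thus has no minion homomorphism to $\Proj$). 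Verifying that the $\Sigma_n$ are genuinely "coordinatewise-compatible" in this sense — i.e.\ that satisfaction in a product is witnessed coordinatewise — is the technical heart, and also the place where the precise form of the chain from Theorem~\ref{thm:main-2} matters.
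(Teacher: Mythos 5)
Your proposal is structured around the right idea---piece together structures that fail each $\Sigma_n$ so that the resulting clone fails every non-trivial height~1 condition globally while still satisfying them locally---but the specific construction you propose (disjoint union with a unary predicate $P_n$ naming each part) cannot work, and the argument for item~(3) is also quite different from what the paper does.

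The disjoint union is not $\omega$-categorical, full stop. Each $P_n$ is its own orbit of singletons, so already the number $f_1$ of orbits of $1$-tuples is infinite. You flag this (``one must be slightly careful and possibly rescale or truncate''), but the problem is not a matter of careful bookkeeping: any structure in which countably infinitely many pairwise-inhomogeneous parts are distinguishable by first-order (indeed, atomic) formulas has infinitely many orbits of singletons. The paper's fix is structurally different: instead of a disjoint union it uses a \emph{generic superposition} $\bigodot_n \sS(\hH_n,\alpha(n))$, overlaying the structures on a single common domain, and it encodes the $n$-th graph $\hH_n$ by a relation of arity $2\alpha(n)$ with $\alpha(n)\to\infty$ and all ``loop-like'' small configurations forbidden. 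The effect is that a relation of arity $\geq 2\alpha(n)$ is invisible on $k$-tuples for $k<\alpha(n)$, so for each fixed $k$ only finitely many of the superposed relations contribute to the orbit structure. That, plus a suitably fast-growing $\alpha$, is what gives both $\omega$-categoricity and the sub-double-exponential orbit-growth bound (Lemmas~\ref{lem:core} and~\ref{lem:orbitgrowth}). Your item~(2) argument is essentially correct (restriction to each part is a minion homomorphism $\Pol(\rel S)\to\Pol(\rel B_n)$, so $\Pol(\rel S)$ fails every $\Sigma_n$), and mirrors Lemma~\ref{lem:noh1}.

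The argument for item~(3) also has a gap, independent of the $\omega$-categoricity problem. Uniform continuity of $\xi$ gives you, for \emph{each} arity $n$, a finite set $S_n\subseteq S^n$ on which $\xi$ is determined; there is no requirement that the $S_n$ all lie inside $P_1\cup\cdots\cup P_N$ for a single $N$, so it does not follow that $\xi$ factors through a finite sub-union of the parts. Even supposing it did, your closure-under-finite-products observation would show the restriction satisfies some $\Sigma_m$, but you would still need to know that a minion homomorphism to $\Proj$ from the restricted clone pulls back to one from the product $\prod_{n\leq N}\Pol(\rel B_n)$ or otherwise yields a contradiction; the inclusion-type maps here are not in the direction that makes this automatic. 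The paper's route is entirely different: it proves (Lemma~\ref{lem:psiggers}, by compactness over the finite superpositions $\sS_n$, each of which has a quasi near unanimity operation and hence a pseudo-Siggers polymorphism by~\cite{BartoPinskerDichotomy,BP18}) that $(\rel S,\neq)$ has a pseudo-Siggers polymorphism, and then invokes a theorem from~\cite{BKOPP17} (Theorem~\ref{thm:equations}) stating that an $\omega$-categorical model-complete core whose polymorphism clone satisfies a non-trivial height~1 identity modulo outer unary functions \emph{and} admits a uniformly continuous minion homomorphism to $\Proj$ must have at least double exponential orbit growth. The controlled orbit growth then excludes the uniformly continuous map directly, with no factorization argument needed. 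You should look up and use that theorem rather than trying to localize the minion homomorphism by hand.
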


The structure $\rel S$ is not a~first-order reduct of a~finitely bounded homogeneous structure (and therefore does not belong to the scope of Conjecture~\ref{conj:tract2}).
However, its slow orbit growth shows that the techniques that were used to show that the two infinite-domain dichotomy conjectures are equivalent~\cite{BKOPP17a,BKOPP17} cannot be
employed to remove the topological considerations from Conjecture~\ref{conj:tract2}.

\section{Notation and definitions}
  \label{sect:prelims}

We recall some basic notions from the algebraic approach to CSPs as well as from model theory. We refer to~\cite{BKW17} and~\cite{Hod97} for more detailed introductions to these topics.

\subsection{Structures, polymorphisms}

A \emph{signature} is a list $\sigma=(R_i)_{i\in I}$ of symbols, where each symbol is associated with a natural number called its \emph{arity}.
A $\sigma$-structure $\rel A$ is a tuple $(A,(R^{\rel A}_i)_{i\in I})$ consisting of a set (the \emph{domain}) together with a list $(R^{\rel A}_i)_{i\in I}$ of relations on $A$,
where for all $i\in I$ the relation $R^{\rel A}_i$ has the arity specified by $\sigma$.

A \emph{graph} is a~structure with a~single binary symmetric relation; in particular, in this paper all graphs are undirected.
We denote by $\rel K_3$ the complete graph on three vertices.

Let $\rel A, \rel B$ be two structures with the same signature (e.g.\ two graphs). A map $h\colon A\to B$ is a~\emph{homomorphism} from $\aA$ to $\bB$ if it preserves all relations, i.e., for all $i\in I$,
\begin{equation}\label{al:homo}
  \text{if } (a_1,\dots,a_k)\in R^{\rel A}_i \text{, then } (h(a_1),\dots,h(a_k)) \in R^{\rel B}_i.\tag{$\clubsuit$}
\end{equation}
Two structures $\rel A$ and $\rel B$ are \emph{homomorphically equivalent} if there exist homomorphisms from $\rel A$ to $\rel B$ and from $\rel B$ to $\aA$. An \emph{embedding} of $\aA$ into $\bB$ is an injective homomorphism from $\aA$ to $\bB$ such that the implication in (\ref{al:homo}) is an equivalence.

For $n\geq 1$, we define the $n$-th power of a~structure $\rel A$ to be the structure $\rel A^n$ with same signature, whose domain is $A^n$, and such that for all $i\in I$, a tuple $(\overline{a}^1,\dots,\overline{a}^k)$ of $n$-tuples is contained in $R^{\rel A^n}_i$ if, and only if, it is contained in $R^{\rel A}_i$ componentwise, i.e., $(a^1_j,\dots,a^k_j)\in R^{\rel A}_i$ for all $1\leq j\leq  n$.
For graphs, this power is often called \emph{tensor power} since the adjacency matrix of the power is a~tensor power of the adjacency matrix of the original graph.

A~\emph{polymorphism} of a structure $\rel A$ is a~homomorphism from $\rel A^n$ to $\rel A$, for some $n\geq 1$. We write $\Pol(\rel A)$ for the set of all polymorphisms of a~structure $\rel A$. An \emph{endomorphism} of $\aA$ is a homomorphism from $\aA$ to $\aA$, i.e., a unary polymorphism of $\aA$. An \emph{automorphism} of $\aA$ is a bijective embedding of $\aA$ into~$\aA$. 

\subsection{Clones and height 1 conditions}

Let $A$ be a set. A \emph{clone} is a set $\clo A$ of finitary operations on $A$ satisfying the following conditions:
\begin{itemize}
	\item for all $n\geq 1$ and $1\leq i\leq n$, the $i$-th $n$-ary projection $\proj_i^n\colon (x_1,\dots,x_n)\mapsto x_i$
	is a function in $\clo A$;
	\item for all $n$-ary $f\in\clo A$ and all $m$-ary $g_1,\dots,g_n\in\clo A$, the composition $f\circ (g_1,\dots,g_n)\colon (x_1,\dots,x_m)\mapsto f(g_1(x_1,\dots,x_m),\dots,g_n(x_1,\dots,x_m))$
	is in $\clo A$.
\end{itemize}
For $n\geq 1$, we denote by $\clo A^{(n)}$ the set of $n$-ary functions in $\clo A$. We moreover write $\Proj$ for the clone on the set $\{0,1\}$ that consists only of projections. It is easy to see that for every structure $\rel A$ the set $\Pol(\rel A)$ is a~clone.

A~\emph{height 1 identity} is a statement of the form
\[
  \forall x_1,\dots,x_r.\; f(x_{\pi(1)},\dots,x_{\pi(n)}) = g(x_{\rho(1)},\dots,x_{\rho(m)})
\]
where $f,g$ are function symbols, and $\pi\colon \{1,\ldots,n\}\to\{1,\ldots,r\}$, $\rho\colon \{1,\ldots,m\}\to\{1,\ldots,r\}$ are any functions. We also write $f(x_{\pi(1)},\dots,x_{\pi(n)}) \approx g(x_{\rho(1)},\dots,x_{\rho(m)})$ for such an identity, omitting the universal quantification. An example of such an identity is the Siggers identity (\ref{eq:siggers}).

A~\emph{height 1  condition} is a~finite set $\Sigma$ of height 1 identities (where several identities can use the same function symbol). Such a~condition is said to be satisfied in a~set of functions $\clo A$ (e.g.\ a~clone) if for each function symbol $f$ appearing in $\Sigma$, there exists a~function $f^{\clo A} \in \clo A$ of the corresponding arity such that every identity in $\Sigma$ becomes a true statement when the symbols of $\Sigma$ are instantiated by their counterparts in $\clo A$. In that case, we   say that \emph{$\clo A$ satisfies $\Sigma$}.

The notion of satisfaction in a~clone defines a~natural quasi-order on height~1 conditions: if every clone satisfying $\Sigma$ also satisfies $\Sigma'$, then we say that $\Sigma$ \emph{implies} $\Sigma'$ (or that $\Sigma'$ is \emph{weaker} than $\Sigma$, or that $\Sigma$ is \emph{stronger} than $\Sigma'$).  Two conditions $\Sigma$ and $\Sigma'$ are equivalent if a~clone satisfies $\Sigma$ if and only if it satisfies $\Sigma'$, i.e., if they belong to the same equivalence class of the quasi-order. The \emph{strictly decreasing chain} in the statement of Theorem~\ref{thm:main-2} is to be understood in this quasi-order, i.e., it is a sequence of conditions of strictly decreasing strength.

A~height~1 condition is \emph{trivial} if it is satisfied in every clone, or equivalently, if it holds in $\Proj$, or again equivalently, if it is implied by any other height~1 condition.

\subsection{Minion homomorphisms}
Let $\clo A$, $\clo B$ be two clones. We say that a mapping $\xi\colon \clo A\to \clo B$ is a~\emph{minion homomorphism}\footnote{A~\emph{minion} is an abstract algebraic structure and minion homomorphisms as introduced here correspond to the natural maps between such minions. The definition of a~minion \cite[Definition 2.20]{BKO18} is irrelevant for our purposes so we omit it.} (introduced as \emph{h1 clone homomorphism} in~\cite{BOP18}) if it preserves arities, and for all $f\in \clo A$ of arity $n$ and all $\pi\colon \{1,\ldots,n\} \to \{1,\ldots,k\}$ we have
\[
  \xi(f(x_{\pi(1)},\dots,x_{\pi(n)})) = \xi(f)(x_{\pi(1)},\dots,x_{\pi(n)}).
\]

Note that a~minion homomorphism preserves height 1 identities, and therefore also height 1 conditions, i.e., if there is a~minion homomorphism from $\clo A$ to $\clo B$ and $\Sigma$ is a~height 1 condition such that $\clo A$ satisfies $\Sigma$, then also $\clo B$ satisfies $\Sigma$.
In particular, if there exists a~minion homomorphism from $\clo A$ to the projection clone $\Proj$, then $\clo A$ only satisfies trivial height~1 conditions.
The converse of the latter statement also holds as can be proved by a~compactness argument (we give a~proof in Lemma~\ref{lem:1.2}).

When $\clo A$ and $\clo B$ are clones, then a map $\xi\colon\clo A\to\clo B$ is called \emph{uniformly continuous}\footnote{Clones can be naturally endowed with a uniform structure. The notion of uniform continuity corresponding to this uniform structure agrees with the definition given here; the interested reader will find details in~\cite{uniformbirkhoff, BOP18}.} if for every $n\geq 1$ there exists a~finite set $S\subseteq A^n$ such that $f|_{S} = g|_{S}$ implies $\xi(f) = \xi(g)$ for all $f,g\in\clo A^{(n)}$. The non-existence of uniformly continuous minion homomorphisms from a clone $\clo A$ to $\Proj$ is equivalent to  non-trivial height~1 conditions being satisfied in $\clo A$ \emph{locally}, in the following sense.
Let $\Sigma$ be a height~1 condition, and let $S\subseteq A$. We say that $\clo A$ \emph{satisfies $\Sigma$ on $S$} if it satisfies $\Sigma$ when the quantified variables in the identities of $\Sigma$ only range over $S$ (rather than $A$); i.e., the identities of $\Sigma$ are replaced by formulas of the form
\[
  \forall x_1,\dots,x_r \in S.\; f(x_{\pi(1)},\dots,x_{\pi(n)}) = g(x_{\rho(1)},\dots,x_{\rho(m)})\; .
\]
Then there is no uniformly continuous minion homomorphism from $\clo A$ to $\Proj$ if and only if  for every finite subset $S$ of $A$ there exists a non-trivial height~1 condition that is satisfied by $\clo A$ on $S$ (again, this can be proved by a~compactness argument).

\subsection{Logic and model theory}

The set of automorphisms of $\aA$ forms a~group denoted by $\Aut(\aA)$. For all $k\geq 1$, this group acts naturally on $A^k$ by $\alpha\cdot(a_1,\dots,a_k) := (\alpha(a_1),\dots,\alpha(a_k))$.
An \emph{orbit} is a~set of the form $\{\alpha\cdot \overline a\mid \alpha\in\Aut(\aA)\}$ for some $\overline a\in A^k$.
The number $f_k$ of orbits of $\Aut(\aA)$ on $k$-tuples is a non-decreasing function whose growth is an interesting measure of the combinatorial complexity of $\aA$.
If this number is finite for all $k\geq 1$, then we say that $\aA$ is \emph{$\omega$-categorical}.
We say that $\aA$ has \emph{less than double exponential orbit growth} if  $f_k$ is eventually dominated by $2^{2^{k}}$, i.e., if $\lim_{k\to\infty} {f_k}/{2^{2^k}} = 0$.
A~\emph{stabilizer} of a~group $\Aut(\aA)$ (resp.\ a clone $\Pol(\aA)$) is a~group of the form $\Aut(\aA,a_1,\dots,a_k)$ (resp.\ a~clone of the form $\Pol(\aA,a_1,\dots,a_k)$) where $a_1,\dots,a_k$ are elements from~$A$; here, $(\aA,a_1,\dots,a_k)$ denotes the expansion of $\aA$ by the unary relations $\{a_1\},\ldots,\{a_k\}$.

A first-order formula $\phi(x_1,\dots,x_n)$ is \emph{primitive positive} (\emph{pp}, for short) if it is of the form $\exists y_1,\dots,y_m.\;\bigwedge_i R_i(\overline{z_i})$.
A relation $R\subseteq A^n$ is \emph{first-order definable} (resp.\ \emph{pp-definable}) in $\aA$ 
if there exists a~first-order formula $\phi$ (resp.\ pp-formula) such that $(a_1,\dots,a_n)\in R$ if and only if $\phi(a_1,\dots,a_n)$ holds in $\aA$, for all $a_1,\ldots,a_n\in A$.
A structure $\bB$ is a~\emph{first-order reduct} of $\aA$ if $\bB$ and $\aA$ have the same domain and if every relation of $\bB$ is first-order definable in $\aA$.

Uniformly continuous minion homomorphisms between clones have a~counterpart for relational structures which we define next.
Let $\rel {A}, \rel {B}$ be relational structures. We say that $\rel{B}$ is a~\emph{pp-power} of $\rel{A}$ if it is isomorphic to a~structure with domain $A^n$, where $n\geq 1$, whose relations are pp-definable from $\rel{A}$; here, a~$k$-ary relation on $A^n$ is regarded as a~$k\cdot n$-ary relation on $A$. We say that $\bB$ is \emph{pp-constructible} from $\rel{A}$ if it is homomorphically equivalent to a~pp-power of $\rel{A}$.
The following theorem ties together the notions of pp-constructibility and minion homomorphisms.
\begin{theorem}[Theorem 1.8 in~\cite{BOP18}]\label{thm:bop18}
	Let $\rel A$ be an $\omega$-categorical structure and let $\rel B$ be a~finite structure. Then $\rel B$ is pp-constructible from $\rel A$ if, and only if,
	there exists a~uniformly continuous minion homomorphism from $\Pol(\rel A)$ to $\Pol(\rel B)$.
\end{theorem}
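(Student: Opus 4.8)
The plan is to establish the two implications of the equivalence separately; the implication from pp-constructibility to the existence of a~uniformly continuous minion homomorphism is the easy one. Suppose $\bB$ is homomorphically equivalent to a~pp-power $\cC$ of $\aA$, fix homomorphisms $e\colon\bB\to\cC$ and $e'\colon\cC\to\bB$, and say $\cC$ has domain $A^N$. For an $m$-ary $f\in\Pol(\aA)$ write $f^{[N]}$ for its coordinatewise action on $A^N$; since polymorphisms preserve primitive positive definitions and the relations of $\cC$ are pp-definable in $\aA$, we have $f^{[N]}\in\Pol(\cC)$. I would then set $\xi(f):=e'\circ f^{[N]}\circ(e,\dots,e)$, i.e.\ $\xi(f)(x_1,\dots,x_m):=e'\bigl(f^{[N]}(e(x_1),\dots,e(x_m))\bigr)$. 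Being a~composition of homomorphisms $\bB^m\to\cC^m\to\cC\to\bB$, $\xi(f)$ is a~polymorphism of $\bB$, and a~direct check shows that $\xi$ preserves arities and commutes with taking minors, so it is a~minion homomorphism. Uniform continuity is then immediate: the restriction of $\xi(f)$ to $B^m$ depends only on the values of $f$ on the $m$-tuples consisting of the $j$-th coordinates of $e(x_1),\dots,e(x_m)$ (for $x_1,\dots,x_m\in B$ and $1\le j\le N$), and this is a~finite subset of $A^m$ because $B$ is finite.

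For the converse, let $\xi\colon\Pol(\aA)\to\Pol(\bB)$ be a~uniformly continuous minion homomorphism, put $n:=|B|$, and fix an~enumeration $B=\{b_1,\dots,b_n\}$. Applying uniform continuity at arity $n$ yields a~finite nonempty $S\subseteq A^n$ with $|S|=N$ such that $f|_S=g|_S$ implies $\xi(f)=\xi(g)$ for all $n$-ary $f,g\in\Pol(\aA)$. The candidate witness for pp-constructibility is a~structure $\cC$ on $A^N$, which I identify with $A^S$, defined as follows. Each $b_i\in B$ is assigned the element $\beta(b_i)\in A^S$ with $\beta(b_i)(s):=\proj^n_i(s)$, the $i$-th coordinate of $s$; and for every $k$-ary relation $R$ of $\bB$ I let $R^{\cC}$ be the smallest relation on $A^S$ that is invariant under $\Pol(\aA)$ and contains all tuples $(\beta(a_1),\dots,\beta(a_k))$ with $(a_1,\dots,a_k)\in R$ --- concretely, $R^{\cC}$ is obtained from these tuples by closing under coordinatewise application of polymorphisms of $\aA$. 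Since $\aA$ is $\omega$-categorical, the relations pp-definable in $\aA$ are exactly those invariant under $\Pol(\aA)$, so each $R^{\cC}$ is pp-definable and $\cC$ is a~genuine pp-power of $\aA$. By construction $\beta$ is a~homomorphism $\bB\to\cC$, so it remains to exhibit a~homomorphism $\cC\to\bB$.

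Here I would use $\xi$: for $f\in\Pol(\aA)^{(n)}$ set $\gamma(f|_S):=\xi(f)(b_1,\dots,b_n)$, which is well defined precisely because $S$ witnesses uniform continuity at arity $n$, and extend $\gamma$ to all of $A^S$ arbitrarily. To verify that $\gamma\colon\cC\to\bB$ is a~homomorphism, I use the explicit description of $R^{\cC}$: every tuple in $R^{\cC}$ has the form $(h_1|_S,\dots,h_k|_S)$ where, for some $\ell\ge1$, a~single $g\in\Pol(\aA)^{(\ell)}$, and maps $\pi_1,\dots,\pi_k\colon\{1,\dots,\ell\}\to\{1,\dots,n\}$, one has $(b_{\pi_1(p)},\dots,b_{\pi_k(p)})\in R$ for $p=1,\dots,\ell$ and $h_j(x_1,\dots,x_n)=g(x_{\pi_j(1)},\dots,x_{\pi_j(\ell)})$ for each $j$ --- that is, each $h_j$ is a~minor of $g$. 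Since $\xi$ is a~minion homomorphism, $\gamma(h_j|_S)=\xi(h_j)(b_1,\dots,b_n)=\xi(g)(b_{\pi_j(1)},\dots,b_{\pi_j(\ell)})$, so the tuple $(\gamma(h_1|_S),\dots,\gamma(h_k|_S))$ is exactly the coordinatewise image, under the polymorphism $\xi(g)$ of $\bB$, of the $\ell$ tuples $(b_{\pi_1(p)},\dots,b_{\pi_k(p)})\in R$, and hence it lies in $R$. Thus $\gamma$ is a~homomorphism, $\cC$ and $\bB$ are homomorphically equivalent, and $\bB$ is pp-constructible from $\aA$.

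The essential difficulties both lie in the converse direction. The first is conceptual: one must recognise that uniform continuity is exactly what makes the candidate pp-power live on a~\emph{finite} power $A^{|S|}$ of the domain; for a~minion homomorphism that is not uniformly continuous the analogous construction would require an~infinite power of $A$, which is not a~pp-power. The second is the observation that the defining identity of a~minion homomorphism is precisely the tool that turns the compositions-with-projections appearing in the invariant closure $R^{\cC}$ into honest minors of a~single polymorphism, which is what forces $\gamma$ to map the relations of $\cC$ into those of $\bB$; this is where the whole argument turns. The use of $\omega$-categoricity is confined to a~single step --- guaranteeing that the invariantly closed relations $R^{\cC}$ are primitive positive definable --- but it is indispensable there.
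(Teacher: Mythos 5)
The paper does not prove this statement—it is quoted as Theorem~1.8 of~\cite{BOP18}—so there is no internal proof to compare against. Your argument is correct and is essentially the standard proof from~\cite{BOP18}: the forward direction by composing the coordinatewise action on the pp-power with the two homomorphisms, and the converse by building the ``free structure'' of $\Pol(\aA)$ generated by $\bB$ on $A^S$ (with $S$ the witness of uniform continuity at arity $|B|$), using the Bodirsky--Ne\v{s}et\v{r}il theorem for the pp-definability of the invariant closures and the minor-preservation of $\xi$ to get the homomorphism back to $\bB$.
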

We note, and are going to use, that the ``only if'' part of the statement above holds for arbitrary structures $\rel A$ and $\rel B$.

If $\aA$ and $\bB$ are $\sigma$-structures such that $B\subseteq A$ and such that for every $R\in\sigma$ of arity $k$, $R^{\rel A}\cap B^k = R^{\rel B}$, then we say
that $\bB$ is a~\emph{substructure} of $\aA$.
A structure $\aA$ is \emph{homogeneous} if for every two finite substructures $\rel B,\rel C$ and every isomorphism $f\colon\rel B\to\rel C$, there exists an automorphism $\alpha$ of~$\rel A$
such that $\alpha|_B = f$.
We note that if $\aA$ is homogeneous and its signature is finite, then $\aA$ has less than double exponential orbit growth.

A structure $\rel A$ is \emph{finitely bounded} if there exists a~finite set $\mathcal F$ of finite structures such that for every finite structure $\rel B$ with the same signature as $\rel A$,
$\rel B$ embeds into $\rel A$ if, and only if, no structure from $\mathcal F$ embeds into $\rel B$.
This is equivalent to saying that the class of finite substructures of $\rel A$ is definable by a~first-order universal sentence.

The structure constructed in Section~5 is finitely bounded but not necessarily homogeneous.
However, it is \emph{homogenizable} in the sense that by adding finitely many relations to the structure, it becomes homogeneous.
In particular, the structure in Section~5 belongs to the class of \emph{reducts of finitely bounded homogeneous structures with a~finite signature}, which is the scope of the
infinite-domain tractability conjecture (Conjecture~\ref{conj:tract2}).

An $\omega$-categorical structure $\rel A$ is a \emph{model-complete core} if for every embedding $e\colon\rel A\to\rel A$ and every finite subset $S$ of $A$, there exists an automorphism $\alpha\in\Aut(\rel A)$
such that $\alpha|_S = e|_S$. 

\begin{theorem}[\cite{cores-journal,BKOPP17a}]\label{thm:existence-mc-core}
Let $\rel A$ be $\omega$-categorical. There exists an $\omega$-categorical model-complete core\/ $\rel B$ that is homomorphically equivalent to $\rel A$.
Moreover, $\rel B$ is unique up to isomorphism.
\end{theorem}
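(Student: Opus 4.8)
The plan is to realise the model-complete core as an induced substructure of $\rel A$ itself, namely the image of a suitably ``maximally collapsing'' endomorphism, and then to obtain uniqueness by a back-and-forth argument. I may assume that the domain $A$ is countably infinite. The first step is to pass to the monoid $\mathcal E := \overline{\operatorname{End}(\rel A)}$, the closure of the endomorphism monoid of $\rel A$ inside $A^A$ with the topology of pointwise convergence; since preservation of a fixed finitary relation is a pointwise-closed condition, $\mathcal E$ still consists of endomorphisms of $\rel A$ and it contains $\Aut(\rel A)$. By $\omega$-categoricity, for every $n$ the set $\mathcal O_n$ of orbits of $\Aut(\rel A)$ on $A^n$ is finite, and for $e\in\mathcal E$ one records the orbit data of $e$: which orbits of $\mathcal O_n$ meet the image $\{(e(a_1),\dots,e(a_n)):a_i\in A\}$, and, more finely, which orbit-to-orbit transitions $o\mapsto o'$ are realised by $e$. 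These data only shrink under precomposition and are unchanged by composing with automorphisms, so one can look for an endomorphism $e$ that is \emph{minimal} with respect to the quasi-order they induce on $\mathcal E$. The minimality is set up so that, for the substructure induced on the image of $e$, no endomorphism can move a tuple into a different $\Aut(\rel A)$-orbit; (just minimising which orbits are hit is not quite enough, which is why one tracks transitions, but the argument is of the same kind).

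Producing such a minimal $e$ is where $\omega$-categoricity genuinely enters, and I expect it to be the main obstacle: $\mathcal E$ is not compact, so Zorn's lemma or a fixed-point theorem is not directly available, and one must instead diagonalise explicitly. For each $n$ the orbit data of $e$ live in a finite set, so minima are attained at each level; because the data shrink under precomposition, composing an endomorphism witnessing the level-$n$ minimum with one witnessing the level-$(n{+}1)$ minimum witnesses both at once, and one obtains a sequence $(e_N)$ in $\mathcal E$ that is minimal on levels $\le N$ and non-increasing. Fixing an enumeration $a_1,a_2,\dots$ of $A$, one corrects $e_{N+1}$ at each stage by an automorphism so that it agrees with $e_N$ on $\{a_1,\dots,a_N\}$; this is possible because, as $e$ ranges over $\mathcal E$, the $\Aut(\rel A)$-orbit of $(e(a_1),\dots,e(a_N))$ takes only finitely many values and these finite sets form an inverse system with surjective bonding maps, through which a coherent choice can be threaded. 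The pointwise limit $e:=\lim_N e_N$ then lies in the closed set $\mathcal E$ and achieves all the level minima, hence is minimal in the required sense; iterating and passing to a further limit one may in addition take $e$ to be idempotent.

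Now set $\rel B:=\rel A[e(A)]$, the substructure of $\rel A$ induced on the image of $e$. The corestriction of $e$ is a homomorphism $\rel A\to\rel B$ and the inclusion $\iota$ is a homomorphism $\rel B\to\rel A$, so $\rel A$ and $\rel B$ are homomorphically equivalent. Next I would show that $\rel B$ is an $\omega$-categorical model-complete core. Given an endomorphism $g$ of $\rel B$, the composite $\iota\circ g\circ e$ is an endomorphism of $\rel A$, and minimality of $e$ forces it to realise exactly the same orbit data as $e$; unwinding this shows that $g$ maps every tuple of $B^n$ within its own $\Aut(\rel A)$-orbit. Using this together with idempotency of $e$ (so that endomorphisms of $\rel B$ are exactly the maps $e\circ f|_{e(A)}$ with $f\in\mathcal E$), a short back-and-forth shows that two tuples of $B^n$ lying in the same $\Aut(\rel A)$-orbit have the same first-order type in $\rel B$; hence $\rel B$ has only finitely many $n$-types for each $n$ — one per $\Aut(\rel A)$-orbit meeting $B^n$ — so $\rel B$ is $\omega$-categorical, and every endomorphism of $\rel B$ is type-preserving, in particular an embedding. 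Finally, by $\omega$-categoricity, if $g$ is a self-embedding of $\rel B$ and $S\subseteq B$ is finite, then the tuple enumerating $g(S)$ realises the same type as the one enumerating $S$, so $g|_S$ extends to an automorphism of $\rel B$; this is precisely the model-complete core condition.

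For uniqueness, let $\rel B_1$ and $\rel B_2$ be two $\omega$-categorical model-complete cores homomorphically equivalent to $\rel A$, and fix homomorphisms $f\colon\rel B_1\to\rel B_2$ and $g\colon\rel B_2\to\rel B_1$. Then $g\circ f$ and $f\circ g$ are endomorphisms of $\rel B_1$ and $\rel B_2$ respectively and, by the model-complete core property, each restricts to an automorphism on every finite subset of its domain. A routine back-and-forth then builds an isomorphism $\rel B_1\cong\rel B_2$: one alternately extends a finite partial isomorphism by a fresh element of $B_1$ or of $B_2$, using the local-automorphism property of $g\circ f$ and $f\circ g$, together with corrections by automorphisms of $\rel B_1$ and $\rel B_2$, to keep the partial map extendable at each stage.
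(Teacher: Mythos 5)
The paper cites this theorem from the references without including a proof, so there is no in-paper argument to compare against; I therefore evaluate your attempt on its own terms. Your overall plan — realize the core as the induced substructure on the image of a maximally collapsing endomorphism, found by a diagonalization inside the closed monoid $\overline{\operatorname{End}(\rel A)}\subseteq A^A$, and then prove uniqueness by back-and-forth — is indeed the route of Bodirsky's original existence proof. The uniqueness argument at the end is sound, and passing to the closure $\mathcal E$, using finiteness of the orbit data at each arity, and correcting by automorphisms to make the sequence pointwise convergent are all the right moves.

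However, the central minimality lemma has a genuine gap. You propose to track, beyond the set of orbits meeting the image of $e$, the orbit-to-orbit transitions realised by $e$, and you assert that ``these data only shrink under precomposition''. For the set of orbits hit this is correct, since the image of $e\circ f$ is contained in the image of $e$. For the transition relation it is false in general: if $e'=e\circ f$ and $(o,o')$ is a transition of $e'$, witnessed by $\overline a\in o$ with $e(f(\overline a))\in o'$, then $f(\overline a)$ lies in some orbit $o''$ that need not equal $o$, so $(o,o')$ need not be a transition of $e$; what one gets is only $T_{e\circ f}\subseteq T_e\circ T_f$ as a relational composition. Consequently the quasi-order on which your minimality and diagonalization rest is not monotone under composition as you claim, and the inverse-limit step (whose ``surjective bonding maps'' are also not justified) does not go through as written. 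You correctly sense that minimising only the set of orbits hit is not enough — that only forces every endomorphism of $\rel A[e(A)]$ to have image meeting all the same orbits, not to send each tuple into its own orbit — but the refinement you propose is precisely the place where the monotonicity you need fails, so the ``argument is of the same kind'' claim is the missing idea. A correct execution needs a different invariant to minimise (or a different compactness device), which is the technical heart of the cited proof and cannot be waved away.
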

The structure $\rel B$ in the theorem above is referred to as \emph{the model-complete core of $\rel A$}.
\section{Siggers-like conditions induced by graphs}
  \label{sec:sigma-g}

We show that for any non-trivial height 1 condition $\Sigma$, there is a non-trivial height 1 condition of a certain specific form, induced by a finite undirected graph, which is implied by $\Sigma$. Namely, from any finite undirected graph ${\rel G} = (V, E)$, one can construct a height 1 condition $\Sigma_{\rel G}$ in the following way: for each $v\in V$, one introduces a~ternary function  symbol $f_v$, and for each edge $(u,v)\in E$, one introduces a~$6$-ary symbol $g_{(u,v)}$, and adds to $\Sigma_{\rel G}$ the identities
\begin{align*}
  f_u(x,y,z) &\equals g_{(u,v)}( x,y, x,z, y,z ) \\
  f_v(x,y,z) &\equals g_{(u,v)}( y,x, z,x, z,y ).
\end{align*}
This corresponds to the condition $\Sigma(\rel K_3,\rel G)$ constructed in~\cite[Section 3.2]{BKO18}. 
To give a~simple example, observe that if ${\rel G}$ consist of a~single vertex $v$ with an edge  $(v,v)$, then $\Sigma_{\rel G}$ is the Siggers condition (the function $g_{(v,v)}$ must satisfy the Siggers identity). We are now going to see that the Siggers condition is the strongest condition of this form; for clones over finite sets, it follows from~\cite{Sig10} that 
it is also a weakest among all non-trivial height~1 conditions, and thus all non-trivial conditions of the form $\Sigma_{\rel G}$ are equivalent.

\begin{lemma} \label{lem:hom-between-sigma_g} \label{lem:3.1}
  Let ${\rel G}$ and ${\rel H}$ be finite graphs. If ${\rel G}$ maps homomorphically into ${\rel H}$, then $\Sigma_{\rel H}$ implies $\Sigma_{\rel G}$.
\end{lemma}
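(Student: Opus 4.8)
The plan is to unwind what a homomorphism $h\colon \rel G\to\rel H$ gives us and use it to interpret the function symbols of $\Sigma_{\rel G}$ by the function symbols of $\Sigma_{\rel H}$ in any clone $\clo A$ that satisfies $\Sigma_{\rel H}$. Concretely, suppose $\clo A$ satisfies $\Sigma_{\rel H}$, so for each vertex $w$ of $\rel H$ we have a ternary $f_w^{\clo A}\in\clo A$ and for each edge $(w,w')$ of $\rel H$ a $6$-ary $g_{(w,w')}^{\clo A}\in\clo A$ witnessing the two identities of $\Sigma_{\rel H}$. I would then define, for each vertex $v$ of $\rel G$, the interpretation $f_v^{\clo A} := f_{h(v)}^{\clo A}$, and for each edge $(u,v)$ of $\rel G$, since $(h(u),h(v))$ is an edge of $\rel H$, set $g_{(u,v)}^{\clo A} := g_{(h(u),h(v))}^{\clo A}$. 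One has to be slightly careful that graphs here are undirected, so an edge $(u,v)$ of $\rel G$ may be sent to an edge of $\rel H$ which $\Sigma_{\rel H}$ records in the opposite orientation; I would handle this by noting that the two identities defining $g_{(w,w')}$ are symmetric enough — swapping the roles of the two endpoints corresponds to the substitution of variables $(x,y,z)\mapsto$ a permutation thereof together with reading the two lines in the other order — so that an appropriate reindexing of $g_{(h(v),h(u))}^{\clo A}$ (or simply choosing a fixed orientation of each edge of $\rel H$ and arguing once and for all) yields a $6$-ary function satisfying the identity for $(u,v)$ in the orientation recorded in $\Sigma_{\rel G}$.

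With these interpretations fixed, I would verify that each identity of $\Sigma_{\rel G}$ becomes true in $\clo A$. For an edge $(u,v)\in E(\rel G)$, the identity $f_u(x,y,z)\equals g_{(u,v)}(x,y,x,z,y,z)$ translates, under the substitution above, into $f_{h(u)}^{\clo A}(x,y,z) = g_{(h(u),h(v))}^{\clo A}(x,y,x,z,y,z)$, which is exactly the first identity of $\Sigma_{\rel H}$ for the edge $(h(u),h(v))$ — hence true by assumption; and symmetrically for the second identity $f_v(x,y,z)\equals g_{(u,v)}(y,x,z,x,z,y)$. Since $\Sigma_{\rel G}$ consists precisely of these identities over all edges of $\rel G$, this shows $\clo A$ satisfies $\Sigma_{\rel G}$. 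As this holds for every clone $\clo A$ satisfying $\Sigma_{\rel H}$, we conclude that $\Sigma_{\rel H}$ implies $\Sigma_{\rel G}$.

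I expect the only real subtlety — and hence the place to be careful in writing — to be the bookkeeping around orientations of edges of undirected graphs, i.e.\ making sure that when $h$ collapses an edge of $\rel G$ onto an edge of $\rel H$ whose two endpoints appear in $\rel H$ in the "wrong" order, we still obtain a valid witness. This is genuinely routine once one observes that interchanging $u$ and $v$ in the pair of defining identities amounts to composing the witness $g$ with a fixed permutation of its six arguments, so I would either fix once and for all an orientation of $\rel H$ consistent with how $\Sigma_{\rel H}$ lists its edges and define $g_{(u,v)}^{\clo A}$ by that composition, or simply remark that $\rel G\to\rel H$ can be assumed to respect chosen orientations up to this harmless reindexing. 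Everything else is a direct syntactic substitution, so the lemma follows.
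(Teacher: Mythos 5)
Your proof is correct and takes the same approach as the paper: push forward the witnesses along $h$ by setting $f_v^{\clo A} := f_{h(v)}^{\clo A}$ and $g_{(u,v)}^{\clo A} := g_{(h(u),h(v))}^{\clo A}$, then check the two identities for each edge of $\rel G$ reduce to the corresponding identities of $\Sigma_{\rel H}$. The orientation subtlety you flag is in fact moot: since a graph here has a \emph{symmetric} edge relation $E$, the construction of $\Sigma_{\rel H}$ introduces a $6$-ary symbol $g_{(w,w')}$ for \emph{every ordered pair} $(w,w')\in E$, so $(h(u),h(v))\in E(\rel H)$ always names an existing symbol and no reindexing is needed.
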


\begin{proof}
  Assume that $\Sigma_{\rel H}$ is satisfied in some clone $\clo C$, and fix functions $f_v\in\clo C$ for every vertex $v$ of $\hH$ and functions $g_e\in\clo C$ for every edge $e$ of $\hH$ witnessing this fact. Let $h\colon {\rel G}\to {\rel H}$ be a~homomorphism. For every vertex $v$ of $\gG$ we set $f'_v:=f_{h(v)}$, and for every edge $(u,v)$ of $\gG$ we set $g'_{(u,v)} = g_{(h(u),h(v))}$ (using the fact that $h$ is a~homomorphism). Then these functions witness the satisfaction of $\Sigma_{\rel G}$ in $\clo C$.
\end{proof}

The condition $\Sigma_{\rel G}$ essentially forces the graph ${\rel G}$ into any graph which is compatible with $\Sigma_{\rel G}$ and which contains $\rel K_3$.

\begin{lemma} \label{lem:sigma_g}
  Let ${\rel G}$ be a~graph. Then ${\rel G}$ maps homomorphically to any graph ${\rel H}$ that contains $\rel K_3$, and whose polymorphisms satisfy $\Sigma_{\rel G}$.
\end{lemma}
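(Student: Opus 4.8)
The plan is to use the assumption that $\hH$ contains $\rel K_3$ together with the satisfaction of $\Sigma_{\rel G}$ by $\Pol(\hH)$ to produce a concrete homomorphism $\gG\to\hH$. First I would fix a copy of $\rel K_3$ inside $\hH$; explicitly, fix three vertices $a_0,a_1,a_2$ of $\hH$ that are pairwise adjacent (they need not be distinct as stated — but the complete graph on three vertices has no loops, so they are in fact distinct, which is what matters). Since $\Pol(\hH)$ satisfies $\Sigma_{\rel G}$, there exist ternary polymorphisms $f_v\in\Pol(\hH)$ for each vertex $v$ of $\gG$ and $6$-ary polymorphisms $g_{(u,v)}\in\Pol(\hH)$ for each edge $(u,v)$ of $\gG$ witnessing the identities
\begin{align*}
  f_u(x,y,z) &= g_{(u,v)}(x,y,x,z,y,z),\\
  f_v(x,y,z) &= g_{(u,v)}(y,x,z,x,z,y).
\end{align*}

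The key idea is to evaluate everything at the three elements $a_0,a_1,a_2$. Define $h\colon V(\gG)\to H$ by $h(v) := f_v(a_0,a_1,a_2)$. I claim $h$ is a homomorphism from $\gG$ to $\hH$. Let $(u,v)$ be an edge of $\gG$. Consider the $6$-ary tuple of arguments fed to $g_{(u,v)}$ in the two identities: in the first slot-pattern $(x,y,x,z,y,z)$ instantiated at $(x,y,z)=(a_0,a_1,a_2)$ we get $(a_0,a_1,a_0,a_2,a_1,a_2)$, and in the second pattern $(y,x,z,x,z,y)$ we get $(a_1,a_0,a_2,a_0,a_2,a_1)$. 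Now I would check that, coordinatewise, the pair consisting of these two $6$-tuples lies in $E^{\hH}$ componentwise: coordinate $1$ gives $(a_0,a_1)$, coordinate $2$ gives $(a_1,a_0)$, coordinate $3$ gives $(a_0,a_2)$, coordinate $4$ gives $(a_2,a_0)$, coordinate $5$ gives $(a_1,a_2)$, coordinate $6$ gives $(a_2,a_1)$ — each of these is an edge of $\rel K_3\subseteq\hH$ since $a_0,a_1,a_2$ are pairwise adjacent (and $E^{\hH}$ is symmetric). Hence the pair of $6$-tuples is an edge of $\hH^6$, so applying the polymorphism $g_{(u,v)}$ yields an edge of $\hH$, namely $\bigl(g_{(u,v)}(a_0,a_1,a_0,a_2,a_1,a_2),\,g_{(u,v)}(a_1,a_0,a_2,a_0,a_2,a_1)\bigr)\in E^{\hH}$. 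By the two identities this edge is exactly $(f_u(a_0,a_1,a_2),f_v(a_0,a_1,a_2)) = (h(u),h(v))$. Therefore $h$ maps edges of $\gG$ to edges of $\hH$, i.e., $h$ is a graph homomorphism $\gG\to\hH$, as desired.

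I do not anticipate a serious obstacle here: the statement is essentially a direct unwinding of the definition of $\Sigma_{\rel G}$ once one has the right idea of where to evaluate. The only point requiring a little care is making sure the six argument-columns fed into $g_{(u,v)}$ all land inside the triangle $\{a_0,a_1,a_2\}$ as edges — which is precisely why one needs $\rel K_3$ (a triangle, with all three edges) inside $\hH$ rather than, say, just a single edge. One should also note that the argument is uniform in the choice of edge $(u,v)$ and that the same vertex-functions $f_v$ serve for all edges incident to $v$, so no consistency issue arises; this is guaranteed by the fact that $\Sigma_{\rel G}$ is satisfied as a single height~1 condition with one function symbol $f_v$ per vertex. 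Finally, if one wants to be pedantic about the degenerate case where $\gG$ has a loop $(v,v)$, the same computation still goes through with $u=v$, giving $f_v(a_0,a_1,a_2)$ adjacent to itself — but $\hH$ being a graph may forbid loops, in which case $\Sigma_{\rel G}$ with a loop would simply not be satisfiable in $\Pol(\hH)$ and the hypothesis would be vacuous; since the lemma is stated for graphs $\hH$ containing the loopless $\rel K_3$, and $\gG$ is likewise a (loopless) graph in the intended applications, this case does not arise.
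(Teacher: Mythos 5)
Your proof is correct and follows essentially the same route as the paper: fix a triangle $a_0,a_1,a_2$ in $\hH$, define $h(v) := f_v(a_0,a_1,a_2)$, and use that the two $6$-tuples fed into $g_{(u,v)}$ are componentwise adjacent to conclude $(h(u),h(v))$ is an edge. The closing discussion about loops is unnecessary for the argument but does not introduce any error.
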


\begin{proof}
   Let $v_1,v_2$ and $v_3$ be vertices of some copy of $\rel K_3$ in ${\rel H}$, and assume that we have polymorphisms of ${\rel H}$ satisfying the condition $\Sigma_{\rel G}$. Fix for every vertex $v$ of $\hH$ a function $f_v$ and for every edge $e$ of $\hH$ a function $g_e$ which witness this fact. We claim that the mapping $h\colon \gG\to\hH$ which sends every vertex $v$ of $\gG$ to $f_v(v_1,v_2,v_3)$ is a~homomorphism. Indeed, if $(u,v)$ is an edge of ${\rel G}$ then we get
  \begin{align*}
    f_u(v_1,v_2,v_3) &= g_{(u,v)}(v_1,v_2,v_1,v_3,v_2,v_3) \\
    f_v(v_1,v_2,v_3) &= g_{(u,v)}(v_2,v_1,v_3,v_1,v_3,v_2)\; .
  \end{align*}
Since $g_{(u,v)}$ is a polymorphism of $\hH$, and since $(v_i,v_j)$ is an edge in $\hH$ for all $i\neq j$, we get that $g_{(u,v)}(v_1,v_2,v_1,v_3,v_2,v_3)$ and $g_{(u,v)}(v_2,v_1,v_3,v_1,v_3,v_2)$ are related by an edge in $\hH$. Hence, $(h(u),h(v)) = (f_u(v_1,v_2,v_3),f_v(v_1,v_2,v_3))$ is an edge of~${\rel H}$.
\end{proof}

Finally, these tools allow us to provide a simple criterion for the triviality of conditions of the form $\Sigma_{\rel G}$. Even though the following lemma follows directly from \cite[Lemma 3.13]{BKO18}, we include a~proof for completeness.

\begin{lemma}(cf.\ \cite[Lemma 3.13]{BKO18}) \label{lem:1.1}
  For any finite graph ${\rel G}$, the condition $\Sigma_{\rel G}$ is trivial if and only if ${\rel G}$ is $3$-colorable.
\end{lemma}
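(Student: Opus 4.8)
The plan is to derive both implications from Lemmas~\ref{lem:3.1} and~\ref{lem:sigma_g}; the only genuine computation is the verification that $\Sigma_{\rel K_3}$ is trivial. Throughout I identify the vertices of $\rel K_3$ with $\{1,2,3\}$ and write the variables of the identities of $\Sigma_{\rel G}$ as $x_1,x_2,x_3$, so that the two identities attached to an edge $(u,v)$ read $f_u(x_1,x_2,x_3)\equals g_{(u,v)}(x_1,x_2,x_1,x_3,x_2,x_3)$ and $f_v(x_1,x_2,x_3)\equals g_{(u,v)}(x_2,x_1,x_3,x_1,x_3,x_2)$.

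For the ``only if'' direction: suppose $\Sigma_{\rel G}$ is trivial, i.e.\ satisfied in every clone; then in particular the clone $\Pol(\rel K_3)$ satisfies $\Sigma_{\rel G}$. Since $\rel K_3$ trivially contains a copy of $\rel K_3$, Lemma~\ref{lem:sigma_g} applied with $\rel H=\rel K_3$ yields a homomorphism ${\rel G}\to\rel K_3$, which is exactly a $3$-coloring of ${\rel G}$.

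For the ``if'' direction: assume ${\rel G}$ is $3$-colorable, so there is a homomorphism ${\rel G}\to\rel K_3$. By Lemma~\ref{lem:3.1}, $\Sigma_{\rel K_3}$ implies $\Sigma_{\rel G}$, so it suffices to show that $\Sigma_{\rel K_3}$ is trivial, i.e.\ that $\Proj$ satisfies it. I would exhibit witnessing projections directly: put $f_i:=\proj_i^3$ for each vertex $i\in\{1,2,3\}$. In the two $6$-tuples $(x_1,x_2,x_1,x_3,x_2,x_3)$ and $(x_2,x_1,x_3,x_1,x_3,x_2)$, coordinate $k$ carries the variable $x_{a_k}$ in the first tuple and $x_{b_k}$ in the second, where $(a_k)_{k=1}^{6}=(1,2,1,3,2,3)$ and $(b_k)_{k=1}^{6}=(2,1,3,1,3,2)$; inspecting these lists shows that $k\mapsto(a_k,b_k)$ is a bijection from $\{1,\dots,6\}$ onto the set of ordered pairs of distinct elements of $\{1,2,3\}$. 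Hence, given an edge $(i,j)$ of $\rel K_3$ (so $i\neq j$), there is a unique $k$ with $(a_k,b_k)=(i,j)$, and setting $g_{(i,j)}:=\proj_k^6$ makes both identities attached to $(i,j)$ true, since $\proj_k^6(x_1,x_2,x_1,x_3,x_2,x_3)=x_i=f_i(x_1,x_2,x_3)$ and $\proj_k^6(x_2,x_1,x_3,x_1,x_3,x_2)=x_j=f_j(x_1,x_2,x_3)$. Thus $\Proj$ satisfies $\Sigma_{\rel K_3}$, so $\Sigma_{\rel K_3}$ — and therefore $\Sigma_{\rel G}$ — is trivial.

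I do not expect any serious obstacle: once Lemmas~\ref{lem:3.1} and~\ref{lem:sigma_g} are available the argument is pure bookkeeping. The only points that need care are (a) invoking each lemma in the correct direction — Lemma~\ref{lem:sigma_g} ``forces ${\rel G}$ into'' any $\rel K_3$-containing structure whose polymorphisms satisfy $\Sigma_{\rel G}$, whereas Lemma~\ref{lem:3.1} turns a homomorphism ${\rel G}\to{\rel H}$ into the implication $\Sigma_{\rel H}\Rightarrow\Sigma_{\rel G}$ — and (b) recalling that ``$3$-colorable'' here means ``admits a homomorphism to $\rel K_3$'', so that a graph with a loop is (correctly) not $3$-colorable, matching the fact that the Siggers condition, induced by the one-vertex graph with a loop, is non-trivial. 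One can also bypass both lemmas and argue directly against $\Proj$: by the coordinate analysis above, $\Proj$ satisfies $\Sigma_{\rel G}$ if and only if the vertex map $c$ with $f_v=\proj^3_{c(v)}$ can be chosen to be a proper $3$-coloring of ${\rel G}$, which is possible exactly when ${\rel G}$ is $3$-colorable.
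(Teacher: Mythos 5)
Your proof is correct and follows essentially the same route as the paper: both directions are obtained from Lemmas~\ref{lem:hom-between-sigma_g} and~\ref{lem:sigma_g}, with the triviality of $\Sigma_{\rel K_3}$ verified by assigning projections to the symbols $f_i$ and $g_{(i,j)}$. The only cosmetic differences are that you argue the ``only if'' direction directly rather than in contrapositive form, and that you make explicit the bijection $k\mapsto(a_k,b_k)$ between coordinates and ordered pairs of distinct indices that the paper leaves to the reader.
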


\begin{proof}
  First, assume that ${\rel G}$ is $3$-colorable, i.e., it possesses a~homomorphism to $\rel K_3$. Then by the previous lemma, we have that $\Sigma_{\rel G}$ is implied by $\Sigma_{\rel K_3}$, and therefore it is enough to show that $\Sigma_{\rel K_3}$ is trivial. That is, we have to assign projections to the symbols of $\Sigma_{\rel K_3}$ in such a way that the identities are satisfied.
     Let $1,2,3$ be the vertices of $\rel K_3$, and define $f_i$ to be the $i$-th ternary projection. Moreover, for $i\neq j$ assign to $g_{(i,j)}$ the unique $6$-ary projection so that 
     \begin{align*}
  f_i(x,y,z) &\equals g_{(i,j)}( x,y, x,z, y,z ) \\
  f_j(x,y,z) &\equals g_{(i,j)}( y,x, z,x, z,y ).
\end{align*}
are satisfied.  By definition, this assignment satisfies $\Sigma_{\rel K_3}$. 

  If ${\rel G}$ is not $3$-colorable, then Lemma~\ref{lem:sigma_g} implies that $\Pol(\rel K_3)$ does not satisfy $\Sigma_{\rel G}$, and hence $\Sigma_{\rel G}$ is non-trivial.
\end{proof}

\begin{remark}
  The lemma implies that the problem of deciding the triviality of height 1 conditions is NP-hard, since it provides a reduction from the 3-coloring problem. The problem of deciding whether a~given height 1 condition is trivial is  known (in a~different, but equivalent  formulation) in computer science under the name \emph{Label Cover}~\cite{ABSS97}.
\end{remark}

We now show that for each non-trivial height 1 condition $\Sigma$ there is a~non-3-colorable graph ${\rel G}$ such that $\Sigma_{\rel G}$ is implied by $\Sigma$.
We will use the folklore fact that $\Pol(\rel K_3)$ does not satisfy any non-trivial height 1 condition since it only contains functions of the form
\(
  f(x_1,\dots,x_n) = \alpha(x_i)
\)
where $1\leq i\leq n$ and $\alpha\colon \rel K_3 \to \rel K_3$ is a bijection. In particular, there exists a~minion homomorphism from $\Pol(\rel K_3)$ to $\Proj$.

\begin{lemma} \label{lem:1.2}
  Let $\clo A$ be a~clone that does not have a~minion homomorphism to $\Proj$. Then there exists a~finite graph ${\rel G}$ which is not $3$-colorable and such that $\clo A$ satisfies $\Sigma_{\rel G}$.
\end{lemma}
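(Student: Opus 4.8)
The plan is to use a compactness argument to reduce the global statement that $\clo A$ satisfies \emph{some} non-trivial height~1 condition to a statement about a concrete finite graph. By the (folklore) equivalence mentioned just before the lemma, $\clo A$ having no minion homomorphism to $\Proj$ means that $\clo A$ satisfies some non-trivial height~1 condition $\Sigma$. Indeed, if $\clo A$ satisfied only trivial height~1 conditions, then one could build a minion homomorphism $\clo A \to \Proj$ by a standard compactness/König's lemma argument over finite subsets of the (countably many) finitary operations of $\clo A$: for each finite set of operations one can consistently assign projections respecting all arity-$\pi$ relations they satisfy, and these partial assignments cohere. So fix a non-trivial height~1 condition $\Sigma$ satisfied by $\clo A$.

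Next I would invoke the result announced at the start of Section~\ref{sec:sigma-g} (which follows from Lemma~\ref{lem:3.1} and Lemma~\ref{lem:1.1}): for every non-trivial height~1 condition $\Sigma$ there is a finite, non-$3$-colorable graph $\gG$ such that $\Sigma$ implies $\Sigma_{\gG}$. Concretely, one passes through the construction of~\cite{BKO18}: the condition $\Sigma$ can be encoded as a homomorphism problem, and satisfiability of $\Sigma$ in $\Pol(\rel K_3)$ fails precisely because $\rel K_3$ has no non-trivial polymorphisms; unwinding this yields a finite graph $\gG$ with $\Sigma \Rightarrow \Sigma_{\gG}$, and $\gG$ is non-$3$-colorable exactly because $\Sigma$ is non-trivial (otherwise Lemma~\ref{lem:1.1} would make $\Sigma_{\gG}$, hence a weakening of $\Sigma$, trivial, contradicting non-triviality — more carefully: if every such $\gG$ were $3$-colorable then $\Sigma$ would imply only trivial conditions of the form $\Sigma_{\gG}$, and one argues this forces $\Sigma$ itself to be trivial). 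Since $\clo A$ satisfies $\Sigma$ and $\Sigma$ implies $\Sigma_{\gG}$, the clone $\clo A$ satisfies $\Sigma_{\gG}$, and $\gG$ is the desired graph.

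The main obstacle is the middle step: producing the non-$3$-colorable graph $\gG$ from an arbitrary non-trivial $\Sigma$. This is where the real content lies — Lemmas~\ref{lem:3.1}, \ref{lem:sigma_g}, and \ref{lem:1.1} give the \emph{tools}, but one still needs the reduction from a general height~1 condition to a condition of the special form $\Sigma_{\gG}$. I expect the paper establishes this separately (it is explicitly announced right before the lemma, ``We now show that for each non-trivial height 1 condition $\Sigma$ there is a~non-3-colorable graph ${\rel G}$ such that $\Sigma_{\rel G}$ is implied by $\Sigma$''), presumably by encoding $\Sigma$ via the $\Sigma(\rel K_3, -)$ construction of~\cite[Section 3.2]{BKO18} and using that $\rel K_3$ admits no non-trivial height~1 identities. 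Once that reduction is in hand, combining it with the compactness step of the first paragraph and the transitivity of ``implies'' gives the lemma immediately. If instead one wants a self-contained route, the alternative is: apply Theorem~\ref{thm:bop18} (its ``only if'' direction holds for all structures) to note that no minion homomorphism $\clo A \to \Proj = \Pol(\rel K_3')$ means $\rel K_3$-type finite structures cannot be pp-constructed, then extract $\gG$ combinatorially; but I would prefer the direct argument through $\Sigma_{\gG}$ since all needed lemmas are already available.
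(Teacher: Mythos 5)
Your proposal has a genuine gap, and moreover the structure you sketch is circular with respect to the paper. You reduce the lemma to two steps: (a) argue by compactness that $\clo A$, having no minion homomorphism to $\Proj$, must satisfy \emph{some} non-trivial height~1 condition $\Sigma$; then (b) extract from $\Sigma$ a non-$3$-colorable graph $\gG$ with $\Sigma \Rightarrow \Sigma_{\gG}$. You explicitly flag (b) as ``where the real content lies'' and do not prove it, instead pointing to the sentence announced just before the lemma. But that announcement \emph{is} Corollary~\ref{cor:1.2}, and in the paper Corollary~\ref{cor:1.2} is derived \emph{from} Lemma~\ref{lem:1.2} (by applying the lemma to the clone of term operations of the free algebra in the variety defined by $\Sigma$). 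Invoking it here would beg the question. Lemmas~\ref{lem:3.1}, \ref{lem:sigma_g}, and \ref{lem:1.1} alone do not give you (b): they let you compare and test conditions of the special form $\Sigma_{\gG}$, but they provide no mechanism for producing such a $\gG$ from an arbitrary non-trivial $\Sigma$.

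The paper's actual proof is different and more direct, and does not pass through step (a) at all. It uses~\cite[Lemma 4.4]{BKO18}: one builds an explicit (possibly infinite) graph $\rel F$ on vertex set $\clo A^{(3)}$, with an edge $(f_1,f_2)$ whenever some $g\in\clo A^{(6)}$ witnesses $f_1(x,y,z)\approx g(x,y,x,z,y,z)$ and $f_2(x,y,z)\approx g(y,x,z,x,z,y)$. By construction, $\clo A$ satisfies $\Sigma_{\gG}$ for \emph{every} finite subgraph $\gG$ of $\rel F$. The cited lemma identifies $3$-colorings of $\rel F$ with minion homomorphisms $\clo A\to\Pol(\rel K_3)$; since there is none to $\Proj$, there is none to $\Pol(\rel K_3)$ either, so $\rel F$ is not $3$-colorable. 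Compactness (of $3$-colorability, not of the satisfaction of height~1 conditions) then yields a finite non-$3$-colorable subgraph $\gG$ of $\rel F$, which already carries the witnesses for $\Sigma_{\gG}$ in $\clo A$. This is the ingredient your proposal is missing: a way to \emph{locate} the witnessing graph inside $\clo A$ itself, rather than abstractly inside a free clone for a condition you would first have to extract.
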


\begin{proof}
By~\cite[Lemma 4.4]{BKO18}, minion homomorphisms from $\clo A$ to $\Pol(\rel K_3)$ correspond precisely to 3-colorings of a certain graph ${\rel F}=(V,E)$, which we shall now describe (cf.\ \cite[Definition 4.1]{BKO18}). This (possibly infinite) graph will serve as a~source of finite graphs ${\rel G}$ such that $\clo A$ satisfies $\Sigma_{\rel G}$.

We take $V:= \clo A^{(3)}$, and define the edges of ${\rel F}$ in the following way: $(f_1,f_2)\in E$ if and only if there exists $g\in \clo A^{(6)}$ such that
  \begin{align*}
    f_1(x,y,z) &\equals g( x,y,x,z,y,z ) \\
    f_2(x,y,z) &\equals g( y,x,z,x,z,y )
  \end{align*}
  holds in $\clo A$. 
  Clearly, $\clo A$ satisfies $\Sigma_{\rel G}$ for each finite subgraph ${\rel G}$ of ${\rel F}$ since the functions that correspond to the vertices of ${\rel G}$ together with the witnesses for the edges of ${\rel G}$ provide a~solution to $\Sigma_{\rel G}$ (see also~\cite[Lemma 4.3]{BKO18}).
  
  Now \cite[Lemma 4.4]{BKO18} (applied to $\mathbf A=\mathbf B:=\rel K_3$) states that the minion homomorphisms to from $\clo A$ to $\Pol(\rel K_3)$ correspond precisely to the 3-colorings of $\rel F$. Since $\clo A$ does not have any minion homomorphism to $\Proj$, it has none to $\Pol(\rel K_3)$ either, and hence $\rel F$ is not $3$-colorable. By a~standard compactness argument, there exists a finite subgraph $\gG$ of $\rel F$ which is not 3-colorable. Since $\clo A$ satisfies $\Sigma_{\rel G}$, the proof is complete.
\end{proof}

\begin{corollary} \label{cor:1.2}
  For each non-trivial height 1 condition $\Sigma$ there exists a~graph ${\rel G}$ that is not $3$-colorable and such that $\Sigma_{\rel G}$ is implied by $\Sigma$.
\end{corollary}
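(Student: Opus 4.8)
\emph{Proof proposal.}
The plan is to apply Lemma~\ref{lem:1.2} to a clone that is as \emph{generic} as possible for $\Sigma$. Let $\tau$ be the (finite) functional signature consisting of the function symbols occurring in $\Sigma$, let $\mathcal V$ be the variety of $\tau$-algebras satisfying the identities of $\Sigma$ (now read as ordinary, universally quantified identities), and let $\clo A$ be the clone of term operations of the free algebra $\mathbf F$ of $\mathcal V$ on countably many generators. Since the fundamental operations of $\mathbf F$ lie in $\clo A$ and witness $\Sigma$, the clone $\clo A$ satisfies $\Sigma$; as $\Sigma$ is non-trivial and a clone admitting a minion homomorphism to $\Proj$ satisfies only trivial height~1 conditions (Section~\ref{sect:prelims}), we conclude that there is no minion homomorphism from $\clo A$ to $\Proj$.

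Next I would apply Lemma~\ref{lem:1.2} to $\clo A$, obtaining a finite graph $\gG$ which is not $3$-colorable and such that $\clo A$ satisfies $\Sigma_{\gG}$. It then remains to show that $\Sigma$ implies $\Sigma_{\gG}$. So let $\clo C$ be an arbitrary clone satisfying $\Sigma$, and choose for each symbol of $\tau$ a witnessing operation in $\clo C$; this equips the underlying set of $\clo C$ with the structure of a $\tau$-algebra $\mathbf C$ that lies in $\mathcal V$ (and whose term operations all belong to $\clo C$, as $\clo C$ is a clone containing the chosen witnesses). By the universal property of the free algebra, sending each term operation of $\mathbf F$ to the corresponding term operation of $\mathbf C$ is a well-defined clone homomorphism from $\clo A$ into $\clo C$. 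Every clone homomorphism preserves projections and compositions, hence preserves minors, so it is in particular a minion homomorphism; and minion homomorphisms preserve height~1 conditions, so $\clo C$ satisfies $\Sigma_{\gG}$. As $\clo C$ was an arbitrary clone satisfying $\Sigma$, this proves that $\Sigma$ implies $\Sigma_{\gG}$, which is what we wanted; note that $\Sigma_{\gG}$ is then automatically non-trivial by Lemma~\ref{lem:1.1}.

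The argument is essentially the combination of Lemma~\ref{lem:1.2} with the right choice of clone, so there is no genuinely hard step. The one point that needs a little care is the universal property invoked above --- the equivalence between ``$\clo C$ satisfies $\Sigma$'' and ``$\clo C$ receives a clone homomorphism from $\clo A$'' --- together with the routine observation that clone homomorphisms are minion homomorphisms. One should also check that $\mathbf F$ is non-degenerate, so that $\clo A$ is a clone on a genuinely infinite set to which Lemma~\ref{lem:1.2} applies; but this is automatic, since every height~1 condition holds in all constant algebras and hence $\mathcal V$ does not satisfy $x \approx y$.
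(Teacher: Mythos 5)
Your proposal is correct and follows essentially the same route as the paper's proof: take the clone of term operations of the countably generated free algebra in the variety axiomatized by $\Sigma$, apply Lemma~\ref{lem:1.2} to it, and use freeness to transfer $\Sigma_{\rel G}$ to every clone satisfying $\Sigma$. You simply make explicit the universal-property argument and the observation that clone homomorphisms are minion homomorphisms, which the paper leaves to the reader.
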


\begin{proof}
  Let $\clo A$ be the clone of term operations of the free countably generated algebra in the variety defined by $\Sigma$. Clearly, $\Sigma$ witnesses that $\clo A$ has no minion homomorphism to ${\Proj}$. Therefore, Lemma~\ref{lem:1.2} provides a~non $3$-colorable graph ${\rel G}$ such that $\clo A$ satisfies $\Sigma_{\rel G}$.
  Since $\clo A$ is free, we obtain that $\Sigma_{\rel G}$ is implied by $\Sigma$.
\end{proof}

\section{A decreasing chain of height 1 conditions}
  \label{sec:chain}

We now construct a sequence $(\Sigma_n)_{n\geq 1}$ of non-trivial height~1 conditions such that 
\begin{itemize}
\item $\Sigma_n$ implies $\Sigma_{n+1}$ for all $n\geq 1$, and
\item for every non-trivial height~1 condition $\Sigma$ there exists $n\geq 1$ such that $\Sigma$ implies $\Sigma_n$.
\end{itemize}
As we will see in the next section, these properties imply that for infinitely many $n\geq 1$, $\Sigma_{n+1}$ does not imply $\Sigma_n$, so that by thinning out the sequence one could assume that the converse of the first property never holds.
The key point of the construction is to find, for two non-trivial height~1 conditions $\Sigma$ and $\Sigma'$, a non-trivial height 1 condition which is implied by both of them. Since by the results of the previous section any non-trivial height 1 condition implies a non-trivial condition of the form $\Sigma_{\gG}$, it is sufficient to achieve this task for such conditions. Moreover, since whenever $\gG$ has a loop, then $\Sigma_{\gG}$ implies all other non-trivial conditions $\Sigma_{\gG'}$ (by Lemma~\ref{lem:hom-between-sigma_g}), we can focus on loopless graphs.

Let us consider two loopless graphs ${\rel G}$ and ${\rel H}$ that are not 3-colorable, i.e., $\Sigma_{\rel G}$ and $\Sigma_{\rel H}$ are non-trivial. An edge $e$ of a~graph ${\rel M}$ is called \emph{critical} if the graph 
\[
{\rel M} - e
\]
obtained from ${\rel M}$ by removing $e$ is 3-colorable. We first replace ${\rel G}$ by a~subgraph of ${\rel G}$ that has a~critical edge $e$. This can be done by repeatedly removing edges until we obtain a~3-colorable graph; the edge that we removed in the last step will be critical for the second-to-last graph of this procedure. Note that the height 1 condition induced by the subgraph obtained in this way is still non-trivial (since the subgraph is not 3-colorable) and implied by the height 1 condition of the original graph by Lemma~\ref{lem:hom-between-sigma_g}.  We modify ${\rel H}$ in the same way as ${\rel G}$, and fix a critical edge $f$ of $\hH$.

Our next step is to glue together $\gG$ and $\hH$ at the critical edges $e$ and $f$ using a gadget graph ${\rel N}$, which is given in Fig.\ \ref{fig:1}. The graph ${\rel N}$ contains four special vertices that are labeled by $x,x',y,y'$, and a special edge labeled by $d$, and has the following properties:
\begin{itemize}
\item Every homomorphism $c\colon {\rel N}\to \rel K_3$ satisfies $c(x) \neq c(x')$ or $c(y) \neq c(y')$ but not both;
\item Every mapping $c\colon \{x,x',y,y'\} \to \rel K_3$ that satisfies the property above extends to a~homomorphism from ${\rel N}$ to~$\rel K_3$.
\item Every mapping $c\colon \{x,x',y,y'\} \to \rel K_3$ that satisfies $c(x) = c(x')$ and $c(y) =c(y')$ can be extended to a~$3$-coloring of~${\rel N}-d$.
\end{itemize}
\begin{figure}
  \[
    \begin{tikzpicture}[ dot/.style = { draw, circle, fill, inner sep=1.5pt }, scale = 2 ]
      \foreach \x in {0,1,3,4} {
        \node [dot] at (\x+.5,.5) {};
        \node [dot] at (\x,0) {};
        \node [dot] at (\x,1) {};
        \draw (\x,0) -- (\x+1,1);
        \draw (\x+1,0) -- (\x,1);
      }
      \node [dot] at (2,0) {};
      \node [dot] at (2,1) {};
      \draw (0,0) -- (5,0);
      \draw (0,1) -- (5,1);

      \node [label=right:{$y'$},dot,fill=white] at (5,0) {};
      \node [label=right:{$y$},dot,fill=white] at (5,1) {};
      \node [label=left:{$x'$},dot,fill=white] at (0,0) {};
      \node [label=left:{$x$},dot,fill=white] at (0,1) {};

      \node [dot] (a) at (2.5,-.5) {};
      \node [dot] (b) at (2.5,1.5) {};

      \draw (2,0) -- (3,1);
      \draw (3,0) -- (2,1);

      \draw (a) edge [bend right=15] node [right] {$d$} (b);
      \draw (2,0) -- (a) -- (3,0);
      \draw (2,1) -- (b) -- (3,1);
    \end{tikzpicture}
  \]
  \caption{The gadget graph ${\rel N}$.} \label{fig:1}
\end{figure}
In our glueing construction, we will only need these three properties of ${\rel N}$, i.e., any other graph with the same properties would work as well. We construct a~new graph, denoted by $({\rel G},e)\oplus ({\rel H},f)$, in the following way:
\begin{enumerate}
  \item We first glue together $\rel N$ and $\gG$ by replacing the edge $e$ by the pair $(x,x')$  of $\rel N$ (the pair $(x,x')$ remaining a~non-edge), and leaving the other vertices disjoint, and then
  \item we add the graph $\hH$ to the construction by replacing the edge $f$ by the pair $(y,y')$ of $\rel N$ (the pair $(y,y')$ remaining a~non-edge).
\end{enumerate}

\begin{lemma} \label{lem:2}
Let ${\rel W} := ({\rel G}, e) \oplus ({\rel H}, f)$ be the graph as constructed above. Then:
  \begin{itemize}
    \item[(1)] ${\rel W}$ is not 3-colorable;
    \item[(2)] The edge $d$ is a critical edge of ${\rel W}$;
    \item[(3)] $\Sigma_{\rel W}$ is implied by both $\Sigma_{\rel G}$ and $\Sigma_{\rel H}$.
  \end{itemize}
\end{lemma}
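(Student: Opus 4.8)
The plan is to verify each of the three claims in turn, using the three properties of the gadget graph $\rel N$ together with the fact that $e$ and $f$ are critical edges of $\gG$ and $\hH$ respectively.

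For (1), I would argue by contradiction: suppose $c$ is a $3$-coloring of $\rel W$. Restricting $c$ to the copy of $\rel N$ inside $\rel W$, the first property of $\rel N$ forces $c(x)\neq c(x')$ or $c(y)\neq c(y')$, but not both. Say $c(x)\neq c(x')$ and $c(y)=c(y')$. Then the restriction of $c$ to the copy of $\gG$ is a $3$-coloring of $\gG$ in which the endpoints of $e$ (identified with $x,x'$) receive distinct colors; adding the edge $e$ back gives a $3$-coloring of $\gG$ itself, contradicting non-$3$-colorability of $\gG$. The symmetric case $c(x)=c(x')$, $c(y)\neq c(y')$ contradicts non-$3$-colorability of $\hH$ in the same way.

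For (2), I must show $\rel W - d$ is $3$-colorable. Since $e$ is critical, $\gG - e$ is $3$-colorable; pick such a coloring and, by permuting colors if necessary, assume its two values on $x,x'$ are equal — wait, that is not free, so instead I will be more careful: if $\gG-e$ has a $3$-coloring making $x,x'$ equal, use it; otherwise every $3$-coloring of $\gG - e$ separates $x$ and $x'$, which would make $\gG$ itself $3$-colorable, a contradiction. So fix a $3$-coloring of $\gG - e$ with $c(x)=c(x')$, and likewise a $3$-coloring of $\hH - f$ with $c(y)=c(y')$, and after renaming colors assume both pairs get color, say, $1$. Now the third property of $\rel N$ says the map sending $x,x',y,y'$ all to $1$ extends to a $3$-coloring of $\rel N - d$; gluing these three colorings along the identified vertices yields a $3$-coloring of $\rel W - d$. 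Combined with (1), which shows $\rel W$ itself is not $3$-colorable, this proves $d$ is critical.

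For (3), by symmetry it suffices to show $\Sigma_{\rel G}$ implies $\Sigma_{\rel W}$. By Lemma~\ref{lem:hom-between-sigma_g} it is enough to produce a homomorphism $\rel W \to \rel G$. I will define it piecewise: on the copy of $\gG$ inside $\rel W$ (where $e$ is a non-edge but $x,x'$ are its former endpoints) use the identity; this is a homomorphism into $\gG$ because $\gG$ is a supergraph of $\gG - e$. It remains to extend this over the rest of $\rel W$, i.e.\ over $\rel N$ and $\hH$, consistently with the already-fixed values on $x,x'$. Here the second property of $\rel N$ is the key: since $x,x'$ are the endpoints of the edge $e$ of $\gG$, their images are adjacent, so in particular distinct; pick any color pattern on $\hH$'s vertices $y,y'$ that is equal on $y,y'$ and uses a value adjacent to nothing problematic — more precisely, map all of $\hH$ to a single vertex $w$ of $\gG$, which works since $\hH - f$ (hence $\hH$ minus the edge $f$) maps into any graph via a constant, but $\hH$ has edges, so a constant is not a homomorphism of $\hH$. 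The cleaner route is: map $\gG$ by the identity, and map $\hH$ into $\gG$ via a homomorphism $\hH - f \to \gG$ realizing $y,y'$ with equal values (which exists since $f$ is critical), composed with the observation that this is a homomorphism of $\hH$ into $\gG$ only if the edge $f$ is respected — and it is, because adjacent endpoints of $f$ get equal values which is fine as long as $\gG$ has a loop, which it need not. I therefore expect the genuinely delicate step to be (3): one must route the homomorphism $\rel W\to\gG$ through $\rel N$ so that the edge $d$ and the former-edges $e,f$ are all handled, using properties one and two of $\rel N$ to send $(x,x')$ to the (adjacent) endpoints of $e$ in $\gG$ and $(y,y')$ to a single vertex, with the interior of $\rel N$ colored by the extension guaranteed by property two, and finally mapping all of $\hH$ to that single vertex using that $\hH - f$ collapses — the precise bookkeeping of which pairs are separated and which are identified is where care is needed, and I would expect to invoke property two of $\rel N$ in the direction "$c(x)=c(x')$ fails, $c(y)=c(y')$ holds" to make it go through.
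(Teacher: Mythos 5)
Your proofs of parts~(1) and~(2) are correct and essentially the same as the paper's, modulo a slightly more verbose verification that every $3$-coloring of $\gG-e$ must identify $x$ and $x'$ (the paper uses this silently). But the approach you sketch for~(3) does not work, and the correct argument is genuinely different.

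You try to prove~(3) by exhibiting a graph homomorphism $\rel W\to\gG$ and invoking Lemma~\ref{lem:hom-between-sigma_g}. There are two problems. First, such a homomorphism need not exist: the gadget $\rel N$ contains triangles (e.g.\ the bottom two vertices of a cross together with the middle vertex of that cross), so $\rel N$ is a subgraph of $\rel W$ with a triangle, while $\gG$ may perfectly well be triangle-free and non-$3$-colorable (take a Grötzsch-type graph). In that case no homomorphism $\rel W\to\gG$ exists at all. Second, the three properties of $\rel N$ concern maps into $\rel K_3$, not into $\gG$, so property two never produces a coloring of $\rel N$ with values in $\gG$; your repeated attempts to route the map through a constant on $\hH$ or through $\hH-f$ founder on exactly this mismatch, as you yourself notice.

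The paper's proof avoids graph homomorphisms into $\gG$ altogether. Starting from a clone $\clo A$ satisfying $\Sigma_\gG$ with witnesses $f_v^{\clo A}$ and $g_{(u,v)}^{\clo A}$, one defines from the single $6$-ary witness $g_e^{\clo A}$ three ternary operations $f_1,f_2,f_3$ by the formulas $(\spadesuit.1)$--$(\spadesuit.3)$; these have the crucial property that for any ordered pair $(i,j)$ with $i\neq j$ a suitable permutation of the arguments of $g_e$ supplies the required $6$-ary witness connecting $f_i$ and $f_j$ (because any two of the three rows of $(\spadesuit)$ produce, columnwise, all six ordered pairs of distinct variables). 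Note $f_1=f_x^{\clo A}$ and $f_2=f_{x'}^{\clo A}$. One then fixes a $3$-coloring $c$ of the subgraph of $\rel W$ induced by the vertices of $\rel N\cup\hH$ with $c(x)=1$, $c(x')=2$ (this is where the $3$-colorability of $\hH-f$ and property two of $\rel N$, with values in $\rel K_3$, are actually used), assigns to each vertex $v$ of $\rel W$ outside $\gG$ the function $f_{c(v)}$, and to each new edge the appropriately permuted $g_e$. Conceptually, $f_1,f_2,f_3$ form a ``virtual $\rel K_3$'' inside the clone, glued to the copy of $\gG$ along $e=\{x,x'\}$, and the coloring $c$ maps $\rel N\cup\hH$ into that virtual triangle. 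This algebraic construction of the witnesses, rather than a homomorphism $\rel W\to\gG$, is the key idea missing from your sketch.
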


\begin{proof}
To prove~(1), let us assume that there is a~homomorphism $c\colon {\rel W}\to \rel K_3$. Then neither of its restrictions to the vertices of $\gG$ and the vertices of $\hH$, respectively, is a 3-coloring of ${\rel G}$ or ${\rel H}$, since these graphs are not 3-colorable.  Since all the edges of ${\rel G}$ except $e$ are included in ${\rel W}$, these facts are witnessed on $(x,x')$ and $(y,y')$, i.e., 
         we have that $c(x) = c(x')$ and $c(y) = c(y')$. This implies that the restriction of $c$ to~${\rel N}$ is not a homomorphism (by the properties of $\rel N$ above), a~contradiction.
         
For~(2), we have to show that removing the edge $d$ from ${\rel W}$ we obtain a~3-colorable graph. To find such a~coloring, we first pick 3-colorings of ${\rel G} - e$ and of ${\rel H} - f$, and let $c$ be the union of the two. Then $c$ extends to a~3-coloring of ${\rel W}$, since $c(x) = c(x')$ and $c(y) = c(y')$, and by the properties of $\rel N$ above.

We now prove~(3). Due to the symmetry of the statement it is enough to prove that $\Sigma_{\rel G}$ implies $\Sigma_{\rel W}$. Let us assume that $\clo A$ is a clone which satisfies $\Sigma_{\rel G}$, i.e.,  there are functions $f_v^{\clo A}$ and $g_{(u,v)}^{\clo A}$ for all vertices $v$ of $\gG$ and all edges $(u,v)$ of $\gG$ which witness the satisfaction of $\Sigma_\gG$. We extend this family of functions to a~solution of $\Sigma_{\rel W}$. Before we do that let us fix a $3$-coloring $c$ of the subgraph of ${\rel W}$ induced by the vertices of ${\rel N}$ and ${\rel H}$ such that $c(x) = 1$ and $c(x') = 2$. Such a~coloring exists by the properties of ${\rel H}-f$, of ${\rel N}$, and the construction of $\rel W$. Now, define
    \begin{align*}
      f_1^{\clo A}(x,y,z) & := g_e^{\clo A}(x,y,x,z,y,z)
        \tag{$\spadesuit$.1}\\
      f_2^{\clo A}(x,y,z) & := g_e^{\clo A}(y,x,z,x,z,y)
        \tag{$\spadesuit$.2}\\
      f_3^{\clo A}(x,y,z) & := g_e^{\clo A}(z,z,y,y,x,x)\;.
        \tag{$\spadesuit$.3}
    \end{align*}
    Note that $f_x^{\clo A} = f_1^{\clo A}$ and $f_{x'}^{\clo A} = f_2^{\clo A}$. For any vertex $v$ of ${\rel W}$ which is not a vertex of $\gG$, we put
    \(
      f_v^{\clo A} := f_{c(v)}^{\clo A};
    \)
    for any edge $(u,v)$ of ${\rel W}$ which is not an edge of $\gG$, we define 
    $$
    g_{(u,v)}(x_1,\ldots,x_6):=g_e(x_{\sigma(1)},\ldots,x_{\sigma(6)})\; ,
    $$ 
    where $\sigma$ is a permutation of $\{1,\ldots,6\}$ such that the identities 
    \begin{align*}
      f_u^{\clo A}(x,y,z) &\equals g_{(u,v)}^{\clo A}(x,y,x,z,y,z) \\
      f_v^{\clo A}(x,y,z) &\equals g_{(u,v)}^{\clo A}(y,x,z,x,z,y)
    \end{align*}
    hold.
    This is always possible since when considering any two rows of ($\spadesuit$), the columns of the right-hand side contain all combinations of pairs of different variables.  It is clear that these functions are defined so that they satisfy all identities of $\Sigma_{\rel W}$, which concludes the proof. \qedhere
\end{proof}

Let us conclude with a~recursive construction of the promised sequence of height~1 conditions.

\begin{proof}[Proof of Theorem~\ref{thm:main-2} given Theorem~\ref{thm:main-1}]
  We will first construct a~chain of non-trivial height 1 conditions of the form $\Sigma_{\rel G}$, and then prove that for any non-trivial height~1 condition there is one in the chain that is weaker. We fix an enumeration
  \[
    ({\rel G}_1,e_1), ({\rel G}_2,e_2), \dots
  \]
of all pairs where each ${\rel G}_n$ is a finite loopless graph that is not 3-colorable, and $e_n$ is a critical edge of ${\rel G}_n$. From this, we construct inductively a~sequence of loopless graphs ${\rel H}_1$, ${\rel H}_2$, \dots. None of the graphs will be $3$-colorable and all of them will have a~critical edge. We start by setting ${\rel H}_1 := {\rel G}_1$ which clearly satisfies these two requirements. Assume now that we have constructed ${\rel H}_n$, and let $f_n$ be a critical edge of it. We define 
  \[
    {\rel H}_{n+1} := ({\rel H}_n,f_n) \oplus ({\rel G}_{n+1},e_{n+1})\; .
  \]
  We know from Lemma~\ref{lem:2} that ${\rel H}_{n+1}$ is not 3-colorable and loopless, that it contains a critical edge $f_{n+1}$, and moreover that $\Sigma_{{\rel H}_n}$ and $\Sigma_{{\rel G}_{n+1}}$ both imply $\Sigma_{{\rel H}_{n+1}}$. By Lemma~\ref{lem:1.1}, $\Sigma_{{\rel H}_n}$ is non-trivial for all $n \geq 1$.

  It remains to be verified that every non-trivial height 1 condition $\Sigma$ implies $\Sigma_{{\rel H}_n}$ for some $n \geq 1$. Starting with a~non-trivial height 1 condition $\Sigma$, we obtain by Corollary~\ref{cor:1.2} a~graph $\rel G$ that is not $3$-colorable such that $\Sigma$ implies $\Sigma_{\rel G}$. It is thus sufficient to show that $\Sigma_{\rel G}$ implies $\Sigma_{{\rel H}_n}$ for some $n \geq 1$. 
  We distinguish two cases: (1) $\rel G$ contains a~loop, and (2) $\rel G$ is loopless.

  (1) If ${\rel G}$ contains a~loop, then any graph has a homomorphism to it, and hence $\Sigma_{\rel G}$ implies $\Sigma_{{\rel H}_n}$ for all $n\geq 1$ (see Lemma~\ref{lem:3.1}).
  (2) If $\gG$ does not contain a~loop, then we keep removing edges of ${\rel G}$ until we obtain a~graph that is 3-colorable, and we let ${\rel G}'$ be the graph in the second-to-last step; ${\rel G}'$ is not 3-colorable but contains a~critical edge $e'$. Since ${\rel G}'$ maps homomorphically to~${\rel G}$, we have that $\Sigma_{\rel G}$ implies $\Sigma_{\rel G'}$ (Lemma~\ref{lem:3.1}). Finally, there exists $n\geq 1$ such that  $({\rel G}',e')= (\gG_n,e_n)$. By our construction, $\Sigma_{\rel G'}$ implies $\Sigma_{\rel H_n}$ (Lemma~\ref{lem:2}). Thus, $\Sigma_{\rel G}$ also implies $\Sigma_{\rel H_n}$.
  
  We have established that $(\Sigma_{\rel H_1},\Sigma_{\rel H_2},\ldots)$ is a sequence of non-trivial height 1 conditions of decreasing strength with the property that any non-trivial height 1 condition implies one of the conditions of the sequence. It follows from Theorem~\ref{thm:main-1}, which we are going to prove in the next section, that any sequence with these properties must strictly decrease infinitely often.
\end{proof}
\section{There is no weakest height~1 condition}

We now show that there is no weakest height~1 condition, even when restricted to reducts of finitely bounded homogeneous structures in a~finite relational language.  More precisely, for each non-trivial height~1 condition $\Sigma$ there is a~structure $\aA$ such that
\begin{itemize}
\item $\aA$ has a~finite relational language, and is a~reduct of a~finitely bounded homogeneous structure;
\item $\CSP(\aA)$ is in \Ptime;
\item $\Pol(\aA)$ satisfies some non-trivial height~1 condition (or equivalently, does not possess a~minion homomorphism to~$\Proj$);
\item $\Pol(\aA)$ does not satisfy $\Sigma$.
\end{itemize}
It follows that in the dichotomy conjecture for reducts of finitely bounded homogeneous structures, which currently characterizes tractability of a~CSP by the existence of a~pseudo-Siggers polymorphism, the latter cannot be replaced by any height 1 condition.
This is in contrast with the CSP dichotomy for finite structures (Theorem~\ref{thm:bulatov-zhuk}), which does draw the borderline between tractability and hardness by such a~condition, for example, a~Siggers polymorphism.

Our structures will be obtained as universal structures for graphs with forbidden homomorphic images, first constructed by Cherlin, Shelah, and Shi~\cite{CSS99} and later refined by Hubi\v{c}ka and Ne\v{s}et\v{r}il~\cite{HN16}.

\begin{definition}
For a family of $\sigma$-structures $\mathcal G$, we set $\Forb(\mathcal G)$ to be the class of all $\sigma$-structures which do not contain a~homomorphic image of any member of $\mathcal G$.
A countable structure is \emph{universal} for  $\Forb(\mathcal G)$ if it embeds precisely those countable structures which are elements of $\Forb(\mathcal G)$.
\end{definition}

In the following, a~\emph{cut} of a~relational structure $\rel G$ is defined to be a~set of elements of $\rel G$ whose removal disconnects the Gaifman graph of $\rel G$ (the graph with same domain as $\rel G$ and where there is an edge $\{x,y\}$ iff $x$ and $y$ appear together in some tuple of some relation of $\rel G$). The structure $\rel G$ is \emph{connected} if its Gaifman graph is.
A structure $\rel A$ has \emph{no algebraicity} if for all $k\geq 0$ and $a_1,\dots,a_k\in A$, the finite orbits of the stabilizer $\Aut(\rel A,a_1,\dots,a_k)$
are $\{a_1\},\dots,\{a_k\}$.

\begin{theorem}[\cite{CSS99}, {Corollary of \cite[Theorem 3.3]{HN16}}]\label{thm:hubicka-nesetril}
  Let $\mathcal G$ be a finite family of finite connected structures. There exists a~countable $\omega$-categorical structure $\CSS({\mathcal G})$ with the following properties:
  \begin{itemize}
    \item $\CSS({\mathcal G})$ is universal for $\Forb(\mathcal G)$,
    \item $\CSS({\mathcal G})$ has no algebraicity,
    \item there exists a~homogeneous expansion of\/ $\CSS({\mathcal G})$ by finitely many pp-definable relations whose arities are the size of the minimal cuts of the structures in $\mathcal G$.
    Moreover, this expansion is finitely bounded.
  \end{itemize}
\end{theorem}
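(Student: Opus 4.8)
The plan is to obtain $\CSS(\mathcal{G})$ as a first-order reduct of a Fra\"iss\'e limit living in a finite enrichment of the signature $\sigma$. Let $\mathcal{C}$ be the class of all finite $\sigma$-structures lying in $\Forb(\mathcal{G})$. This class is hereditary and has the joint embedding property, but in general fails amalgamation: an amalgam of two structures over a common substructure may create a new connected configuration straddling both sides which is a homomorphic image of some $G\in\mathcal{G}$. The crucial point is that such a straddling copy meets the common part in a cut of $G$, and a minimal one in a minimal cut. Thus, for each $G\in\mathcal{G}$, each minimal cut $C$ of $G$, and each way of partitioning $G-C$ into two nonempty unions of connected components, I introduce a new relation symbol of arity $|C|$, intended to flag those tuples $\bar a$ that are the image of $C$ (in a fixed ordering) under a homomorphism from the induced substructure of $G$ on the union of $C$ and one of the two parts. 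Since $\mathcal{G}$ is finite and each member has finitely many minimal cuts, this enrichment is finite, and each new relation is pp-definable over $\sigma$ by the existential conjunction of atoms describing the corresponding fragment of $G$.

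Next I let $\mathcal{C}^{+}$ be the class of finite structures in the enriched signature obtained by forbidding, as substructures, the following finitely many bounded configurations: a homomorphic image of some $G\in\mathcal{G}$; a tuple carrying flags coming from two complementary fragments of the same $G$; and a tuple carrying a flag together with an actual homomorphic copy, in the $\sigma$-part, of the complementary fragment over that tuple (either of the last two would let one glue a copy of $G$). The first task is to check that $\mathcal{C}^{+}$ is a strong amalgamation class: in a free amalgam of $\rel B^{+}$ and $\rel C^{+}$ over a common substructure $\rel D^{+}$, any newly created $G$-image crosses $\rel D^{+}$ at a minimal cut $C$, so its two fragments sit inside $\rel B^{+}$ and $\rel C^{+}$ over a common tuple of $\rel D^{+}$; by construction those fragments force complementary flags on that tuple already in $\rel D^{+}$, contradicting the forbidden patterns. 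Taking the Fra\"iss\'e limit $\rel A^{+}$ of $\mathcal{C}^{+}$ and letting $\CSS(\mathcal{G})$ be its $\sigma$-reduct, the expansion $\rel A^{+}$ is homogeneous and finitely bounded by the very definition of $\mathcal{C}^{+}$; it is $\omega$-categorical because it is homogeneous in a finite relational signature, hence so is the reduct $\CSS(\mathcal{G})$, and no algebraicity follows from strong amalgamation of $\mathcal{C}^{+}$ and is inherited by the reduct.

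For universality, one direction is that $\CSS(\mathcal{G})\in\Forb(\mathcal{G})$: by heredity, if some $G\in\mathcal{G}$ mapped homomorphically into $\CSS(\mathcal{G})$ the image would live in a finite substructure, which is the $\sigma$-reduct of a member of $\mathcal{C}^{+}$ and hence forbids homomorphic images of $G$. Conversely, given any countable $X\in\Forb(\mathcal{G})$, expand it to $X^{+}$ by flagging exactly the tuples prescribed by the pp-definitions of the new relations; since $X$ is $\mathcal{G}$-free, $X^{+}$ contains none of the forbidden patterns, so each of its finite substructures lies in $\mathcal{C}^{+}$, whence $X^{+}$ embeds into $\rel A^{+}$ by universality of the Fra\"iss\'e limit and $X$ embeds into $\CSS(\mathcal{G})$. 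Finally, checking that the new relations interpreted in $\rel A^{+}$ coincide with the pp-definable relations of $\CSS(\mathcal{G})$ given by the fragment formulas (one inclusion from the ``soundness'' built into the forbidden patterns, the other from the saturation of the limit together with universality) shows that the expansion is by pp-definable relations, of the announced arities.

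The main obstacle is the \emph{completion lemma} underlying both finite boundedness and the ``if'' direction of universality: one must show that any finite enriched structure avoiding all the forbidden patterns can be completed, by adding further flags and, if necessary, new vertices within a bounded radius, to a member of $\mathcal{C}^{+}$; equivalently, that being locally $\mathcal{G}$-free and locally flag-consistent already guarantees membership. This local-to-global step is the technical heart of the Hubi\v{c}ka--Ne\v{s}et\v{r}il machinery. It is precisely here that connectivity of the members of $\mathcal{G}$ and the restriction to \emph{minimal} cuts are indispensable: they bound the size of the configurations one has to inspect, and hence keep both the list of obstructions and the set of new relations finite.
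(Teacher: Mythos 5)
The paper does not prove Theorem~\ref{thm:hubicka-nesetril}; it cites it from \cite{CSS99} and \cite[Theorem 3.3]{HN16}, so there is no internal argument to compare against. Your sketch reproduces the overall architecture of the Hubi\v{c}ka--Ne\v{s}et\v{r}il approach: enrich the signature by pp-definable flag relations indexed by minimal cuts of the forbidden structures and the fragments they bound, forbid finitely many locally checkable conflicting patterns, verify strong amalgamation, and recover $\CSS(\mathcal G)$ as the $\sigma$-reduct of the Fra\"iss\'e limit. The arities, pp-definability, non-algebraicity via strong amalgamation, and finite boundedness all sit in the right places conceptually.

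Two gaps remain, however. First, the strong amalgamation step asserts without argument that a newly created $G$-image in $B+_D C$ ``crosses $D^{+}$ at a minimal cut.'' What one actually knows is that $h^{-1}(D)$ separates $G$; it is in general not an inclusion-minimal cut, $h$ need not be injective on it, and components of $G$ away from a chosen minimal cut may still meet $D$. Reducing this to a conflicting pair of flags on a single tuple of $D$, of arity the size of a minimal cut, is precisely the combinatorial work, and it feeds back into the design of the forbidden patterns: as listed, your patterns never force a flag onto a tuple that actually supports a fragment in the $\sigma$-part, so $D$ could fail to carry the conflicting flags even when $B$ and $C$ do carry the two fragments. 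Second, you correctly identify the completion lemma as ``the technical heart'' but leave it unproved; it is essentially the content of \cite[Theorem 3.3]{HN16}, and the ``if'' direction of universality, finite boundedness, and the agreement of the flags with their pp-definitions on the limit (which you attribute to ``saturation'') all depend on it. You have located where the proof lives, but you have not supplied it.
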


We simply write $\CSS(\gG)$ when $\mathcal G$ consists of a single structure $\gG$.
Note that a~consequence of the third item in Theorem~\ref{thm:hubicka-nesetril} is that $\CSS(\gG)$ resides within the scope of the infinite-domain CSP dichotomy conjecture.

\begin{definition}\label{def:qnu}
  We say that a~function $f\colon A^n \to A$ is a~\emph{quasi near unanimity operation} if it satisfies the identities 
  \[
    f(y,x,\dots,x) \equals f(x,y,x,\dots,x) \equals \dots
      \equals f(x,\dots,x,y) \equals f(x,\dots,x) \; ,
    \label{eq:qnu}
  \]
  i.e., if it takes the same value on all tuples that consist of a~single value $x\in A$ with at most one exception.
\end{definition}
Note that every near unanimity operation is an idempotent quasi near unanimity operation.  Also note that the identities in Definition~\ref{def:qnu} constitute a~non-trivial height~1 condition.

\begin{lemma}\label{lem:fg}
  Let $\gG$ be a~finite connected graph which is not 3-colorable, and let $\hH$ be universal for $\Forb(\gG)$. Then:
\begin{itemize}
\item $\Pol(\hH)$ does not satisfy $\Sigma_{\gG}$;
\item $\Pol(\hH)$ has quasi near unanimity polymorphisms of all arities larger than the number of edges of $\gG$.
\end{itemize}  
\end{lemma}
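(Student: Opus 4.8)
The plan is to prove the two items separately, using the universality of $\hH$ in both cases. For the first item, suppose for contradiction that $\Pol(\hH)$ satisfies $\Sigma_{\gG}$. The structure $\hH$ is universal for $\Forb(\gG)$, and $\gG$ is not $3$-colorable; since $\rel K_3$ is a finite graph which contains no homomorphic image of $\gG$ (a homomorphism $\gG\to\rel K_3$ would be a $3$-coloring), we have $\rel K_3\in\Forb(\gG)$, so $\rel K_3$ embeds into $\hH$. Thus $\hH$ contains a copy of $\rel K_3$ and its polymorphisms satisfy $\Sigma_{\gG}$, so by Lemma~\ref{lem:sigma_g} the graph $\gG$ maps homomorphically to $\hH$. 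But $\hH\in\Forb(\gG)$ means $\hH$ contains no homomorphic image of $\gG$, a contradiction. Hence $\Pol(\hH)$ does not satisfy $\Sigma_{\gG}$.

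For the second item, let $k$ be the number of edges of $\gG$ and fix $n>k$; I want to produce a quasi near unanimity polymorphism $f\colon \hH^n\to\hH$. By universality of $\hH$ it suffices to exhibit a homomorphism from $\hH^n$ to some graph in $\Forb(\gG)$ which is ``symmetric enough'' — more precisely, to find a graph $\mathbf M\in\Forb(\gG)$, a homomorphism $\hH\to\mathbf M$, and a homomorphism $\hH^n\to\mathbf M$ that agrees with a qnu pattern, so that composing with the embedding of a universal copy of $\mathbf M$ back into $\hH$ yields the desired polymorphism. The natural candidate for $\mathbf M$ is $\hH^n$ itself together with the diagonal-type map: define $f$ on $n$-tuples $(a_1,\dots,a_n)\in H^n$ by letting $f$ send any tuple that is constant with at most one exception to that repeated value, and extend arbitrarily as a homomorphism on the rest. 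The point is that the ``exceptional-coordinate'' tuples span a subgraph of $\hH^n$ that is the union of $n$ copies of $\hH$ glued along a diagonal copy of $\hH$, and one checks this still omits homomorphic images of $\gG$; the condition $n>k$ is exactly what makes a purported homomorphic image of $\gG$ land inside one of the $n$ blocks (its $k$ edges cannot spread across all $n$ coordinates in a way that stays in the qnu-constrained set), reducing the problem to $\hH\in\Forb(\gG)$.

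Concretely, I would argue as follows: let $S\subseteq H^n$ be the set of tuples with at most one coordinate deviating from a common value. The induced subgraph $\hH^n[S]$ is homomorphically equivalent to $\hH$ — the constant map onto the diagonal and the ``project to the majority value'' map witness this, using connectedness of $\gG$ to control where an edge of a homomorphic image can go. Since $\hH\in\Forb(\gG)$, also $\hH^n[S]\in\Forb(\gG)$, hence $\hH^n[S]$ embeds into $\hH$ by universality; call this embedding $e$. Now I need to extend $e$ from $S$ to all of $H^n$ as a homomorphism $\hH^n\to\hH$. This is where connectedness and the bound $n>k$ enter again: one shows that the identity $\hH^n[S]\hookrightarrow\hH^n$ together with $e$ can be amalgamated, i.e.\ $\hH^n$ itself maps homomorphically to $\hH$ extending $e$, because $\hH^n\in\Forb(\gG)$ (a homomorphic image of the connected graph $\gG$ in $\hH^n$ projects to a homomorphic image of $\gG$ in $\hH$ in every coordinate, impossible) and $\hH$ is universal, so $\hH^n$ already embeds into $\hH$; the extension of $e$ specifically is then obtained by a back-and-forth/homogeneity argument on the homogeneous finitely bounded expansion supplied by Theorem~\ref{thm:hubicka-nesetril}.

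The main obstacle I anticipate is precisely this last extension step: it is easy to get a qnu map on the ``diagonal-ish'' part $S$, and easy to see $\hH^n$ embeds into $\hH$ abstractly, but gluing these so that the global polymorphism restricts to the prescribed qnu behaviour on $S$ requires an amalgamation argument, and this is where the hypothesis that $\gG$ is \emph{connected} and that $n$ exceeds the number of edges of $\gG$ must be used carefully — the connectedness guarantees that forbidden homomorphic images cannot be ``split'' across the coordinates of the power, and the edge bound guarantees that the qnu-constrained region $S$ is large enough that its complement cannot by itself create a copy of $\gG$. I would expect the clean way to package all of this is to verify directly that the relevant amalgam of $\hH^n$-data lies in $\Forb(\gG)$ and then invoke universality/homogeneity once.
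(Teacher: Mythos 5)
Your proof of the first item is exactly the paper's: $\rel K_3\in\Forb(\gG)$ because $\gG$ is not $3$-colorable, so $\rel K_3$ embeds into $\hH$ by universality, and then Lemma~\ref{lem:sigma_g} yields a homomorphism $\gG\to\hH$, contradicting $\hH\in\Forb(\gG)$. No issues there.

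The second item has a genuine gap, stemming from a conflation of two incompatible maps. You first say that $f$ should send every almost-constant tuple to its repeated value (a non-injective map that collapses each ``almost-constant'' cluster to a point), and you observe that ``project to the majority value'' is a homomorphism $\hH^n[S]\to\hH$. But then you abruptly replace this map by ``an embedding $e$ of $\hH^n[S]$ into $\hH$'' obtained from universality. An embedding is injective, so $e(x,\dots,x,y)\neq e(x,\dots,x)$, and therefore $e$ cannot satisfy the quasi near unanimity identities; extending $e$ to $\hH^n$ cannot possibly produce a qnu polymorphism, no matter how cleverly the amalgamation is arranged. The subsequent amalgamation/back-and-forth step is thus aimed at the wrong target and, moreover, is itself not justified: there is no reason a partial homomorphism on $S$ should extend to all of $\hH^n$ just because $\hH^n$ embeds into $\hH$; homogeneity gives extensions of \emph{isomorphisms between finite substructures}, which is not what you have.

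The clean fix is to avoid any extension problem entirely by passing to a quotient, which is what the paper does. Form the \emph{indicator structure} $\hH^n/{\sim}$ where $\sim$ identifies each almost-constant tuple with the constant tuple on its majority value, and declare two classes adjacent if they have representatives that are adjacent componentwise in $\hH^n$. The quotient map $q\colon\hH^n\to\hH^n/{\sim}$ is a surjective homomorphism already defined on all of $\hH^n$, and it satisfies the qnu identities by construction. It then suffices to show $\hH^n/{\sim}\in\Forb(\gG)$ and compose $q$ with an embedding $\hH^n/{\sim}\hookrightarrow\hH$ given by universality. For the forbidden-image check one argues as follows: given a purported homomorphism $h\colon\gG\to\hH^n/{\sim}$, choose for each vertex $u$ of $\gG$ a representative $(u_1,\dots,u_n)$ of $h(u)$, taking the constant representative when one exists. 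For each edge $(u,v)$ of $\gG$, the tuple $(u_i,v_i)$ fails to be an edge of $\hH$ in \emph{at most one} coordinate $i$ (this is where you case-split on whether $h(u)$, $h(v)$ are almost-constant classes). Since $\gG$ has fewer than $n$ edges, by pigeonhole some coordinate $i$ works for \emph{all} edges simultaneously, and $u\mapsto u_i$ is a homomorphism $\gG\to\hH$, contradicting $\hH\in\Forb(\gG)$. This is essentially the pigeonhole idea you gesture at with ``the $k$ edges cannot spread across all $n$ coordinates,'' but carried out on the quotient rather than on an embedded copy of $S$, so that the resulting map is globally defined and genuinely satisfies the qnu identities.
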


\begin{proof}
First, we prove that $\Pol(\hH)$ does not satisfy $\Sigma_{\gG}$. Observe that $\hH$ contains an isomorphic copy of $\rel K_3$; on the other hand, it does not contain a~homomorphic image of $\gG$. The latter is clear from the definition, and the former follows from the assumption that $\gG$ is not $3$-colorable, which implies that there is no homomorphism from $\gG$ to $\rel K_3$, hence $\rel K_3$ embeds into $\hH$ by universality. The claim then follows from Lemma~\ref{lem:sigma_g}.

For the second claim, let $n$ be larger than the number of edges of $\gG$.  To show that $\Pol(\hH)$ contains a~quasi near unanimity operation of arity $n$, we use the \emph{indicator structure} for this condition. It is obtained by factoring the $n$-th Cartesian power $\hH^n$ of $\hH$ by the equivalence relation $\sim$ which identifies all sets of tuples of the form 
$$
\{(x,\dots,x,y),\dots,(y,x,\dots,x), (x,\ldots,x)\}\; .
$$ 
There is an edge in $\hH^n{/}{\sim}$ between two equivalence classes $A$ and $B$ if and only if there exist $(u_1,\dots,u_n)\in A$ and $(v_1,\dots,v_n)\in B$ such that $(u_i,v_i)$ is an edge in $\hH$ for all $1\leq i\leq n$.
We now argue that the graph $\hH^n{/}{\sim}$ thus obtained is an element of $\Forb(\gG)$, since if that is the case, then it embeds into $\hH$ by universality. This embedding provides the requested quasi near unanimity polymorphism of $\hH$ by composing it with the factor map from $\hH^n$ to $\hH^n{/}{\sim}$.

Assume for contradiction that there exists a~homomorphism $h\colon \gG\to \hH^n{/}{\sim}$. Let us call $n$-tuples which are constant except for at most one value \emph{almost constant}. These are precisely the tuples whose equivalence class with respect to $\sim$ consists of more than one element, or equivalently, contains a~constant tuple. When $u$ is a~vertex of $\gG$, then we write $(u_1,\ldots,u_n)$ for the representative of the equivalence class $h(u)$ which is constant, when $h(u)$ contains such a~representative, and which is the only representative of its class otherwise. Observe that if $(u,v)$ is an edge of $\gG$ and
  \begin{itemize}
    \item $h(u)$ and $h(v)$ are both not almost constant, then $(u_i,v_i)$ is an edge of $\hH$ for all $1\leq i\leq n$;
    \item $h(u)$ is not almost constant and $h(v)$ is, then $(u_i,v_i)$ is an edge of $\hH$ for all but at most one $1\leq i\leq n$;
    \item $h(u)$ and $h(v)$ are both almost constant, then $(u_i,v_i)$ is an edge of $\hH$ for all $1\leq i\leq n$.
  \end{itemize}
Therefore, there exists $1\leq i\leq n$ such that $(u_i,v_i)$ is an edge of $\hH$ for all edges $(u,v)$ of $\gG$. But then the mapping which sends every $u\in\gG$ to $u_i$ is a~homomorphism from $\gG$ into $\hH$, a~contradiction.
\end{proof}

\begin{lemma}\label{lem:forbCSP}
  Let $\gG$ be a~finite graph, and let $\hH$ be universal for $\Forb(\gG)$. Then $\CSP(\hH)$ is solvable in polynomial time.
\end{lemma}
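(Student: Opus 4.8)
The plan is to reduce $\CSP(\hH)$ to a trivially polynomial problem by exploiting universality. Recall that an instance of $\CSP(\hH)$ may be taken to be a finite $\{E\}$-structure (a graph, possibly with loops) $\rel I$, the question being whether $\rel I\to\hH$. I would show that this holds if and only if $\gG$ does \emph{not} map homomorphically into $\rel I$, a condition that is decidable in polynomial time precisely because $\gG$ is a fixed finite graph.

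Concretely, the central claim to establish is the equivalence
\[
  \rel I \to \hH \qquad\Longleftrightarrow\qquad \gG \not\to \rel I .
\]
For the forward direction I would first note that $\hH$ itself belongs to $\Forb(\gG)$: the identity is an embedding $\hH\hookrightarrow\hH$, and by universality the countable structures embedding into $\hH$ are exactly the countable members of $\Forb(\gG)$; hence there is no homomorphism $\gG\to\hH$. Consequently, if both $\gG\to\rel I$ and $\rel I\to\hH$ held, composition would produce $\gG\to\hH$, a contradiction, so $\gG\not\to\rel I$. For the converse, if $\gG\not\to\rel I$ then $\rel I\in\Forb(\gG)$, and since $\rel I$ is finite (hence countable), universality of $\hH$ yields an embedding $\rel I\hookrightarrow\hH$, in particular a homomorphism $\rel I\to\hH$.

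It then remains to observe that ``$\gG\not\to\rel I$'' is testable in polynomial time. A homomorphism $\gG\to\rel I$ is a map from the fixed set $V(\gG)$ to $V(\rel I)$, so there are at most $|V(\rel I)|^{|V(\gG)|}$ candidates, each checkable against the at most $|E(\gG)|$ edges of $\gG$ in polynomial time; since $|V(\gG)|$ is a constant, this is a polynomial-time procedure, and so is its negation. Composed with the standard logarithmic-space translation of a primitive positive input sentence into its canonical database $\rel I$, this gives a polynomial-time algorithm for $\CSP(\hH)$.

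I do not expect a real obstacle here; the only points demanding care are the correct reading of the definitions — that membership in $\Forb(\gG)$ means the \emph{absence} of a homomorphism from $\gG$, and that universality is a biconditional, so that $\hH\in\Forb(\gG)$ — after which both directions of the displayed equivalence are immediate. The substance of this section lies in Lemma~\ref{lem:fg}; the present tractability statement is essentially a formality.
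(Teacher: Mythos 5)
Your proof is correct and follows essentially the same approach as the paper: both reduce $\CSP(\hH)$ to the polynomial-time question of whether the fixed graph $\gG$ maps homomorphically into the input instance, with the bound $|\rel I|^{|\gG|}$ on candidate maps. You simply spell out in more detail the two directions of the equivalence $\rel I\to\hH\Leftrightarrow\gG\not\to\rel I$ (including the small observation that $\hH\in\Forb(\gG)$ by universality), which the paper treats as obvious.
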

\begin{proof}
This is obvious, as $\CSP(\hH)$ corresponds to the problem of determining whether there exists a~homomorphism from $\gG$ (which is fixed)
to an input graph $\hH'$, and there are at most $|\hH'|^{|\gG|}$ such homomorphisms.
\end{proof}

We finally prove Theorem~\ref{thm:main-1} stated in the introduction.

\begin{proof}
Let $\Sigma$ be a~non-trivial height 1 condition. By Corollary~\ref{cor:1.2} there exists a~non 3-colorable graph $\gG$ such that $\Sigma_\gG$ is weaker than $\Sigma$.  If $\gG$ is not connected, then one of its connected components $\cC$ is non 3-colorable.  The height~1 condition $\Sigma_{\cC}$ is non-trivial by Lemma~\ref{lem:1.1}, and clearly weaker than $\Sigma_\gG$. By Lemma~\ref{lem:fg}, $\Pol(\CSS(\cC))$ does not satisfy $\Sigma_\cC$ but has a~quasi near unanimity operation of sufficiently large arity.
In particular, $\Pol(\CSS(\cC))$ does not satisfy $\Sigma$, but satisfies some non-trivial height 1 condition.
\end{proof}

\section{Topology is relevant}
  \label{sec:topology}

The original CSP dichotomy conjecture for reducts of finitely bounded homogeneous structures due to Bodirsky and Pinsker (see~\cite{BPP-projective-homomorphisms}) claims the following borderline between tractability and hardness.

\begin{conjecture}\label{conj:tract}
Let $\aA$ be a~reduct of a~finitely bounded homogeneous structure. Exactly one of the following holds:
\begin{enumerate}
  \item some stabilizer of the polymorphism clone of its model-complete core possesses a~continuous clone homomorphism to $\Proj$, and $\CSP(\aA)$ is \NP-complete,
  \item no stabilizer of the polymorphism clone of its model-complete core possesses a~continuous clone homomorphism to $\Proj$, and $\CSP(\aA)$ is in \Ptime.
\end{enumerate}
\end{conjecture}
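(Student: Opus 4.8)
The statement is the original Bodirsky--Pinsker formulation of the infinite-domain tractability conjecture; it is wide open, so what follows is the shape of the program one would carry out rather than a proof. The first move is to pass to the model-complete core: by Theorem~\ref{thm:existence-mc-core}, $\aA$ is homomorphically equivalent to a unique $\omega$-categorical model-complete core $\bB$, and homomorphically equivalent structures define the same CSP, so it suffices to describe the complexity of $\CSP(\bB)$ in terms of $\Pol(\bB)$ and its stabilizers. Since $\aA$ is a reduct of a finitely bounded homogeneous structure, $\CSP(\aA)=\CSP(\bB)$ lies in $\NP$ (a positive instance has a solution realised inside a finite substructure of the homogeneous cover of $\aA$, which can be guessed and verified against the finitely many forbidden patterns). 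Finally, the two algebraic alternatives in the statement are literally complementary, so ``exactly one holds'' is automatic on the algebraic side; the content is to match each alternative with the correct complexity class, and, granting $\Ptime\neq\NP$, the two classes cannot overlap once both implications are established.

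For the hardness direction, suppose some stabilizer of $\Pol(\bB)$ possesses a continuous clone homomorphism to $\Proj$. By the equivalence between Conjecture~\ref{conj:tract} and Conjecture~\ref{conj:tract2} proved in~\cite{BKOPP17a,BKOPP17}, this is equivalent to $\Pol(\bB)$ admitting a uniformly continuous minion homomorphism to $\Proj$, and by~\cite{BOP18} the latter forces $\CSP(\bB)$, hence $\CSP(\aA)$, to be $\NP$-complete. Concretely, this passes through a primitive positive construction of $\rel K_3$ in $\bB$ expanded by finitely many constants --- i.e.\ a polynomial-time reduction from graph $3$-colouring --- the constants being harmless because $\bB$ is a core. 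This half of the conjecture is already known.

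For the tractability direction --- the crux --- assume that no stabilizer of $\Pol(\bB)$ has a continuous clone homomorphism to $\Proj$. By the same equivalence~\cite{BKOPP17a,BKOPP17}, $\Pol(\bB)$ has no uniformly continuous minion homomorphism to $\Proj$; by the local/global dictionary recalled in the preliminaries, $\Pol(\bB)$ then satisfies some non-trivial height~1 condition on every finite subset of $B$, and, using the results behind that equivalence together with the Siggers-type reductions of Section~\ref{sec:sigma-g}, this upgrades to a single pseudo-Siggers polymorphism $s$ of $\bB$. The plan is to use $s$ to run an infinite-domain analogue of the Bulatov--Zhuk algorithm. Here $\omega$-categoricity is what makes the undertaking finitary: for each $k$ there are only finitely many orbits of $k$-tuples and only finitely many pp-definable constraints visible to the polymorphisms, so one can maintain a constraint network of sets of orbits, propagate it by applying the polymorphisms of $\bB$ (a local-consistency procedure refined by the algebraic structure), and, once the network stabilises, either certify unsatisfiability or recurse on strictly smaller sub-instances obtained by splitting along absorbing, central, or affine substructures --- all notions with genuine finitary meaning in the $\omega$-categorical setting. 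Ensuring that the recursive subcalls remain within the class of reducts of finitely bounded homogeneous structures, so that the induction hypothesis applies, is part of the bookkeeping.

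The main obstacle is precisely this last step: no general polynomial-time algorithm is currently known, and essentially the entire difficulty of the conjecture lives there. The existing partial results --- MMSNP via a reduction to finite-domain CSP, the temporal CSP classification of Bodirsky and K\'ara, Schaefer-type languages, phylogeny and poset CSPs, equality CSPs --- each exploit concrete structural features of the template to tame the infinite domain; a uniform proof would have to extract, from a bare pseudo-Siggers operation together with $\omega$-categoricity and finite boundedness, enough to drive Zhuk-style reasoning (absorption, centrality, the linear/affine case split). A secondary but genuinely non-trivial point --- the one the title of this paper warns about --- is the passage from ``no stabilizer admits a continuous homomorphism to $\Proj$'' to an honest pseudo-Siggers operation of the whole clone: this is the topological content packaged into the equivalence results invoked above, and the negative results proved here (Theorems~\ref{thm:main-1} and~\ref{thm:main-3}) show how delicate the surrounding landscape is.
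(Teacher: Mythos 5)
This statement is Conjecture~\ref{conj:tract}, the original Bodirsky--Pinsker dichotomy conjecture. The paper does not prove it --- it is stated as a conjecture and remains one of the central open problems of the area --- so there is no proof in the paper against which to compare your attempt, and you correctly recognise this. Your account of what \emph{is} known is essentially accurate: membership of $\CSP(\aA)$ in \NP{} via finite boundedness; the reduction to the model-complete core via Theorem~\ref{thm:existence-mc-core}; the hardness half via the equivalence of Conjectures~\ref{conj:tract} and~\ref{conj:tract2} (\cite{BKOPP17a,BKOPP17}) together with the \NP-hardness criterion of~\cite{BOP18}; and the passage from the stabilizer condition to a pseudo-Siggers polymorphism, which is the Barto--Pinsker ``topology is irrelevant'' theorem for clone homomorphisms (\cite{BartoPinskerDichotomy,BP18}), not the $\Sigma_{\rel G}$ machinery of Section~\ref{sec:sigma-g}, which serves a different purpose in this paper.

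The genuine gap is the one you name yourself: the entire tractability direction. Nothing in the literature, and nothing in your sketch, extracts a polynomial-time algorithm from a pseudo-Siggers polymorphism together with $\omega$-categoricity and finite boundedness; the proposed ``infinite-domain Bulatov--Zhuk'' recursion is a research programme, not an argument, and the notions of absorption, centrality, and the affine case split do not currently have workable infinite-domain analogues at this level of generality. It is also worth stressing that the present paper's contribution is orthogonal to, and partly cautionary about, your proposed route: Theorem~\ref{thm:main-1} shows the tractability condition cannot be captured by any fixed height~1 condition, and Theorem~\ref{thm:main-3} shows that for minion homomorphisms (unlike clone homomorphisms) topology cannot be discarded in the slightly larger class of $\omega$-categorical structures with slow orbit growth. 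So any eventual proof of the conjecture must genuinely engage with the local/global and topological subtleties rather than reduce them away.
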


Barto and Pinsker showed in~\cite{BartoPinskerDichotomy, BP18} that topology was irrelevant in this conjectured borderline, since the word `continuous' can simply be dropped without changing the conjecture. More precisely, when $\aA$ is any $\omega$-categorical structure with model-complete core $\bB$, then some stabilizer of $\Pol(\bB)$ possesses a~clone homomorphism to $\Proj$ if and only if some stabilizer of $\Pol(\bB)$ possesses a~continuous such homomorphism, and this is witnessed by the non-satisfaction of the pseudo-Siggers identity in $\Pol(\bB)$.

Following the discovery of the importance of minion homomorphisms for the complexity of CSPs in~\cite{BOP18}, it was then shown that whenever $\aA$ is any $\omega$-categorical structure with less than double exponential orbit growth (a condition satisfied in particular by all structures in the range of the conjecture), then the above hardness criterion is equivalent to the existence of a~uniformly continuous minor preserving map from $\Pol(\aA)$ to $\Proj$~\cite{BKOPP17, BKOPP17a}.

Naturally, the question of whether topology was irrelevant also for minion homomorphisms was raised in this context~\cite{BartoPinskerDichotomy, BOP18, BP18}, in particular for $\omega$-categorical structures with less than double exponential orbit growth.

\begin{question} \label{quest:topo}
Let $\aA$ be an $\omega$-categorical structure.
If there exists a~minion homomorphism $\Pol(\aA)\to\Proj$, does there exist a~uniformly continuous one?
\end{question}

While a~positive answer was obtained in some special cases~\cite{BKOPP17, BKOPP17a}, we are going to provide a~negative answer to the question in general.
The remainder of this section will be devoted to the construction of the structure $\mathbb S$ and the verification of the properties claimed in Theorem~\ref{thm:main-3}.

\subsection{Encoding graphs in higher arities}

Our first step will be a~standard construction which allows us to encode graphs as structures on $n$-tuples, for arbitrary $n\geq 1$.

\begin{lemma}\label{lem:S}
  Let $\gG$ be a~finite connected loopless graph and $n \geq 1$. Then there exists a~structure $\sS(\gG,n)$ with a~single relation $R$ of arity $2n$ such that
  \begin{enumerate}
    \item The expansion $(\sS(\gG,n),\neq)$ of $\sS(\gG,n)$ by the inequality relation is an $\omega$-categorical model-complete core without algebraicity;
    \item $\sS(\gG,n)$ pp-constructs the Cherlin-Shelah-Shi structure $\CSS(\gG)$;
    \item The relation $R$ of\/ $\sS(\gG,n)$ only contains tuples with pairwise distinct entries;
    \item $\Aut(\sS(\gG,n))$ has for every $k \geq 2$ at most $3^{k^{n|\gG|}}$ orbits of $k$-tuples.
  \end{enumerate}
\end{lemma}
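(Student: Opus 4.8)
The plan is to realise $\sS(\gG,n)$ on the domain of the Cherlin--Shelah--Shi structure $\hH := \CSS(\gG)$ provided by Theorem~\ref{thm:hubicka-nesetril}, encoding the graph $\hH$ --- more precisely, enough of the homogeneous expansion $\hH^{+}$ of $\hH$ from that theorem --- into a single relation $R$ of arity $2n$ all of whose tuples have pairwise distinct entries. Informally, a $2n$-tuple $\bar a$ of pairwise distinct elements is placed in $R$ exactly when a fixed ``$n$-block versus $n$-block'' pattern holds, in which the two leading coordinates carry the relevant relational information about $\hH^{+}$ and the remaining $2n-2$ coordinates serve only as pairwise distinct padding elements that must be present. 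The encoding is to be designed so that (i) $R$ is first-order definable from $\hH^{+}$ and, conversely, enough of $\hH^{+}$ is recovered from $R$ together with equality that this first-order interdefinability faithfully transports automorphisms, self-embeddings and substructures; and (ii) the edge relation of $\hH$ is primitively positively definable from $R$ alone, by an existentially quantified conjunction of (possibly nested) $R$-atoms whose quantified variables range over the padding slots. Property~(3) is then immediate from the distinctness built into $R$, and property~(2) follows because (ii) presents $\CSS(\gG)=\hH$ literally as a one-dimensional pp-power of $\sS(\gG,n)$; here the existential witnesses are always available since $\hH$ is infinite and has no algebraicity.

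For property~(1): by (i), $\sS(\gG,n)$ has the same automorphism group as $\hH^{+}$, which transports $\omega$-categoricity from $\hH^{+}$ to $(\sS(\gG,n),\neq)$, and the self-embeddings of $(\sS(\gG,n),\neq)$ are precisely those of $\hH^{+}$. Since $\hH^{+}$ is homogeneous with a finite relational signature it is a model-complete core, so every self-embedding of $(\sS(\gG,n),\neq)$ agrees on each finite set with some $\alpha\in\Aut(\hH^{+})=\Aut(\sS(\gG,n))$, i.e.\ $(\sS(\gG,n),\neq)$ is a model-complete core. No algebraicity of $\sS(\gG,n)$ is inherited from $\hH$ via Theorem~\ref{thm:hubicka-nesetril}: the stabilizers $\Aut(\sS(\gG,n),a_1,\dots,a_k)$ and $\Aut(\hH,a_1,\dots,a_k)$ coincide, and the latter has only the singletons $\{a_1\},\dots,\{a_k\}$ as finite orbits.

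Property~(4) is a counting argument: since $\Aut(\sS(\gG,n))$ contains $\Aut(\hH^{+})$, its orbits on $k$-tuples are no more numerous than the isomorphism types of $k$-element substructures of $\hH^{+}$, and the number of those is at most $3^{k^{n|\gG|}}$ --- a routine estimate, using that $\hH^{+}$ has a finite relational signature consisting of the binary edge together with finitely many relations of arity at most $|\gG|$ (the sizes of the minimal cuts of $\gG$), and that $n\geq 1$.

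The step I expect to be the main obstacle is the construction and verification of $R$: it must simultaneously consist only of distinct-entry tuples, carry enough of $\hH^{+}$ that the first-order interdefinability of (i) faithfully transports self-embeddings and substructures --- this is exactly what makes $(\sS(\gG,n),\neq)$ a model-complete core, and it is sensitive to which relations and which ``patterns'' are used --- and still admit the purely positive existential read-off of (ii) needed to pp-construct $\CSS(\gG)$, which is what will later allow transferring the non-satisfaction of $\Sigma_{\gG}$ from $\Pol(\CSS(\gG))$ to $\Pol(\sS(\gG,n))$. Reconciling (i), which wants $R$ to retain all the structure of $\hH^{+}$, with (ii), which wants a convenient positive existential description of a single relation, and then checking the orbit-count numerology, is where the work lies; everything else reduces to standard facts about $\omega$-categorical homogeneous structures and about pp-powers.
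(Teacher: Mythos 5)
Your proposal departs genuinely from the paper's construction, and the departure is where it breaks down. The paper does not realize $\sS(\gG,n)$ on the domain of $\CSS(\gG)$ at all. Instead it forms a ``blown-up'' structure $\gG'$ on $n|\gG|$ fresh elements (each vertex of $\gG$ replaced by an $n$-tuple, edges replaced by $R$-tuples on the concatenation), takes the $\CSS$ structure $\rel F'$ for $\mathcal G=\{\gG'\}\cup\{\text{connected loop-like structures on }<2n\text{ elements}\}$, and then passes to the model-complete core of $(\rel F',\neq)$. The relation $R$ automatically consists of distinct-entry tuples because the loop-like obstructions were forbidden, and $\CSS(\gG)$ is recovered as an $n$-dimensional pp-power, via the graph on $n$-tuples, rather than a one-dimensional one as in your sketch.

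The concrete gap in your plan is the step you yourself flag as the ``main obstacle'': you never specify $R$, and the properties you need are in tension. For the model-complete-core claim you argue that self-embeddings of $(\sS(\gG,n),\neq)$ coincide with self-embeddings of $\hH^+$; but embeddings only preserve and reflect the quantifier-free theory in the signature $\{R,\neq\}$, whereas the relations of $\hH^{+}$ are a priori only first-order definable from $R$ and $\neq$. ``First-order interdefinability faithfully transports self-embeddings'' is exactly what fails in general. Moreover, the homogenizing relations of $\hH^{+}$ can have arity up to $|\gG|$, so your ``two leading coordinates carry the relational information'' scheme cannot encode them, and packing several relations of different arities into a single $R$ while keeping the individual relations quantifier-free-recoverable (so that a self-embedding of $(\sS(\gG,n),\neq)$ reflects each of them separately, not merely some Boolean combination) is precisely the obstruction you would have to overcome; I do not see how it can be done with one relation on distinct-entry tuples in a homogeneous setting. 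The paper sidesteps all of this by invoking Theorem~\ref{thm:hubicka-nesetril} and then taking the model-complete core.

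There is also a structural reason to prefer the paper's route, even though it is invisible at the level of Lemma~\ref{lem:S} as stated. In your version $\Aut(\sS(\gG,n))=\Aut(\CSS(\gG))$, so the homogenizing relations have arity bounded by $|\gG|$, independently of $n$. The whole point of replacing each vertex by an $n$-tuple in the paper's $\gG'$ is that the resulting cuts have size at least $n$, so (by Theorem~\ref{thm:hubicka-nesetril}) the homogenization of $\sS(\gG,n)$ needs relations of arity at least $n$. That property is what makes the generic superposition in Construction~\ref{constr:s} $\omega$-categorical in the proof of Lemma~\ref{lem:core}. A construction satisfying only the literal statement of Lemma~\ref{lem:S} but with low-arity homogenization would not support the rest of Section~\ref{sec:topology}. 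Your counting for item~(4) and your pp-power observation for item~(2) are fine in spirit, but they rest on the unresolved choice of $R$.
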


\begin{proof}
  The structure $\sS(\gG,n)$ is itself obtained via a~$\CSS$ structure for a finite family $\mathcal G$ of structures. Let $\gG'$ be the structure obtained from $\gG$ by replacing each vertex $x$ of $\gG$ by an $n$-tuple $\overline x$ of new distinct elements, and requiring, for vertices $x,y$ of $\gG$, the $2n$-ary relation $R(\overline x, \overline y)$ to hold if $(x,y)$ is an edge in $\gG$.  Note that $\gG'$ is connected because $\gG$ is connected.

Let $\mathcal G$ contain $\gG'$ as well as every connected structure on $<2n$ elements and containing a single $R$-tuple (such a structure is called \emph{loop-like} in the following).
Let $\mathbb{F}'$ be the $\CSS$ structure for $\mathcal G$.
Then $\mathbb{F}'$ is $\omega$-categorical and has no algebraicity. We set $(\sS(\gG,n),\neq)$ to be the model-complete core of the structure $(\mathbb{F}',\neq)$; it is also $\omega$-categorical~\cite{cores-journal}, and it follows from its construction that it has no algebraicity either (cf.\ the proof of Theorem~27 in~\cite{MMSNP}). Hence, item~(1) is satisfied.

We now show item~(2).  Let $\mathbb{T}$ be the graph whose vertices are the $n$-tuples of $\sS(\gG,n)$, and where two tuples $(x_1,\dots,x_n)$ and $(y_1,\dots,y_n)$ are related if and only if $R(x_1,\dots,x_n,y_1,\dots,y_n)$ holds in $\sS(\gG,n)$. Clearly, $\mathbb{T}$ is a~pp-power of $\sS(\gG,n)$. We claim that 
  $\mathbb{T}$ and $\CSS(\gG)$ are homomorphically equivalent; this implies that $\CSS(\gG)$ is pp-constructible from $\sS(\gG,n)$, as required.
  
  To prove the claim, by a~standard compactness argument it is sufficient to show that a~finite graph homomorphically maps into $\mathbb{T}$ if and only if it homomorphically maps into $\CSS(\gG)$; in other words, a~finite graph homomorphically maps into $\mathbb{T}$ if and only if it does not contain a~homomorphic image of $\gG$. Suppose first that there existed a~homomorphism from $\gG$ into $\rel T$. Then $\gG'$ constructed as above would have a~homomorphism into $\sS(\gG,n)$, and hence also into $\rel F'$, a~contradiction.  Conversely, if $\hH$ is a~graph which does not contain a~homomorphic image of $\gG$, then $\gG'$ does not homomorphically map into $\hH'$, and therefore $\hH'$ embeds into $\rel F'$, and hence homomorphically maps into $\sS(\gG,n)$. But this implies that $\hH$ homomorphically maps to $\rel T$.

  Item $(3)$ of the lemma holds since we have included loop-like obstructions in the definition of $\rel F'$, and since $\rel F'$ and $\sS(\gG,n)$ are homomorphically equivalent. 
  
  To see item~(4), note that the orbit-growth of a~homogeneous structure with relations of arity at most $r$ is bounded by $3^{k^r}$ for large enough $k$.
  By Theorem~\ref{thm:hubicka-nesetril}, $\CSS(\gG')$ has a~homogeneous expansion by relations with arity at most $|\gG'|=n|\gG|$.
  Thus, $\Aut(\mathbb{F}')$ has for large $k$ at most $3^{k^{n|\gG|}}$ orbits of $k$-tuples.
  Whence, the same holds for the model-complete core $\sS(\gG,n)$, which has at most the number of orbits of the original structure (see~\cite{BKOPP17,BKOPP17a}).
  \end{proof}

\subsection{Superposition of the encodings}

It is well-known (see \cite[Section 2.7]{Oligo}) that if two $\omega$-categorical structures $\mathbb A$ and $\mathbb B$ in disjoint signatures $\sigma$ and $\tau$ have no algebraicity,
then there exists a~generic superposition $\aA\odot \bB$ of the two in the signature $\sigma\cup\tau$ and which is unique up to isomorphism.
This generic superposition is again $\omega$-categorical and without algebraicity.
It is obtained as follows:
\begin{enumerate}
	\item Expand $\aA$ by all relations that have a first-order definition in $\aA$, and similarly for $\bB$. Call $\aA'$ and $\bB'$ the resulting structures and let $\sigma'$ and $\tau'$ be their signatures (that
	we take to be disjoint without loss of generality).
	\item Since $\aA$ and $\bB$ are without algebraicity, so are $\aA'$ and $\bB'$ (expanding by first-order definable relations does not change the automorphism groups of the structures).
	Thus, the class of finite substructures of $\aA'$ and $\bB'$ have the strong amalgamation property (see Proposition 2.15 in~\cite{Oligo}).
	\item The class of finite $(\sigma'\cup\tau')$-structures whose $\sigma'$- and $\tau'$-reducts embed into $\aA'$ and $\bB'$, respectively, has the strong amalgamation property,
	and we call $\aA'\odot \bB'$ its Fra\"iss\'e limit.
	The $(\sigma\cup\tau)$-reduct of $\aA'\odot\bB'$ is then our structure $\aA\odot\bB$.
\end{enumerate}
As an example, take $\aA$ to be $(\mathbb Q,<)$ and $\bB$ to be the random graph (i.e., the graph $\CSS(\mathbb L)$ where $\mathbb L$ is a graph on a single vertex with a loop).
Then $\aA\odot \bB$ is the \emph{random ordered graph}, i.e., the Fra\"iss\'e limit of the class of finite simple graphs with a~total ordering on the vertices.

The same construction works for generic superpositions of infinitely many $\omega$-categorical structures without algebraicity.
The generic superposition will have an infinite signature, but will be $\omega$-categorical if the Fra\"{i}ss\'{e} class which yields the superposition has finitely many inequivalent atomic formulas of each arity.

In our construction of the structure of Theorem~\ref{thm:main-3}, we would like to superpose the graphs from the proof of Theorem~\ref{thm:main-2}; this superposition would however not be $\omega$-categorical as there would be infinitely many orbits of pairs of vertices. This is why we superpose encodings of these graphs on tuples of increasing arity instead.

\begin{construction}\label{constr:s}
Let $\hH_1,\hH_2,\dots$ be an enumeration of the graphs as in the proof of Theorem~\ref{thm:main-2} such that $\Sigma_{\hH_1},\Sigma_{\hH_2},\dots$ is a~decreasing chain of height 1 conditions. 
Let $\alpha\colon\mathbb N\to\mathbb N$ be an increasing function to be determined later. Let $\sS$ be the generic superposition of all of the structures $\sS(\hH_{n},\alpha(n))$, for $n \geq 1$:
\[
  \sS:={\bigodot}_{n\geq 1} \sS(\hH_{n},\alpha(n))\;.
\]
\end{construction}

We note that by Theorem~\ref{thm:hubicka-nesetril}, each $\sS(\hH_n,\alpha(n))$ has an expansion by finitely many relations which is homogeneous.
The structure obtained by expanding $\sS$ by this infinite set of relations is itself homogeneous.
In the proof below, we call this expansion `the' homogenization of $\sS$, even though it is not unique.

\begin{lemma}\label{lem:core}
The structure $(\sS,\neq)$ is an $\omega$-categorical model-complete core without algebraicity.
\end{lemma}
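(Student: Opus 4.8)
The plan is to establish the three properties—$\omega$-categoricity, the model-complete core property, and no algebraicity—for $(\sS,\neq)$ by reducing them, as much as possible, to the corresponding properties of the factors $\sS(\hH_n,\alpha(n))$, which are already known from Lemma~\ref{lem:S}(1), and to the general theory of generic superpositions recalled just above Construction~\ref{constr:s}.

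First I would verify $\omega$-categoricity. The generic superposition $\sS$ of infinitely many $\omega$-categorical structures without algebraicity is $\omega$-categorical provided the Fra\"iss\'e class defining it has only finitely many inequivalent atomic formulas in each arity; this is the criterion stated in the text. An atomic formula of arity $k$ in the superposed signature is (up to equivalence) determined by a choice, for each $n$, of an atomic formula of the homogenization of $\sS(\hH_n,\alpha(n))$ on the same $k$ variables. Since each factor is $\omega$-categorical, there are finitely many such choices for each fixed $n$; the point is that for $n$ large the arity $2\alpha(n)$ of the single relation $R_n$ exceeds $k$, so $R_n$ and the finitely many relations of its homogenization impose no nontrivial atomic formula on $k$ variables (they can only be satisfied, if at all, on tuples involving at least $2\alpha(n)$ elements, hence are vacuous on $k<2\alpha(n)$ variables once the loop-like obstructions are accounted for). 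Thus all but finitely many factors contribute trivially, and the total number of atomic $k$-types is finite; here it is convenient to choose $\alpha$ increasing fast enough that $2\alpha(n)$ grows, which is harmless since $\alpha$ is to be fixed only later.

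Next I would treat the model-complete core property together with the absence of algebraicity. For no algebraicity: a generic superposition of structures without algebraicity is again without algebraicity (this is part of the cited fact from \cite{Oligo}), and the same for countably many factors, so $(\sS,\neq)$ has no algebraicity; adding $\neq$ changes nothing since it is first-order definable and preserves the automorphism group. For the model-complete core property, the cleanest route is: an $\omega$-categorical structure $\bB$ is a model-complete core if and only if every orbit of $\Aut(\bB)$ on $k$-tuples is pp-definable (equivalently, $\bB$ together with $\neq$, or just $\bB$ when $\neq$ is already present, has the property that self-embeddings agree locally with automorphisms). I would use that $(\sS(\hH_n,\alpha(n)),\neq)$ is a model-complete core for each $n$, and argue that an orbit of a $k$-tuple $\bar a$ in $(\sS,\neq)$ is the conjunction over all relevant $n$ of the orbit of $\bar a$ in the $n$-th factor (this is the defining feature of the generic superposition: types split as products of the types in the factors). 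Each such orbit in a factor is pp-definable there since the factor is a model-complete core; the conjunction of pp-formulas is a pp-formula; only finitely many factors contribute nontrivially by the arity argument above; hence the orbit is pp-definable in $(\sS,\neq)$, and $(\sS,\neq)$ is a model-complete core.

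The main obstacle I anticipate is making the "types split as a product" claim precise and rigorous in the infinite superposition: one must check that the Fra\"iss\'e class used to build $\sS$—namely finite structures all of whose $n$-reducts embed into (the homogenization of) $\sS(\hH_n,\alpha(n))$—really has strong amalgamation, so that $\Aut(\sS)$ acts with the expected orbits, and that passing to the model-complete cores of the factors before superposing (rather than superposing and then taking the core) yields the same structure up to the needed invariants. The text already signals that $\sS(\hH_n,\alpha(n))$ is itself defined as the model-complete core of a $\CSS$ structure and that cores interact well with these constructions (citing \cite{cores-journal,BKOPP17,BKOPP17a,MMSNP}); I would lean on exactly those references, and on strong amalgamation following from no algebraicity via Proposition~2.15 of \cite{Oligo} applied to each factor and then to the joint class, to close this gap. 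Everything else—$\omega$-categoricity and no algebraicity—is then a direct application of the cited general facts about generic superpositions.
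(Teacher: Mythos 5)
Your proposal is correct in substance and matches the paper for $\omega$-categoricity and no algebraicity, but takes a genuinely different route for the model-complete core property. The paper argues directly: given an endomorphism $e$ of $(\sS,\neq)$ and a finite set $F$, since $e$ is also an endomorphism of each $(\sS(\hH_n,\alpha(n)),\neq)$ and these are model-complete cores, $e|_F$ agrees with an automorphism of each factor, hence preserves all the homogenizing relations, hence is a partial isomorphism of the Fra\"{\i}ss\'e expansion $\sS'$ and extends to an automorphism by homogeneity. You instead invoke the characterization of model-complete cores via pp-definability of orbits and the ``types split as products'' feature of generic superpositions. Both are valid; the paper's argument is more self-contained since it uses homogeneity of $\sS'$ directly and sidesteps the need to establish that orbits in the superposition are exactly intersections of factor-orbits — a fact which you correctly flag as requiring care, and which your route would still need to prove via that same homogeneity. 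Two small points to tighten in your $\omega$-categoricity argument: first, the homogenizing relations from Theorem~\ref{thm:hubicka-nesetril} have arities equal to the sizes of the minimal cuts of the forbidden structures, which for $\hH_n'$ is at least $\alpha(n)$, not $2\alpha(n)$ as you state — only the single relation $R_n$ has arity $2\alpha(n)$ (the weaker bound $\alpha(n)$ suffices since $\alpha$ is increasing); second, it is not automatic that a homogenizing relation fails to hold on tuples with fewer than $\alpha(n)$ distinct entries — the paper closes this by using that the homogenizing relations are pp-definable in $R_n$ and that all loop-like $R_n$-structures on fewer than $2\alpha(n)$ elements were forbidden, forcing any satisfiable atomic formula to involve at least $\alpha(n)$ distinct variables. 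Your aside about ``superposing then taking the core versus taking cores then superposing'' is a non-issue here, since Construction~\ref{constr:s} already superposes the cores $\sS(\hH_n,\alpha(n))$ directly.
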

\begin{proof}
The generic superposition of structures without algebraicity always has no algebraicity,
and expanding a~structure by $\neq$ does not introduce algebraicity since $\sS$ and $(\sS,\neq)$ have the same orbits.

We prove that $\sS$ is $\omega$-categorical (which implies that $(\sS,\neq)$ is $\omega$-categorical, by the sentence above).
First, we prove that every atomic formula $\phi(x_1,\dots,x_r)$ over $\sS(\hH_n,\alpha(n))$
is either equivalent to ``false'' or has at least $\alpha(n)$ different variables.
Suppose that $\phi$ is the relation symbol $R_n$ (and thus $r=2\alpha(n)$). By construction of $\sS(\hH_n,\alpha(n))$, since all loop-like structures have been forbidden,
we have that either all the variables are distinct, or $\phi(x_1,\dots,x_r)$ is not satisfiable in $\sS(\hH_n,\alpha(n))$ and is equivalent to false.
Suppose now that $\phi$ is a~relation symbol added for the homogenization of $\sS(\hH_n,\alpha(n))$.
Let $\hH_n'$ be the structure obtained from $\hH_n$ as in the proof of Lemma~\ref{lem:S}.
Note that the cuts of $\hH'_n$ have size at least $\alpha(n)$, so we know from Theorem~\ref{thm:hubicka-nesetril} that $r\geq \alpha(n)$.
Moreover, Theorem~\ref{thm:hubicka-nesetril} gives that $\phi$ is equivalent to a~pp-formula over $\sS(\hH_n,\alpha(n))$. Then at least $\alpha(n)$ of the variables of $\phi$ are different,
for otherwise a~clause in $\phi$ would be of the form $R_n(y_1,\dots,y_{2\alpha(n)})$ with fewer than $2\alpha(n)$ distinct variables, and $\phi$ would be equivalent to ``false''.
In conclusion, we obtain that all non-trivial atomic formulas over the homogenization of $\sS(\hH_n,\alpha(n))$ have arity at least $\alpha(n)$, and there are only finitely many of them since this homogenization
has a~finite signature.
Thus, since $\alpha$ is an increasing function, the homogenization of $\sS$ has only finitely many atomic formulas of each arity.
It follows that the homogenization of $\sS$ has finitely many orbits of each arity, so that this homogenization is $\omega$-categorical, and thus $\sS$ itself is $\omega$-categorical.

To see that $(\sS,\neq)$ is a~model-complete core, let $e$ be an endomorphism of $(\sS,\neq)$, and let $F$ be a~finite subset of its domain. Then $e$ is also, in particular, an endomorphism of $(\sS(\hH_n,\alpha(n)),\neq)$ for all $n\geq 1$, and since the latter structures are model-complete cores, the restriction of $e$ to $F$ has an expansion to an automorphism of $\sS(\hH_n,\alpha(n))$ for each $n\geq 1$. It then follows that this restriction is a~partial isomorphism of the Fra\"{i}ss\'{e} structure $\sS'$ of which $\sS$ is the reduct. By homogeneity, it extends to an automorphism of $\sS'$, which is also an automorphism of $\sS$ and of $(\sS,\neq)$.
\end{proof}

We show in the next lemma that the orbit growth of $\sS$ can be controlled by picking a~suitable $\alpha$ in the construction.

\begin{lemma}\label{lem:orbitgrowth}
For every increasing $f\colon\mathbb N\to\mathbb N$ that dominates every polynomial,
there exists an $\alpha\colon\mathbb N\to\mathbb N$ such that the number of orbits of $k$-tuples of\/ $\sS$ is not asymptotically larger than $3^{f(k)}$.
In particular, there exists an $\alpha$ such that $\sS$ has less than double exponential orbit growth.
\end{lemma}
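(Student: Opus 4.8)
The goal is to choose the growth function $\alpha$ in Construction~\ref{constr:s} so slowly that the product of the orbit-counting bounds coming from the factors $\sS(\hH_n,\alpha(n))$ stays below $3^{f(k)}$ asymptotically. The key input is Lemma~\ref{lem:S}(4): each $\sS(\hH_n,\alpha(n))$ has at most $3^{k^{\alpha(n)|\hH_n|}}$ orbits of $k$-tuples for $k$ large enough. Since $\sS$ is (the reduct of) the generic superposition of these structures, an orbit of a $k$-tuple in $\sS$ is determined by the quantifier-free type in the homogenization, which refines into the types contributed by each coordinate signature; crucially, only finitely many of the factors can impose a nontrivial constraint on a given $k$-tuple, namely those with $\alpha(n)\le k$ (by the arity argument in the proof of Lemma~\ref{lem:core}, all nontrivial atomic formulas over the homogenization of $\sS(\hH_n,\alpha(n))$ have arity at least $\alpha(n)$). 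Hence the number of orbits of $k$-tuples of $\sS$ is bounded by
\[
  \prod_{n\,:\,\alpha(n)\le k} 3^{k^{\alpha(n)|\hH_n|}} \;=\; 3^{\sum_{n\,:\,\alpha(n)\le k} k^{\alpha(n)|\hH_n|}}\; ,
\]
so it suffices to choose $\alpha$ making the exponent $\sum_{n:\alpha(n)\le k} k^{\alpha(n)|\hH_n|}$ eventually bounded by $f(k)$.

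First I would fix, for the given $f$ dominating every polynomial, the thresholds $\alpha(n)$ recursively, ensuring $\alpha$ is strictly increasing. Suppose $\alpha(1),\dots,\alpha(n-1)$ have been chosen. For all $k$ with $\alpha(n-1)<k<\alpha(n)$, the only factors constraining $k$-tuples are the first $n-1$ ones, and the partial sum $\sum_{m<n} k^{\alpha(m)|\hH_m|}$ is a fixed polynomial in $k$ of degree $\max_{m<n}\alpha(m)|\hH_m|$; since $f$ dominates every polynomial, there is $K_n$ such that this polynomial is $\le f(k)$ for all $k\ge K_n$. Now I choose $\alpha(n)$ to be any integer strictly larger than $\max(\alpha(n-1),K_n)$. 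Then for every $k$ in the interval $[\alpha(n-1)+1,\alpha(n)-1]$ that is at least $K_n$ — which, by the choice $\alpha(n)>K_n$, means every $k$ in that interval except possibly finitely many small ones — the bound $\sum_{m:\alpha(m)\le k} k^{\alpha(m)|\hH_m|}\le f(k)$ holds. Iterating over all $n$, the bound holds for all sufficiently large $k$, which is exactly "not asymptotically larger than $3^{f(k)}$" (absorbing the finitely many small $k$ and any constant slack into the asymptotic statement; one may also replace $f$ by $2f$ at the start if one wants a clean inequality rather than an $O$-statement).

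For the "in particular" clause: the function $f(k):=2^k$ dominates every polynomial, and $3^{2^k}$ has less than double exponential growth in the sense of the definition, since $\lim_{k\to\infty} 3^{2^k}/2^{2^k}=\lim_{k\to\infty}(3/2)^{2^k}\cdot$ — wait, that diverges, so instead take $f(k):=k$, giving orbit growth at most $3^k$, and $\lim_{k\to\infty}3^k/2^{2^k}=0$; hence $\sS$ has less than double exponential orbit growth. So it suffices to invoke the main statement with, say, $f(k)=k$ (which indeed dominates every polynomial is false — so one must be slightly careful). The correct choice is $f(k):=k^{\log k}$ or any fixed superpolynomial function with $3^{f(k)}=o(2^{2^k})$, e.g. $f(k):=2^{k/2}$: it dominates every polynomial and $3^{2^{k/2}}/2^{2^k}\to 0$. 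I would simply remark that any such $f$ works.

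**Main obstacle.** The genuine content is the bookkeeping that, at scale $k$, only the finitely many factors with $\alpha(n)\le k$ contribute to the orbit count of $k$-tuples, so that the product is finite and controlled by a single polynomial between consecutive thresholds; this uses precisely the arity lower bound established in the proof of Lemma~\ref{lem:core}, together with the fact that in a generic superposition the orbit of a tuple is the tuple of orbits in the factors (equivalently, types multiply). Once that is in place, the recursive choice of $\alpha$ is routine, the only subtlety being to interleave the requirements "$\alpha$ strictly increasing" and "$\alpha(n)>K_n$" so that every large $k$ falls into some interval on which the desired inequality has already been arranged.
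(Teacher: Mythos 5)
Your strategy matches the paper's: bound the orbit count of $k$-tuples in $\sS$ by the product of the orbit counts coming from the factors with $\alpha(m)\le k$ (using the arity lower bound from the proof of Lemma~\ref{lem:core} and the bound $3^{k^{\alpha(m)|\hH_m|}}$ from Lemma~\ref{lem:S}(4)), and then choose $\alpha$ recursively so that the resulting exponent is eventually controlled by $f$. The bookkeeping up to that point is correct.

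However, the final step over-claims. You choose $\alpha(n)>\max(\alpha(n-1),K_n)$ and observe that the bound $\sum_{m:\alpha(m)\le k}k^{\alpha(m)|\hH_m|}\le f(k)$ holds for $k\in[K_n,\alpha(n))$; but on the complementary part $[\alpha(n-1),K_n)$ of the $n$-th interval there is no guarantee, and these ``bad'' pieces are nonempty for infinitely many $n$, so they contribute arbitrarily large exceptional $k$. Passing from ``finitely many exceptions in each interval'' to ``finitely many exceptions overall'' is exactly the gap. In fact the ``for all sufficiently large $k$'' version is simply false for a generic superpolynomial $f$: at $k=\alpha(n)$ the new factor contributes the term $\alpha(n)^{\alpha(n)|\hH_n|}$ to the exponent, which for, say, $f(k)=2^{\sqrt{k}}$ dwarfs $f(\alpha(n))$. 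The paper accordingly only establishes the inequality at the chosen thresholds $k=k_n$ and concludes that the orbit count is bounded by $3^{f(k)}$ \emph{infinitely often}; this is the intended reading of ``not asymptotically larger than''. Your argument, correctly read off, does give the same ``infinitely often'' statement (on the nonempty pieces $[\max(\alpha(n-1),K_n),\alpha(n))$), but not the eventual bound you assert.

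The ``in particular'' clause then needs to be handled with this weaker conclusion in mind: your proposed $f(k)=2^{k/2}$ together with the (unavailable) eventual bound does not directly yield $\lim_k f_k/2^{2^k}=0$, since between two consecutive good values $k_n$ the orbit count is only controlled by the value at $k_n$ via monotonicity. The paper instead takes $f(k)=3^{\sqrt{k}}$ and argues by contraposition that double-exponential orbit growth would force the orbit count to be eventually larger than $3^{3^{\sqrt{k}}}$, contradicting the ``infinitely often'' bound.
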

\begin{proof}
We construct $\alpha$ by induction, first setting $\alpha(1)=1$.
Suppose now that $\alpha(1),\dots,\alpha(n)$ are defined.
Since $f$ dominates every polynomial, there exists a~$k_n>\alpha(n)$ such that $\sum_{i=1}^{n} k_n^{\alpha(i)|\hH_i|} < f(k_n)$ for all $k\geq k_n$.
Let $\alpha(n+1):=k_n+1$.

Let now $n\geq 1$. Orbits of $k_n$-tuples in $\sS$ are uniquely determined by orbits of $k_n$-tuples in $\sS(\hH_m,\alpha(m))$ for $m\leq n$;
this follows from the fact that $k_n<\alpha(m)$ for $m>n$ and that the orbits of $k$-tuples of $\sS(\hH_m,\alpha(m))$ are that of the empty structure if $k<\alpha(m)$.
By Lemma~\ref{lem:S}, the number of orbits of $k_n$-tuples in $\sS$ is then at most $3^{k_n^{\alpha(1)|\hH_1|}}\cdots 3^{k_n^{\alpha(n)|\hH_{n}|}} = 3^{\sum k_n^{\alpha(i)|\hH_i|}} < 3^{f(k_n)}$.
Therefore, the number of orbits of $\sS$ is bounded above by $3^{f(k)}$ infinitely often.

To prove the final remark, simply note that if the orbit growth of $\sS$ is doubly exponential then in particular it is asymptotically larger than $3^{3^{\sqrt{k}}}$.
\end{proof}

\subsection{Identities in $\sS$}

\begin{lemma}\label{lem:noh1}
  $\Pol(\sS)$ does not satisfy any non-trivial height~1 condition.
\end{lemma}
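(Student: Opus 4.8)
The plan is to argue by contradiction, invoking the two properties of the graphs $\hH_n$ that were prepared in the previous sections. On the one hand, by the construction in Section~\ref{sec:chain} the conditions $\Sigma_{\hH_1},\Sigma_{\hH_2},\dots$ form a chain with the feature that \emph{every} non-trivial height~1 condition implies some $\Sigma_{\hH_n}$. On the other hand, each $\sS(\hH_n,\alpha(n))$ pp-constructs $\CSS(\hH_n)$ by Lemma~\ref{lem:S}(2), while $\Pol(\CSS(\hH_n))$ does not satisfy $\Sigma_{\hH_n}$ by Lemma~\ref{lem:fg} (using here that $\hH_n$ is a finite connected loopless non-$3$-colorable graph and that $\CSS(\hH_n)$ is universal for $\Forb(\hH_n)$).

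So assume for contradiction that $\Pol(\sS)$ satisfies some non-trivial height~1 condition $\Sigma$. By the first property above, $\Sigma$ implies $\Sigma_{\hH_n}$ for some $n\ge 1$; since satisfaction of height~1 conditions is preserved under implication, $\Pol(\sS)$ itself satisfies $\Sigma_{\hH_n}$.

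Next I would push $\Sigma_{\hH_n}$ down onto $\Pol(\CSS(\hH_n))$. The structure $\sS(\hH_n,\alpha(n))$ is, up to isomorphism, the reduct of $\sS$ to the single relation $R_n$ --- this is the standard fact that the reduct of a generic superposition to one of its component signatures is the corresponding component, and it follows directly from the Fra\"iss\'e-limit construction of the superposition. In particular $\sS(\hH_n,\alpha(n))$ is pp-constructible from $\sS$, hence, composing with the pp-construction of $\CSS(\hH_n)$ from $\sS(\hH_n,\alpha(n))$ provided by Lemma~\ref{lem:S}(2) and using transitivity of pp-constructibility, $\CSS(\hH_n)$ is pp-constructible from $\sS$. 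By Theorem~\ref{thm:bop18}, whose ``only if'' direction is valid for arbitrary structures, there is a minion homomorphism from $\Pol(\sS)$ to $\Pol(\CSS(\hH_n))$. Minion homomorphisms preserve height~1 conditions, so $\Pol(\CSS(\hH_n))$ satisfies $\Sigma_{\hH_n}$, contradicting Lemma~\ref{lem:fg}.

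Equivalently, one can avoid pp-constructibility for the first leg: since $\sS(\hH_n,\alpha(n))$ is a reduct of $\sS$ we have $\Pol(\sS)\subseteq\Pol(\sS(\hH_n,\alpha(n)))$, so the functions witnessing $\Sigma_{\hH_n}$ in $\Pol(\sS)$ already witness it in $\Pol(\sS(\hH_n,\alpha(n)))$, and then the ``only if'' part of Theorem~\ref{thm:bop18} applied to Lemma~\ref{lem:S}(2) transports $\Sigma_{\hH_n}$ onto $\Pol(\CSS(\hH_n))$. The only point that demands any care is the identification of $\sS(\hH_n,\alpha(n))$ as a reduct of $\sS$; every remaining step is a direct concatenation of Lemmas~\ref{lem:S} and~\ref{lem:fg}, the chain property of Section~\ref{sec:chain}, and the functorial behaviour of height~1 conditions under minion homomorphisms, so I do not expect a genuine obstacle here.
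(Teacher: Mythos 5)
Your argument is correct and is essentially the paper's own: both invoke the chain property of the $\Sigma_{\hH_n}$, then the non-satisfaction of $\Sigma_{\hH_n}$ in $\Pol(\CSS(\hH_n))$ via Lemma~\ref{lem:fg}, transport this along the pp-construction from Lemma~\ref{lem:S}(2) using Theorem~\ref{thm:bop18}, and finally pass to $\Pol(\sS)$ via the reduct inclusion $\Pol(\sS)\subseteq\Pol(\sS(\hH_n,\alpha(n)))$. The only cosmetic difference is that you phrase it as a contradiction and spell out the reduct observation that the paper leaves implicit.
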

\begin{proof}
  For each non-trivial height~1 condition $\Sigma$ there exists an $n\geq 1$ such that $\Sigma_{\hH_n}$ is weaker than $\Sigma$.
  Since $\CSS(\hH_n)$ does not satisfy $\Sigma_{\hH_n}$, and $\CSS(\hH_n)$ is pp-constructible from $\sS(\hH_n,\alpha(n))$, Theorem~\ref{thm:bop18} implies that the latter does not satisfy $\Sigma_{\hH_n}$ either, and in particular does not satisfy $\Sigma$. 
  Therefore, $\Pol(\sS)$ does not satisfy $\Sigma$.
\end{proof}
Since $\Pol(\sS,\neq)\subseteq\Pol(\sS)$, we obtain in particular that $\Pol(\sS,\neq)$ does not satisfy any non-trivial height~1 condition either.

We are not going to show directly that $(\sS,\neq)$ satisfies non-trivial height~1 conditions locally, but will expose other (not height~1) conditions it satisfies, and then use its slow orbit growth to deduce the satisfaction of local height 1 conditions.

A polymorphism $f$ of a~structure is called a~\emph{pseudo-Siggers} operation if there are endomorphisms $e_1,e_2$ of the structure  such that for all $x,y,z$ of the domain
\[
  e_1\circ f(x,y,x,z,y,z) = e_2\circ f(y,x,z,x,z,y)
\]
holds.

\begin{lemma}\label{lem:psiggers}
  The structure $(\sS,\neq)$ has a~pseudo-Siggers polymorphism.
\end{lemma}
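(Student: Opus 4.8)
The plan is to build a pseudo-Siggers operation for $(\sS,\neq)$ by gluing together good polymorphisms of the pieces $\sS(\hH_n,\alpha(n))$ of the superposition. Recall that a polymorphism of the generic superposition $\sS$ is precisely a map on the common domain that preserves $\neq$ and is simultaneously a polymorphism of each $\sS(\hH_n,\alpha(n))$. The point that circumvents Lemma~\ref{lem:noh1} is that on a finite set only finitely many pieces are \emph{active}: if $S\subseteq\sS$ with $|S|<2\alpha(N)$, then for $m\geq N$ the relation of $\sS(\hH_m,\alpha(m))$---having arity $2\alpha(m)$ and containing only tuples with pairwise distinct entries by Lemma~\ref{lem:S}(3)---contains no tuple with all entries in $S$, and likewise for the relations of the homogenization, whose arities are at least $\alpha(m)$ (see the proof of Lemma~\ref{lem:core}). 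So on $S$ only $\sS(\hH_1,\alpha(1)),\dots,\sS(\hH_{N-1},\alpha(N-1))$ and $\neq$ impose constraints.

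For these finitely many active pieces I would exploit that $\sS(\hH_m,\alpha(m))$ pp-constructs $\CSS(\hH_m)$ (Lemma~\ref{lem:S}(2)), which by Lemma~\ref{lem:fg} has quasi near unanimity polymorphisms of all sufficiently large arities. Composing such an operation with the homomorphisms witnessing the homomorphic equivalence underlying the pp-construction, and using that polymorphisms of a pp-power act coordinatewise, yields a quasi near unanimity polymorphism of $\sS(\hH_m,\alpha(m))$ of one arity $M$ chosen uniformly for all $m<N$; as in Lemma~\ref{lem:fg}, it also preserves $\neq$. Since the superposition is generic---equivalently, the Fra\"iss\'e class giving its homogenization has strong amalgamation---a back-and-forth argument then produces a single $M$-ary polymorphism of $(\sS,\neq)$ that agrees with a quasi near unanimity operation on $S$ and interacts trivially with the inactive pieces. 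Standard manipulations turn this into a $6$-ary polymorphism $s$ of $(\sS,\neq)$ together with automorphisms $\alpha_1,\alpha_2$ such that $\alpha_1\circ s(x,y,x,z,y,z)=\alpha_2\circ s(y,x,z,x,z,y)$ for all $x,y,z\in S$; the automorphisms are needed to pass from the quasi near unanimity identities to the Siggers pattern and, the active pieces being model-complete cores, to absorb the non-idempotence of the combined operation on $S$. Finally, as $(\sS,\neq)$ is an $\omega$-categorical model-complete core (Lemma~\ref{lem:core}), these finite approximations assemble into a genuine pseudo-Siggers polymorphism by a compactness argument: the space of $6$-ary polymorphisms is compact, every endomorphism of a model-complete core agrees on each finite set with an automorphism, and there are only finitely many distinct restrictions of automorphisms to a given finite set, so a diagonal argument over an exhaustion of $\sS$ by finite sets yields one $s$ and endomorphisms $e_1,e_2$ with $e_1\circ s(x,y,x,z,y,z)=e_2\circ s(y,x,z,x,z,y)$.

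The step I expect to be the main obstacle is the gluing: merging the per-piece quasi near unanimity polymorphisms into one operation that is simultaneously a polymorphism of all active pieces while still satisfying a non-trivial identity on $S$. This requires a back-and-forth construction compatible with the strong amalgamation of the homogenization and keeping the operation constant on almost-constant tuples, plus enough bookkeeping that a single pair of endomorphisms works uniformly along the exhaustion. It is exactly the absence of relational constraints from the inactive pieces on finite sets that leaves room for this, so there is no conflict with Lemma~\ref{lem:noh1}.
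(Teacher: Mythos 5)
Your overall strategy -- build finite approximations to a pseudo-Siggers operation and assemble them by compactness, exploiting that only finitely many pieces of the superposition are ``active'' on any finite set -- points in a sensible direction, but the middle of the argument has two genuine gaps, and the paper resolves the construction quite differently.

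First, the minion-homomorphism arrow goes the wrong way for your pullback of quasi near unanimity operations. Lemma~\ref{lem:S}(2) gives that $\CSS(\hH_m)$ is pp-constructible \emph{from} $\sS(\hH_m,\alpha(m))$, which by Theorem~\ref{thm:bop18} yields a uniformly continuous minion homomorphism $\Pol(\sS(\hH_m,\alpha(m)))\to\Pol(\CSS(\hH_m))$. That transports identities from the clone of $\sS(\hH_m,\alpha(m))$ to the clone of $\CSS(\hH_m)$, not back; so quasi near unanimity of $\CSS(\hH_m)$ cannot be lifted to $\sS(\hH_m,\alpha(m))$ this way, and your coordinatewise-action argument conflates polymorphisms of a pp-power on $S^n$ (maps $(S^n)^M\to S^n$) with polymorphisms of the base structure (maps $S^M\to S$). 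The paper avoids this entirely: it observes that the indicator-structure argument of Lemma~\ref{lem:fg} applies directly to the $\CSS$-structures on tuples that define $\sS(\hH_m,\alpha(m))$ (the structures $\rel F'$ in the proof of Lemma~\ref{lem:S}), and then passes the quasi near unanimity polymorphism to the model-complete core via homomorphic equivalence.

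Second, and more fundamentally, the gluing step you flag as the main obstacle really is an unresolved gap, and the ``finitely many active pieces on $S$'' observation does not repair it. You need a single operation that is a polymorphism of \emph{all} of $(\sS,\neq)$, i.e.\ preserves every relation of every piece on \emph{all} tuples, including tuples with entries far outside $S$. The relations of the inactive pieces have no tuples inside $S$, but they do constrain the global behaviour of any candidate polymorphism, so it is not clear that a back-and-forth extension of a quasi near unanimity partial operation on $S$ can be made compatible with infinitely many pieces at once. The paper sidesteps this by never working locally on subsets: it works globally on the finite truncations $\sS_n := \sS(\hH_1,\alpha(1))\odot\cdots\odot\sS(\hH_n,\alpha(n))$. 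Each $(\sS_n,\neq)$ is an $\omega$-categorical model-complete core with a quasi near unanimity polymorphism of suitable arity (the gluing of finitely many pieces via genericity is manageable, and is exactly where the ``do not identify tuples beyond what the identities force'' observation is used), hence by the Barto-Pinsker theorem has a genuine pseudo-Siggers polymorphism $p_n$. Since $\Pol(\sS,\neq)=\bigcap_n\Pol(\sS_n,\neq)$, compactness modulo $\Aut(\sS)$ produces a limit $p=\lim\alpha_n\circ p_n\in\Pol(\sS,\neq)$, and a second compactness argument over finite subsets shows $p$ is pseudo-Siggers for $(\sS,\neq)$. This decomposition into an increasing chain of ``small enough'' full superpositions, rather than a chain of finite subsets of one big structure, is the key idea your proposal is missing.

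One smaller point: the ``standard manipulations'' that get you from quasi near unanimity to pseudo-Siggers are not routine bookkeeping but the Barto-Pinsker theorem (\cite{BartoPinskerDichotomy,BP18}), which the paper invokes on each $(\sS_n,\neq)$; it applies to $\omega$-categorical model-complete cores, so you want a genuine such structure to feed it, not a local restriction.
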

\begin{proof}
For each $n\geq 1$, let $\sS_n$ be the generic superposition 
$$
\sS(\hH_1,\alpha(1))\odot \cdots\odot \sS(\hH_n,\alpha(n))\; .
$$ 
Then $\Pol(\sS_n)$ satisfies a~quasi near unanimity identity of some sufficiently large arity. To see this, note that there exists $\ell\geq 1$ such that $\CSS(\hH_1),\ldots, \CSS(\hH_n)$ all have a~quasi near unanimity polymorphism of arity $\ell$, by Lemma~\ref{lem:fg}.
Similarly, such an $\ell$ exists for the $\CSS$-structures on tuples constructed in the proof of Lemma~\ref{lem:S}.
One also sees that in Lemma~\ref{lem:fg}, taking $\ell$ large enough ensures that the constructed polymorphism is also a~polymorphism of $(\CSS(\hH_i),\neq)$.
Thus, the model-complete cores of these structures, i.e., the structures $(\sS(\hH_1,\alpha(1)),\neq)$, \dots, $(\sS(\hH_n,\alpha(n)),\neq)$, also have a~quasi near unanimity polymorphism.
Moreover, these quasi near unanimity polymorphisms have the property that they do not identify any tuples other than those required to be identified by the quasi near unanimity identities.
Hence, since the superposition $\sS_n$ is generic, $(\sS_n,\neq)$ has a~quasi near unanimity polymorphism of arity $\ell$ as well.

By~\cite{BartoPinskerDichotomy, BP18}, it follows that $\Pol(\sS_n,\neq)$ has a~pseudo-Siggers operation for all $n\geq 1$.
Fix, for each $n\geq 1$, a~pseudo-Siggers operation $p_n\in \Pol(\sS_n,\neq)$. We can write 
$$
\Pol(\sS,\neq)=\bigcap_{n\geq 1} \Pol(\sS_n,\neq)\; .
$$
 By a~standard compactness argument, there exist $\alpha_n\in \Aut(\sS)$ for all $n\geq 1$ such that the sequence $(\alpha_n\circ p_n)_{n\geq 1}$ converges pointwise to a~function $p$. Clearly, $p\in \Pol(\sS,\neq)$. 

We finish the proof by showing that $p$ is a~pseudo-Siggers polymorphism of $(\sS,\neq)$. Let $F$ be a~finite subset of the domain of $\sS$. Then on $F$ we have $p=\alpha_n\circ p_n$ for almost all $n\geq 1$. By the same argument as for $(\sS,\neq)$, one sees that each $(\sS_n,\neq)$ is a~model-complete core.
Hence, since $p_n$ is a~pseudo-Siggers polymorphism of \mbox{$(\sS_n,\neq)$}, there exists $\beta_n\in\Aut(\sS_n)$ such that $p_n(x,y,x,z,y,z)=\beta_n\circ p_n(y,x,z,x,z,y)$ for all $x,y,z\in F$. Altogether, we get that for almost all $n\geq 1$ we have that for all $x,y,z\in F$
\begin{align*}
p(x,y,x,z,y,z)&=\alpha_n\circ p_n(x,y,x,z,y,z)\\
&=\alpha_n\circ \beta_n\circ p_n(y,x,z,x,z,y)\\
&=\alpha_n\circ \beta_n\circ 
(\alpha_n)^{-1}\circ p(y,x,z,x,z,y)\; .
\end{align*}
This means that for almost all $n\geq 1$, there exists an automorphism of $\sS_n$ such that $p(x,y,x,z,y,z)$ can be composed with that automorphism from the outside to obtain $p(y,x,z,x,z,y)$ on $F$. By a~standard compactness argument, there exists an automorphism of $\sS$ with this property. Again by a~standard compactness argument, there exist endomorphisms of $\sS$ witnessing that $p$ is a~pseudo-Siggers polymorphism of $(\sS,\neq)$.
\end{proof}

We can therefore apply the following result from~\cite{BKOPP17}.

\begin{theorem}\label{thm:equations}
  Let ${\mathscr C}$ be the polymorphism clone of an $\omega$-categorical model-complete core. Suppose that
  \begin{itemize}
    \item ${\mathscr C}$ satisfies a~non-trivial height 1 identity modulo outer unary functions, and
    \item ${\mathscr C}$ has a~uniformly continuous minion homomorphism to~$\Proj$.
  \end{itemize}
  Then ${\mathscr C}$ has at least double exponential orbit growth.
\end{theorem}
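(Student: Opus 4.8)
This result is established in \cite{BKOPP17}; below is the approach I would take to prove it.

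The plan is to combine three known facts with one substantial new ingredient. First, by the pseudo-version of Ol\v{s}\'ak's weakest-condition theorem \cite{Ols17} (in the form appropriate to polymorphism clones, developed in \cite{BartoPinskerDichotomy, BP18}), ${\mathscr C}$ satisfies \emph{some} non-trivial height~1 identity modulo outer unary functions if and only if it has a pseudo-Siggers operation. Second, since ${\mathscr C}=\Pol(\bB)$ for a model-complete core $\bB$, the resolution of the topology question for the first dichotomy conjecture by Barto and Pinsker \cite{BartoPinskerDichotomy, BP18} says that ${\mathscr C}$ has a pseudo-Siggers operation if and only if no stabilizer of ${\mathscr C}$ has a clone homomorphism to $\Proj$. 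So, assuming the hypotheses of the theorem but that the conclusion fails --- i.e.\ that $\bB$ has less than double exponential orbit growth --- it suffices to derive that some stabilizer of ${\mathscr C}$ \emph{does} have a clone homomorphism to $\Proj$, a contradiction.

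To do this I would first invoke Theorem~\ref{thm:bop18}: a uniformly continuous minion homomorphism $\Pol(\bB)\to\Proj$ provides a primitive positive construction of $\rel K_3$ (we may take the constructed structure to be $\rel K_3$, since any finite structure with polymorphism clone $\Proj$ primitively positively constructs $\rel K_3$ and conversely), that is, there are $d\geq 1$, a pp-power $\mathbb P$ of $\bB$ with domain $B^d$, and homomorphisms $g\colon\mathbb P\to\rel K_3$ and $h\colon\rel K_3\to\mathbb P$. Since the relations of $\mathbb P$ are pp-definable in $\bB$, the coordinatewise action on $B^d$ is a clone homomorphism $\Pol(\bB)\to\Pol(\mathbb P)$ whose restriction to any stabilizer $\Pol(\bB,\overline c)$ is again a clone homomorphism; moreover $\mathbb P$ is homomorphically equivalent to $\rel K_3$, so by Theorem~\ref{thm:existence-mc-core} its model-complete core is $\rel K_3$. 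The assignment $f\mapsto g\circ f\circ(h,\dots,h)$ is a minion homomorphism $\Pol(\mathbb P)\to\Pol(\rel K_3)$, and $\Pol(\rel K_3)$, consisting of permutations composed with projections, has a clone homomorphism onto $\Proj$. Consequently it would be enough to produce finitely many elements $\overline c$ such that, restricted to the image of $\Pol(\bB,\overline c)$, the minion homomorphism $f\mapsto g\circ f\circ(h,\dots,h)$ also preserves composition: composing with the inclusion into $\Pol(\mathbb P)$ and the projection $\Pol(\rel K_3)\to\Proj$ then yields the wanted clone homomorphism from a stabilizer of $\Pol(\bB)$.

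The hard part, and the step I expect to be the main obstacle, is precisely this localisation: turning the \emph{minion} homomorphism into a \emph{clone} homomorphism after fixing finitely many constants, using only the bound on the orbit growth of $\bB$. The relevant phenomenon is that a homogeneous structure with relations of arity at most $r$ has at most $3^{k^r}$ orbits of $k$-tuples for large $k$, so that less-than-double-exponential growth forces, along a suitable sequence of tuple lengths, the ``essential'' behaviour of polymorphisms on the configurations relevant to a fixed composition to be controlled by bounded-arity data. One would then argue, via a Ramsey-type pigeonhole over the finitely many orbits of $k$-tuples as $k$ grows, that after fixing appropriate constants every polymorphism behaves canonically on the orbits that determine the value of $g\circ f\circ(h,\dots,h)$, so that this value depends only on orbits and the assignment becomes composition-preserving. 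This argument, which is the technical core of \cite{BKOPP17, BKOPP17a}, is where the full strength of the orbit-growth hypothesis is consumed and where the double-exponential threshold turns out to be the sharp one; the remaining steps are routine given the cited results on pseudo-Siggers operations, model-complete cores, and primitive positive constructibility.
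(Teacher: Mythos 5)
The paper itself does not prove this theorem: it is quoted verbatim from~\cite{BKOPP17} (``We can therefore apply the following result from~\cite{BKOPP17}''), so there is no in-paper proof to compare against. Your proposal acknowledges this and offers a reconstruction of the cited argument.

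Your high-level outline is in line with how~\cite{BKOPP17, BKOPP17a} actually proceed: (i) reduce ``satisfies a non-trivial height~1 identity modulo outer unary functions'' to ``has a pseudo-Siggers polymorphism'' via the pseudo-analogue of Ol\v{s}\'ak's theorem; (ii) invoke Barto--Pinsker to translate the pseudo-Siggers hypothesis into the non-existence of a clone homomorphism from any stabilizer to $\Proj$; (iii) use Theorem~\ref{thm:bop18} to turn the uniformly continuous minion homomorphism into a pp-construction of $\rel K_3$, hence a reflection $\Pol(\bB)\to\Pol(\rel K_3)\to\Proj$ that is a minion homomorphism but a priori not a clone homomorphism; (iv) show that, under the orbit-growth bound, fixing finitely many constants makes this a clone homomorphism, producing the desired contradiction. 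Those are indeed the right moving parts, and steps~(i)--(iii) are routine given the results you cite.

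However, step~(iv) --- the ``localisation'' --- is precisely the entire technical substance of the theorem, and your proposal does not attempt it beyond gesturing at ``a Ramsey-type pigeonhole over the finitely many orbits of $k$-tuples.'' That phrase does not capture why the threshold is double exponential nor how canonicity of the relevant polymorphism patterns is extracted; the actual argument in~\cite{BKOPP17} is a delicate counting/canonisation argument, and one cannot see from your sketch that it goes through or that the bound you invoke is the right one. So as written, your proposal is an accurate road map and correct attribution, but not a proof: the load-bearing step is left as a black box, just as it is in the paper itself.
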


\begin{lemma}\label{lem:localh1}
  There is no uniformly continuous minion homomorphism from $\Pol(\sS,\neq)$ to $\Proj$.
\end{lemma}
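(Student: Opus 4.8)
The plan is to derive a contradiction from Theorem~\ref{thm:equations}, using the fact that $\sS$ was built to have controlled orbit growth. So I would assume for contradiction that there is a uniformly continuous minion homomorphism $\Pol(\sS,\neq)\to\Proj$, and then verify that $\Pol(\sS,\neq)$ satisfies the hypotheses of Theorem~\ref{thm:equations}. For the first hypothesis, Lemma~\ref{lem:core} says that $(\sS,\neq)$ is an $\omega$-categorical model-complete core, so $\Pol(\sS,\neq)$ is indeed the polymorphism clone of such a structure. For the second hypothesis, Lemma~\ref{lem:psiggers} provides a pseudo-Siggers polymorphism of $(\sS,\neq)$, and a pseudo-Siggers operation is precisely a function witnessing the Siggers identity — a non-trivial height~1 identity — modulo outer unary functions (the two endomorphisms $e_1,e_2$ in the definition). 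Hence Theorem~\ref{thm:equations} applies and forces $\Pol(\sS,\neq)$ to have at least double exponential orbit growth.

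To close the argument I would note that the orbit growth of the clone $\Pol(\sS,\neq)$ is just the orbit growth of its automorphism group $\Aut(\sS,\neq)$, and that $\Aut(\sS,\neq)=\Aut(\sS)$ since adjoining $\neq$ does not change automorphisms (as already observed in the proof of Lemma~\ref{lem:core}); thus $(\sS,\neq)$ and $\sS$ have the same orbit growth. On the other hand, in Construction~\ref{constr:s} the function $\alpha$ is chosen according to Lemma~\ref{lem:orbitgrowth} precisely so that $\sS$ has less than double exponential orbit growth. This contradicts the conclusion of the previous paragraph, so no uniformly continuous minion homomorphism from $\Pol(\sS,\neq)$ to $\Proj$ can exist.

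There is essentially no combinatorial obstacle here, since all the real work has been front-loaded into Lemmas~\ref{lem:core}, \ref{lem:orbitgrowth}, and \ref{lem:psiggers} and into the imported Theorem~\ref{thm:equations}; the proof is just the assembly of these pieces. The only points that need a little care are bookkeeping: confirming that a \emph{pseudo-Siggers} polymorphism is literally an instance of ``a non-trivial height~1 identity modulo outer unary functions'' in the precise sense required by Theorem~\ref{thm:equations}, and making sure the various notions of orbit growth in play — of $\sS$, of $(\sS,\neq)$, and of the clone $\Pol(\sS,\neq)$ — are identified correctly, so that the subexponential bound guaranteed by Lemma~\ref{lem:orbitgrowth} genuinely clashes with the double exponential lower bound coming from Theorem~\ref{thm:equations}.
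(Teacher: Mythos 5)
Your proof is correct and follows exactly the same route as the paper: invoke Lemma~\ref{lem:core} and Lemma~\ref{lem:orbitgrowth} to establish that $(\sS,\neq)$ is an $\omega$-categorical model-complete core with sub-double-exponential orbit growth, invoke Lemma~\ref{lem:psiggers} for the pseudo-Siggers polymorphism (a non-trivial height~1 identity satisfied modulo outer unary functions), and conclude via Theorem~\ref{thm:equations}. The extra bookkeeping you flag (that $\Aut(\sS,\neq)=\Aut(\sS)$ so orbit growth is preserved, and that pseudo-Siggers is literally an instance of the hypothesis of Theorem~\ref{thm:equations}) is accurate and matches the paper's implicit reasoning.
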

\begin{proof}
By Lemma~\ref{lem:psiggers}, $\Pol(\sS,\neq)$ contains a~pseudo-Siggers operation, and by Lemmas~\ref{lem:core} and~\ref{lem:orbitgrowth}, $(\sS,\neq)$ is an $\omega$-categorical model-complete core with less than double exponential orbit growth. Hence, Theorem~\ref{thm:equations} implies that $\Pol(\sS,\neq)$ has no uniformly continuous minion homomorphism to $\Proj$.
\end{proof}

\begin{proof}[Proof of Theorem~\ref{thm:main-3}]
The structure $(\sS,\neq)$ of Construction~\ref{constr:s} is an $\omega$-categorical model-complete core without algebraicity and with less than double exponential orbit growth by Lemmas~\ref{lem:core} and~\ref{lem:orbitgrowth}. Moreover, $\Pol(\sS,\neq)$ has a~minion homomorphism to $\Proj$ by Lemma~\ref{lem:noh1}, but no uniformly continuous such map by Lemma~\ref{lem:localh1}.
\end{proof}

  \bibliographystyle{alpha}

\begin{thebibliography}{BKO{\etalchar{+}}17}

\bibitem[ABSS97]{ABSS97}
Sanjeev Arora, L\'{a}szl\'{o} Babai, Jacques Stern, and Z~Sweedyk.
\newblock The hardness of approximate optima in lattices, codes, and systems of
  linear equations.
\newblock {\em J. Comput. Syst. Sci.}, 54(2):317--331, April 1997.

\bibitem[BG08]{BG08}
Manuel Bodirsky and Martin Grohe.
\newblock Non-dichotomies in constraint satisfaction complexity.
\newblock In {\em International Colloquium on Automata, Languages, and
  Programming}, pages 184--196. Springer, 2008.

\bibitem[BJ17]{BJ17}
Manuel Bodirsky and Peter Jonsson.
\newblock A model-theoretic view on qualitative constraint reasoning.
\newblock {\em Journal of Artificial Intelligence Research}, 58:339--385, 2017.

\bibitem[BJK05]{BJK05}
Andrei Bulatov, Peter Jeavons, and Andrei Krokhin.
\newblock Classifying the complexity of constraints using finite algebras.
\newblock {\em SIAM Journal on Computing}, 34(3):720--742, 2005.

\bibitem[BJP17]{Phylo-Complexity}
Manuel Bodirsky, Peter Jonsson, and Trung~Van Pham.
\newblock {The Complexity of Phylogeny Constraint Satisfaction Problems}.
\newblock {\em ACM Transactions on Computational Logic (TOCL)}, 18(3), 2017.
\newblock An extended abstract appeared in the conference STACS 2016.

\bibitem[BK09]{tcsps-journal}
Manuel Bodirsky and Jan K\'ara.
\newblock The complexity of temporal constraint satisfaction problems.
\newblock {\em Journal of the ACM}, 57(2):1--41, 2009.
\newblock An extended abstract appeared in the Proceedings of the Symposium on
  Theory of Computing (STOC).

\bibitem[BK14]{BK14}
Libor Barto and Marcin Kozik.
\newblock Constraint satisfaction problems solvable by local consistency
  methods.
\newblock {\em Journal of the ACM}, 61(1):3:1--3:19, 2014.

\bibitem[BKO{\etalchar{+}}17]{BKOPP17a}
Libor Barto, Michael Kompatscher, Miroslav Ol\v{s}\'{a}k, Trung~Van Pham, and
  Michael Pinsker.
\newblock The equivalence of two dichotomy conjectures for infinite domain
  constraint satisfaction problems.
\newblock In {\em Proceedings of the 32nd Annual {ACM/IEEE} Symposium on Logic
  in Computer Science -- LICS'17}, 2017.

\bibitem[BKO{\etalchar{+}}19]{BKOPP17}
Libor Barto, Michael Kompatscher, Miroslav Ol\v{s}\'{a}k, Trung~Van Pham, and
  Michael Pinsker.
\newblock Equations in oligomorphic clones and the constraint satisfaction
  problem for $\omega$-categorical structures.
\newblock {\em Journal of Mathematical Logic}.
\newblock To appear. Preprint arXiv:1612.07551.


\bibitem[BKO19]{BKO18}
Jakub Bul{\'i}n, Andrei Krokhin, and Jakub Opr{\v{s}}al.
\newblock Algebraic approach to promise constraint satisfaction.
\newblock Preprint arXiv:1811.00970v2, a~conference version is to appear in the
  Proceedings of STOC 2019, April 2019.

\bibitem[BKW17]{BKW17}
Libor Barto, Andrei Krokhin, and Ross Willard.
\newblock Polymorphisms, and how to use them.
\newblock In Andrei Krokhin and Stanislav {\v Z}ivn{\'y}, editors, {\em The
  Constraint Satisfaction Problem: Complexity and Approximability}, volume~7 of
  {\em Dagstuhl Follow-Ups}, pages 1--44. Schloss Dagstuhl--Leibniz-Zentrum
  fuer Informatik, Dagstuhl, Germany, 2017.

\bibitem[BMM18]{MMSNP}
Manuel Bodirsky, Florent Madelaine, and Antoine Mottet.
\newblock A universal-algebraic proof of the complexity dichotomy for
  {M}onotone {M}onadic {SNP}.
\newblock In {\em Proceedings of the Symposium on Logic in Computer Science --
  LICS'18}, 2018.
\newblock Preprint available under ArXiv:1802.03255.

\bibitem[BMPP16]{BMPP16}
Manuel Bodirsky, Barnaby Martin, Michael Pinsker, and Andr{\'{a}}s
  Pongr{\'{a}}cz.
\newblock Constraint satisfaction problems for reducts of homogeneous graphs.
\newblock In {\em 43rd International Colloquium on Automata, Languages, and
  Programming, {ICALP} 2016, July 11-15, 2016, Rome, Italy}, pages
  119:1--119:14, 2016.

\bibitem[Bod07]{cores-journal}
Manuel Bodirsky.
\newblock Cores of countably categorical structures.
\newblock {\em Logical Methods in Computer Science}, 3(1):1--16, 2007.

\bibitem[BOP18]{BOP18}
Libor Barto, Jakub Opr{\v{s}}al, and Michael Pinsker.
\newblock The wonderland of reflections.
\newblock {\em Israel Journal of Mathematics}, 223(1):363--398, 2018.

\bibitem[BP15]{BodPin-Schaefer-both}
Manuel Bodirsky and Michael Pinsker.
\newblock Schaefer's theorem for graphs.
\newblock {\em Journal of the ACM}, 62(3):52 pages (article number 19), 2015.
\newblock A conference version appeared in the Proceedings of STOC 2011, pages
  655--664.

\bibitem[BP16]{BartoPinskerDichotomy}
Libor Barto and Michael Pinsker.
\newblock The algebraic dichotomy conjecture for infinite domain constraint
  satisfaction problems.
\newblock In {\em Proceedings of the 31th Annual IEEE Symposium on Logic in
  Computer Science -- LICS'16}, pages 615--622, 2016.
\newblock Preprint arXiv:1602.04353.

\bibitem[BP18]{BP18}
Libor Barto and Michael Pinsker.
\newblock Topology is irrelevant (in the infinite domain dichotomy conjecture
  for constraint satisfaction problems).
\newblock Preprint, 2018.

\bibitem[BPP]{BPP-projective-homomorphisms}
Manuel Bodirsky, Michael Pinsker, and Andr\'{a}s Pongr\'acz.
\newblock Projective clone homomorphisms.
\newblock {\em Journal of Symbolic Logic}.
\newblock To appear. Preprint arXiv:1409.4601.

\bibitem[Bul17]{Bul17}
Andrei~A. Bulatov.
\newblock A dichotomy theorem for nonuniform {CSP}s.
\newblock In {\em 2017 IEEE 58th Annual Symposium on Foundations of Computer
  Science (FOCS)}, pages 319--330, Oct 2017.

\bibitem[Cam90]{Oligo}
Peter~J. Cameron.
\newblock {\em Oligomorphic permutation groups}.
\newblock Cambridge University Press, 1990.

\bibitem[CSS99]{CSS99}
Gregory Cherlin, Saharon Shelah, and Niandong Shi.
\newblock Universal graphs with forbidden subgraphs and algebraic closure.
\newblock {\em Advances in Applied Mathematics}, 22(4):454--491, 1999.

\bibitem[FV98]{FV98}
Tom\'{a}s Feder and Moshe~Y. Vardi.
\newblock The computational structure of monotone monadic {SNP} and constraint
  satisfaction: A study through datalog and group theory.
\newblock {\em SIAM Journal on Computing}, 28(1):57--104, 1998.

\bibitem[GP18]{uniformbirkhoff}
Mai Gehrke and Michael Pinsker.
\newblock Uniform {B}irkhoff.
\newblock {\em Journal of Pure and Applied Algebra}, 222(5):1242--1250, 2018.

\bibitem[HN16]{HN16}
Jan Hubi\v{c}ka and Jaroslav Ne\v{s}et\v{r}il.
\newblock Homomorphism and embedding universal structures for restricted
  classes.
\newblock {\em Journal of Multiple-Valued Logic and Soft Computing},
  27:229--253, 2016.

\bibitem[Hod97]{Hod97}
Wilfrid Hodges.
\newblock {\em A shorter model theory}.
\newblock Cambridge University Press, 1997.

\bibitem[J{\'o}n67]{Jon67}
Bjarni J{\'o}nsson.
\newblock Algebras whose congruence lattices are distributive.
\newblock {\em Mathematica Scandinavica}, 21:110--121, 1967.

\bibitem[KP17]{posetCSP16}
Michael Kompatscher and Trung~Van Pham.
\newblock {A Complexity Dichotomy for Poset Constraint Satisfaction}.
\newblock In {\em 34th Symposium on Theoretical Aspects of Computer Science
  (STACS 2017)}, volume~66 of {\em Leibniz International Proceedings in
  Informatics (LIPIcs)}, pages 47:1--47:12, 2017.

\bibitem[Ol{\v{s}}17]{Ols17}
Miroslav Ol{\v{s}}{\'a}k.
\newblock The weakest nontrivial idempotent equations.
\newblock {\em Bulletin of the London Mathematical Society}, 49(6):1028--1047,
  2017.

\bibitem[Sig10]{Sig10}
Mark~H. Siggers.
\newblock A strong {M}al'cev condition for locally finite varieties omitting
  the unary type.
\newblock {\em Algebra universalis}, 64(1):15--20, 2010.

\bibitem[Tay88]{Tay88}
Walter Taylor.
\newblock Some very weak identities.
\newblock {\em Algebra Universalis}, 25(1):27--35, 1988.

\bibitem[Zhu17]{Zhu17}
Dmitriy Zhuk.
\newblock A proof of {CSP} dichotomy conjecture.
\newblock In {\em 2017 IEEE 58th Annual Symposium on Foundations of Computer
  Science (FOCS)}, pages 331--342, Oct 2017.

\end{thebibliography}
\newcommand{\etalchar}[1]{$^{#1}$}

\end{document}